\theoremstyle{plain}
\newtheorem{thm}{\protect\theoremname}
\theoremstyle{plain}
\theoremstyle{definition}
\newtheorem{example}[thm]{\protect\examplename}
\theoremstyle{definition}
\newtheorem{defn}[thm]{\protect\definitionname{}}
\theoremstyle{plain}
\newtheorem{prop}[thm]{\protect\propositionname}
\theoremstyle{plain}
\newtheorem{lem}[thm]{\protect\lemmaname}
\theoremstyle{remark}
\newtheorem{rem}[thm]{\protect\remarkname}
\providecommand\definitionname{Definition}
\providecommand\corollaryname{Corollary}
\providecommand\examplename{Example}
\providecommand\lemmaname{Lemma}
\providecommand\propositionname{Proposition}
\providecommand\remarkname{Remark}
\providecommand\theoremname{Theorem}
\newcommand{\exampleend}{\hfill$\boxempty$}
\setlist[itemize]{leftmargin=0.5cm}
\setlist[enumerate]{leftmargin=0.5cm}
\setlist[description]{style=unboxed, leftmargin=0.cm}
\begin{document}

\author[1]{Damien Desfontaines}

\author[2]{Andreas Lochbihler}

\author[3]{David Basin}

\affil[1]{ETH Zurich / Google, damien@desfontain.es}

\affil[2]{Digital Asset, mail@andreas-lochbihler.de}

\affil[3]{ETH Zurich, basin@inf.ethz.ch}

  \title{\huge Cardinality Estimators do not Preserve Privacy}

  \runningtitle{Cardinality Estimators do not Preserve Privacy}

  \begin{abstract}
{Cardinality estimators like HyperLogLog are sketching algorithms that estimate
the number of distinct elements in a large multiset. Their use in
privacy-sensitive contexts raises the question of whether they leak private
information. In particular, can they provide any privacy guarantees while
preserving their strong aggregation properties?\\
We formulate an abstract notion of cardinality estimators, that captures this
aggregation requirement: one can merge sketches without losing precision. We
propose an attacker model and a corresponding privacy definition, strictly
weaker than differential privacy: we assume that the attacker has no prior
knowledge of the data. We then show that if a cardinality estimator satisfies
this definition, then it cannot have a reasonable level of accuracy. We prove
similar results for weaker versions of our definition, and analyze the privacy
of existing algorithms, showing that their average privacy loss is significant,
even for multisets with large cardinalities. We conclude that strong aggregation
requirements are incompatible with any reasonable definition of privacy, and
that cardinality estimators should be considered as sensitive as raw data. We
also propose risk mitigation strategies for their real-world applications.}
  \end{abstract}

  \keywords{differential privacy, cardinality estimators, hyperloglog}

  \journalname{Proceedings on Privacy Enhancing Technologies}
  \DOI{Editor to enter DOI}
  \startpage{1}
  \received{..}
  \revised{..}
  \accepted{..}

  \journalyear{..}
  \journalvolume{..}
  \journalissue{..}

\maketitle

\section{Introduction}

Many data analysis applications must count the number of distinct elements in a
large stream with repetitions. These applications include network
monitoring~\cite{estan2003bitmap}, online analytical
processing~\cite{padmanabhan2003multi,shukla1996storage}, query processing, and
database management~\cite{whang1990linear}. A variety of algorithms, which we
call cardinality estimators, have been developed to do this efficiently, with a
small memory footprint. These algorithms include
PCSA~\cite{flajolet1985probabilistic}, LogLog~\cite{durand2003loglog}, and
HyperLogLog~\cite{flajolet2008hyperloglog}. They can all be parallelized and
implemented using frameworks like MapReduce~\cite{dean2008mapreduce}. Indeed,
their internal memory state, called a \emph{sketch}, can be saved, and sketches
from different data shards can be aggregated without information loss.

Using cardinality estimators, data owners can compute and store sketches over
fine-grained data ranges, for example, daily. Data analysts or tools can then
\emph{merge} (or \emph{aggregate}) existing sketches, which enables the
subsequent estimation of the total number of distinct elements over arbitrary
time ranges. This can be done without re-computing the entire sketch, or even
accessing the original data.

Among cardinality estimators, HyperLogLog~\cite{flajolet2008hyperloglog} and its
variant HyperLogLog++~\cite{heule2013hyperloglog} are widely used in practical
data processing and analysis tasks. Implementations exist for many widely used
frameworks, including Apache Spark~\cite{sparkhll}, Google
BigQuery~\cite{bigqueryhll}, Microsoft SQL Server~\cite{tsqlhll}, and
PostgreSQL~\cite{postgresqlhll}. The data these programs process is
often sensitive. For example, they might estimate the number of distinct IP
addresses that connect to a server~\cite{tschorsch2013algorithm}, the number of
distinct users that perform a particular action~\cite{ashok2014scalable}, or the
number of distinct devices that appeared in a physical
location~\cite{monreale2013privacy}. To illustrate this point, we present an
example of a location-based service, which we will use throughout the paper.

\begin{example}\label{exa:running-example}
A location-based service gathers data about the places visited by the service's
users. For each place and day, the service stores a sketch counting the
identifiers of the users who visited the place that day. This allows the
service's owners to compute useful statistics. For example, a data analyst can
merge the sketches corresponding to the restaurants in a neighborhood over a
month, and estimate how many distinct individuals visited a given restaurant
during that month. The cost of such an analysis is proportional to the number of
aggregated sketches. If the queries used raw data instead, then each query would
require a pass over the entire dataset, which would be much more costly.

Note that the type of analysis to be carried out by the data analyst may not be
known in advance. The data analysts should be able to aggregate arbitrarily many
sketches, across arbitrary dimensions such as time or space. The relative
precision of cardinality estimation should not degrade as sketches are
aggregated.
\exampleend{}
\end{example}

Fine-grained location data is inherently sensitive: it is extreme\-ly
re-identifiable~\cite{golle2009anonymity}, and knowing an individual's location
can reveal private information. For example, it can reveal medical conditions
(from visits to specialized clinics), financial information (from frequent
visits to short-term loan shops), relationships (from regular co-presence),
sexual orientation (from visits to LGBT community spaces), etc. Thus, knowing
where a person has been reveals sensitive information about them. 

As the above example suggests, an organization storing and processing location
data should implement risk mitigation techniques, such as encryption, access
controls, and access audits. The question arises: how should the sketches be
protected? Could they be considered sufficiently aggregated to warrant weaker
security requirements than the raw data?

To answer this question, we model a setting where a data owner stores sketches
for cardinality estimation in a database. The attacker can access some of the
stored sketches and any user statistics published, but not the raw data itself.
This attacker model captures the \emph{insider risk} associated with personal
data collections where insiders of the service provider could gain direct access
to the sketch database. In this paper, we use this insider risk scenario as the
default attacker model when we refer to the attacker. In the discussion, we also
consider a weaker attacker model, modeling an \emph{external attacker} that
accesses sketches via an analytics service provided by the data owner. In both
cases, we assume that the attacker knows the cardinality estimator's internals.

The attacker's goal is to infer whether some user is in the raw data used to
compute one of the accessible sketches. That is, she picks a \emph{target user},
she chooses a sketch built from a stream of users, and she must guess whether
her target is in this stream. The attacker has some \emph{prior knowledge} of
whether her target is in the stream, and examining the sketch gives her a
\emph{posterior knowledge}. The increase from prior to posterior knowledge
determines her knowledge gain.

Consider Example~\ref{exa:running-example}. The attacker could be an employee
trying to determine whether her partner visited a certain restaurant on a given
day, or saw a medical practitioner. The attacker might initially have some
suspicion about this (the prior knowledge), and looking at the sketches might
increase this suspicion. A small, bounded difference of this suspicion might be
deemed to be acceptable, but we do not want the attacker to be able to increase
her knowledge too much.

We show that for all cardinality estimators that satisfy our aggregation
requirement, sketches are almost as sensitive as raw data. Indeed, in this
attacker model, the attacker can gain significant knowledge about the target by
looking at the sketch. Our results are a natural consequence of the aggregation
properties: to aggregate sketches without counting the same user twice, they
must contain information about which users were previously added. The attacker
can use this information, even if she does not know any other users in the
sketch. Furthermore, adding noise either violates the aggregation property, or
has no influence on the success of the attack. Thus, it is pointless to try and
design privacy-preserving cardinality estimators: privacy and accurate
aggregation are fundamentally incompatible.

To show the applicability of our analysis to real-world cardinality estimators,
we quantify the privacy of HyperLogLog, the most widely used cardinality
estimator algorithm. We show that for common parameters, the privacy loss is
significant for most users. Consider, for example, a user $A$ among the 500
users associated with the largest privacy loss, in a sketch that contains 1000
distinct users. An attacker with an initial suspicion of 1\% that $A$ is in the
sketch can raise her level of certainty to over 31\% after observing the sketch.
If her initial estimate is 10\%, it will end up more than 83\%. If her prior is
50\%, then her posterior is as high as 98\%.

Our main contributions are:
\begin{enumerate}
\item We formally define a class of algorithms, which we call \emph{cardinality
estimators}, that count the number of distinct elements in a stream with
repetitions, and can be aggregated arbitrarily many times
(Section~\ref{subsec:def-cardinality-estimators}).

\item We give a definition of privacy that is well-suited to how cardinality
estimators are used in practice (Section~\ref{subsec:privacy-model}).

\item We prove that a cardinality estimator cannot satisfy this privacy property
while maintaining good accuracy asymptotically (Section~\ref{sec:main-result}).
We show similar results for weaker versions of our initial definition
(Section~\ref{sec:weaker-versions}).

\item We highlight the consequences of our results on practical applications of
cardinality estimators (Section~\ref{sec:individual-users}) and we propose risk
mitigation techniques (Section~\ref{sec:discussion}).
\end{enumerate}

\section{Previous work}

Prior work on privacy-preserving cardinality estimators has been primarily
focused on \emph{distributed} user counting, for example to compute user
statistics for anonymity networks like Tor. Each party is usually assumed to
hold a set $X_{i}$ of data, or a sketch built from $X_{i}$, and the goal is to
compute the cardinality of $\bigcup_{i}X_{i}$, without allowing the party $i$ to
get too much information on the sets $X_{j}$ with $j\neq i$. The attacker model
presumes honest-but-curious adversaries.

Tschorsch and Scheuermann~\cite{tschorsch2013algorithm} proposed a noise-adding
mechanism for use in such a distributed context. In~\cite{melis2015efficient},
each party encrypts their sketch, and sends it encrypted to a tally process,
which aggregates them using homomorphic encryption. Ashok et
al.~\cite{ashok2014scalable} propose a multiparty computation protocol based on
Bloom filters to estimate cardinality without the need for homomorphic
encryption, while Egert et al.~\cite{egert2015privately} show that Ashok et
al.'s approach is vulnerable to attacks and propose a more secure variant of the
protocol.

Our attacker model, based on insider risk, is fundamentally different to
previously considered models: the same party is assumed to have access to a
\emph{large number} of sketches; they must be able to aggregate them and get
good estimates.

Our privacy definition for cardinality estimators is inspired from differential
privacy, first proposed by Dwork et al.~\cite{dwork2006calibrating}.
Data-generating probability distributions were considered as a source of
uncertainty
in~\cite{rastogi2009relationship,duan2009privacy,zhou2009differential,bhaskar2011noiseless,kifer2012rigorous,bassily2013coupled,grining2017towards};
and some deterministic algorithms, which do not add noise, have been shown to
preserve privacy under this
assumption~\cite{bhaskar2011noiseless,bassily2013coupled,grining2017towards}.
We explain in Section~\ref{subsec:privacy-model} why we need a custom privacy
definition for our setup and how it relates to differential
privacy~\cite{dwork2008differential} and Pufferfish
privacy~\cite{kifer2012rigorous}.

Our setting has some superficial similarities to the local differential privacy
model, where the data aggregator is assumed to be the attacker. This model is
often used to develop privacy-preserving systems to gather
statistics~\cite{erlingsson2014rappor,fanti2016building,bassily2015local,bassily2017practical}.
However, the setting and constraints of our work differ fundamentally from these
protocols. In local differential privacy, each individual sends data to the
server only once, and there is no need for deduplication or intermediate data
storage. In our work, the need for intermediate sketches and unique counting
leads to the impossibility result.

\section{Cardinality estimators\label{subsec:def-cardinality-estimators}}

In this section, we formally define cardinality estimators, and prove some of
their basic properties.

Cardinality estimators estimate the number of distinct elements in a stream. The
internal state of a cardinality estimator is called a \emph{sketch}. Given a
sketch, one can estimate the number of distinct elements that have been added to
it (the \emph{cardinality}).

Cardinality estimator sketches can also be aggregated: two sketches can be
\emph{merged} to produce another sketch, from which we can estimate the total
number of distinct elements in the given sketches. This aggregation property
makes sketch computation and aggregation embarrassingly parallel. The order and
the number of aggregation steps do not change the final result, so cardinality
estimation can be parallelized using frameworks like MapReduce.

We now formalize the concept of a cardinality estimator. The elements of the
multiset are assumed to belong to a large but finite set $\mathcal{U}$ (the
\emph{universe}).

\begin{defn}\label{defn:deterministic}
A \emph{deterministic cardinality estimator} is a tuple $\left\langle
M_{\varnothing},\mathit{add},\mathit{estimate}\right\rangle $, where
\begin{itemize}
\item $M_{\varnothing}$ is the \emph{empty sketch};
\item $\mathit{add}\left(M,e\right)$ is the deterministic operation that
adds the element $e$ to the sketch $M$ and returns an updated sketch;
\item $\mathit{estimate}\left(M\right)$ estimates the number of distinct
elements that have been added to the sketch.
\end{itemize}
Furthermore, the $\mathit{add}$ operation must satisfy the following axioms for all
sketches $M$ and elements $e, e_1, e_2 \in \mathcal{U}$:
\begin{gather}
  \tag{idempotent}
  \hskip-1.6cm
  \mathit{add}(\mathit{add}(M,e),e)=\mathit{add}(M,e)
  \\
  \tag{commutative}
  \hskip-0.5cm
  \mathit{add}(\mathit{add}(M,e_{1}),e_{2})\!=\!\mathit{add}(\mathit{add}(M,e_{2}),e_{1})
  \hskip-0.2cm
\end{gather}
\end{defn}

These axioms state that $\mathit{add}$ ignores duplicates and that the order in
which elements are added is immaterial. Ignoring duplicates is a natural
requirement for cardinality estimators. Ignoring order is required for this
operation to be used in frameworks like MapReduce, or open-source equivalents
like Hadoop or Apache Beam. Since handling large-scale datasets typically
requires using such frameworks, we consider commutativity to be a hard
requirement for cardinality estimators.

We denote by $M_{e_{1},\dots,e_{n}}$ the sketch obtained by adding
$e_{1},\dots,e_{n}$ successively to $M_{\varnothing}$, and we denote by
$\mathcal{M}$ the set of all sketches that can be obtained inductively from
$M_{\varnothing}$ by adding elements from any subset of $\mathcal{U}$ in any
order. Note that $\mathcal{M}$ is finite and of cardinality at most
$2^{\left|\mathcal{U}\right|}$. Order and multiplicity do not influence
sketches: we denote by $M_{E}$ the sketch obtained by adding all elements of a
set $E$ (in any order) to $M_{\varnothing}$.

Later, in Example~\ref{exa:cardinality-estimators}, we give several examples of
deterministic cardinality estimators, all of which satisfy
Definition~\ref{defn:deterministic}.

\begin{lem}\label{ce-property}
  $\mathit{add}\left(M_{e_{1},\dots,e_{n}},e_{i}\right)=M_{e_{1},\dots,e_{n}}$ for all $i\leq n$.
\end{lem}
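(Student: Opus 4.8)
The plan is to prove that adding an already-present element $e_i$ to the sketch $M_{e_1,\dots,e_n}$ leaves it unchanged. The key observation is that the two axioms — idempotence and commutativity — together let me reorder the additions so that the element $e_i$ I want to re-add becomes the \emph{last} element added, at which point idempotence applies directly.

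Let me sketch the proof.

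\begin{proof}
Fix $i \leq n$. By repeated application of the commutativity axiom, the order in which elements are added does not affect the resulting sketch, so I may move the $i$-th addition to the end:
\[
  M_{e_1,\dots,e_n} = M_{e_1,\dots,e_{i-1},e_{i+1},\dots,e_n,e_i}
  = \mathit{add}\bigl(M_{e_1,\dots,e_{i-1},e_{i+1},\dots,e_n},\,e_i\bigr).
\]
Now I apply the $\mathit{add}$ operation once more with the same element $e_i$ and use idempotence:
\[
  \mathit{add}\bigl(M_{e_1,\dots,e_n},\,e_i\bigr)
  = \mathit{add}\Bigl(\mathit{add}\bigl(M_{e_1,\dots,e_{i-1},e_{i+1},\dots,e_n},\,e_i\bigr),\,e_i\Bigr).
\]
By the idempotence axiom, the inner double application collapses:
\[
  \mathit{add}\Bigl(\mathit{add}(M',e_i),e_i\Bigr) = \mathit{add}(M',e_i),
\]
where $M' = M_{e_1,\dots,e_{i-1},e_{i+1},\dots,e_n}$. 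Combining these equalities gives
\[
  \mathit{add}\bigl(M_{e_1,\dots,e_n},\,e_i\bigr)
  = \mathit{add}(M',e_i)
  = M_{e_1,\dots,e_n},
\]
as required.
\end{proof}

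The main thing to be careful about is making the informal statement ``order and multiplicity do not influence sketches'' rigorous enough to justify the first displayed equality. The commutativity axiom as stated only swaps two \emph{adjacent} additions, so strictly speaking moving $e_i$ from position $i$ to position $n$ requires a sequence of adjacent transpositions; I would either invoke the already-noted fact (from the paragraph preceding the lemma) that $M_E$ is well defined independently of order, or spell out that any permutation is a composition of adjacent transpositions and apply commutativity once per transposition. This bookkeeping is the only genuine obstacle — once $e_i$ is the final element added, idempotence finishes the argument immediately. If the paper's convention already treats $M_E$ as order-independent, the first step is essentially free and the whole proof reduces to a single application of idempotence.
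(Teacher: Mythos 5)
Your proof is correct and matches the paper's argument, which simply states that the lemma ``follows directly'' from idempotence and commutativity; you have filled in exactly the intended details (reorder via adjacent transpositions so $e_i$ is added last, then apply idempotence). No issues.
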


\begin{proof}
This follows directly from Properties 1 and 2.
\end{proof}

In practice, cardinality estimators also have a $\mathit{merge}$ operation. We
do not explicitly require the existence of this operation, since
$\mathit{add}$'s idempotence and commutativity ensure its existence as follows.

\begin{defn}\label{defn:merge}
To merge two sketches $M$ and $M^{\prime}$, choose some $E = \left\{ e_{1},\dots,e_{n}\right\} \subseteq \mathcal{U}$ such that $M^{\prime}=M_{E}$.
We define $\mathit{merge}(M,M^{\prime})$ to be the sketch obtained after adding all elements of $E$ successively to $M$.
\end{defn}

Note that this construction is not computationally tractable, even though in
practical scenarios, the $\mathit{merge}$ operation must be fast. This
efficiency requirement is not necessary for any of our results, so we do not
explicitly require it either.

\begin{lem}\label{lem:merge-properties}
The $\mathit{merge}$ operation in Definition~\ref{defn:merge} is well-defined,
i.e., it does not depend on the choice of $E$. Furthermore, the merge operation
is a commutative, idempotent monoid on $\mathcal{M}$ with $M_{\varnothing}$ as
neutral element.
\end{lem}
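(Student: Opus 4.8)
The plan is to reduce the entire statement to a single structural identity: for all finite sets $F, E \subseteq \mathcal{U}$, $\mathit{merge}(M_F, M_E) = M_{F \cup E}$. Once this identity is available, both well-definedness and the monoid laws follow from the corresponding properties of set union, so the real work lies in establishing the identity and, with it, independence from the chosen representing set.

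First I would introduce a convenient auxiliary operation: for a sketch $M$ and a finite set $E \subseteq \mathcal{U}$, write $M \triangleright E$ for the sketch obtained by adding all elements of $E$ to $M$ one at a time. I would check that $M \triangleright E$ is independent of the order of addition: any two orderings differ by a sequence of adjacent transpositions, and each such transposition leaves the result unchanged by the commutativity axiom applied to the relevant intermediate sketch, while idempotence (equivalently Lemma~\ref{ce-property}) makes repetitions irrelevant. Consequently, adding the elements of $F$ and then those of $E$ to $M_\varnothing$ amounts to adding a sequence whose underlying set is $F \cup E$, so by the order- and multiplicity-independence already noted for $M_\varnothing$ we obtain the key identity $M_F \triangleright E = M_{F \cup E}$, and symmetrically $M_E \triangleright F = M_{E \cup F} = M_{F \cup E}$.

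The crux is that $\mathit{merge}$ does not depend on the chosen representative $E$, and here the decisive point is that $\mathit{add}$ is deterministic, hence so is the map $M \mapsto M \triangleright F$ for any fixed $F$. Suppose the sketch $M'$ to be merged into $M$ is written both as $M_E$ and as $M_{E'}$, so $M_E = M_{E'}$ as sketches. Applying the same deterministic operation $\triangleright F$ to these equal sketches yields equal results, i.e. $M_E \triangleright F = M_{E'} \triangleright F$, which by the previous paragraph reads $M_{E \cup F} = M_{E' \cup F}$. Now fix any $F$ with $M = M_F$ (such an $F$ exists since $M \in \mathcal{M}$). By Definition~\ref{defn:merge} and the identity above, merging via $E$ gives $M \triangleright E = M_F \triangleright E = M_{F \cup E}$, while merging via $E'$ gives $M_{F \cup E'}$; these coincide by the displayed equality. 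Thus the outcome is independent of the representative, the merge is well-defined, and it satisfies $\mathit{merge}(M_F, M_E) = M_{F \cup E}$. I expect this determinism step to be the only genuinely non-routine part; the rest is bookkeeping.

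Finally, the monoid structure drops out of the identity. Closure holds because $M_{F \cup E} \in \mathcal{M}$. Commutativity, idempotence, and associativity follow by rewriting each side through the identity and invoking $F \cup E = E \cup F$, $F \cup F = F$, and $(F \cup E) \cup G = F \cup (E \cup G)$ respectively. The empty sketch is neutral on both sides, since $M_\varnothing = M_E$ for $E = \varnothing$ gives $\mathit{merge}(M_F, M_\varnothing) = M_{F \cup \varnothing} = M_F$ and $\mathit{merge}(M_\varnothing, M_F) = M_{\varnothing \cup F} = M_F$. This establishes that $\mathit{merge}$ is a commutative, idempotent monoid on $\mathcal{M}$ with neutral element $M_\varnothing$.
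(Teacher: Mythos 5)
Your proof is correct and follows essentially the same route as the paper's: both reduce everything to the identity $\mathit{merge}(M_F,M_E)=M_{F\cup E}$ via order- and multiplicity-independence of repeated $\mathit{add}$, use determinism of $\mathit{add}$ to get independence from the representative, and then read off the monoid laws from the corresponding properties of set union. Your version merely makes the determinism step more explicit than the paper does.
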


The proof is given in Appendix~\ref{lem:merge-properties-proof}. Note that these
properties of the merge operation are important for cardinality estimators: when
aggregating different sketches, we must ensure that the result is the same no
matter in which order the sketches are aggregated.

Existing cardinality estimators also satisfy efficiency requirements: they have
a low memory footprint, and $\mathit{add}$ and $\mathit{merge}$ run in constant
time. These additional properties are not needed for our results, so we omit
them in our definition.

We now define \emph{precise} cardinality estimators.

\begin{defn}
Let $E$ be a set of cardinality $n$ taken uniformly at random in
$\mathcal{U}$. The quality of a cardinality estimation algorithm
is given by two metrics:

\begin{enumerate}
\item Its \emph{bias} $\mathbb{E}\left[n-\mathit{estimate}\left(M_{E}\right)\right]$
\item Its \emph{variance}
\[
  \mathbb{V}_{n}=\mathbb{E}\left[{\left(\mathbb{E}\left[\mathit{estimate}\left(M_E\right)\right]-\mathit{estimate}\left(M_{E}\right)\right)}^{2}\right].
\]
\end{enumerate}

Cardinality estimators used in practice are asymptotically unbiased:
$\frac{1}{n}\cdot\mathbb{E}\left[n-\mathit{estimate}\left(M_{E}\right)\right]=o\left(1\right)$.
In the rest of this work, we assume that all cardinality estimators we consider
are perfectly unbiased, so
  $\mathbb{V}_{n}=\mathbb{E}\left[{\left(n-\mathit{estimate}\left(M_{E}\right)\right)}^{2}\right]$.
Cardinality estimators are often compared by their relative standard error
(RSE), which is given by
  $\sqrt{\mathbb{E}\left[{\left(\frac{n-\mathit{estimate}\left(M_{E}\right)}{n}\right)}^{2}\right]}=\frac{\sqrt{\mathbb{V}_{n}}}{n}$.

A cardinality estimator is said to be \emph{precise} if it is asymptotically
unbiased and its relative standard error is bounded by a constant. In practice,
we want the relative standard error to be less than a reasonably small constant,
for example less than 10\%.
\end{defn}

\begin{example}\label{exa:cardinality-estimators}

We give a few examples of cardinality estimators, with their \emph{memory usage}
in bits ($m=\log_2\left|\mathcal{M}\right|$) and their relative standard error.
As a first step, they all apply a hash function to the user identifiers.
Conceptually, this step assigns to all users probabilistically a random
bitstring of length 32 or 64 bits. Since hashing is deterministic, all
occurrences of a user are mapped to the same bitstring.

\begin{itemize}
\item K-Minimum Values~\cite{bar2002counting}, with parameter $k$, maintains a list of the $k$ smallest
hashes that have been added to the sketch. With $32$-bit hashes, it has a memory
usage of $m=32\cdot k$, and its unbiased estimator has a
RSE of approximately
$\frac{1}{\sqrt{k}}\simeq\frac{5.66}{\sqrt{m}}$~\cite{beyer2007synopses}.
\item Probabilistic Counting with Stochastic Averaging (PCSA, also known as
FM-sketches)~\cite{flajolet1985probabilistic} maintains a list of $k$ \emph{bit arrays}. When an element is added
to an FM-sketch, its hash is split into two parts. The first part determines
which bit array is modified. The second part determines which bit is flipped to
$1$, depending on the number of consecutive zeroes at the beginning. With $k$
registers and $32$-bit hashes, its memory usage is $m=32\cdot k$, and its
RSE is approximately
$\frac{0.78}{\sqrt{k}}\simeq\frac{4.41}{\sqrt{m}}$.
\item LogLog~\cite{durand2003loglog} maintains a tuple of $k$ registers. Like PCSA, it uses the first
part of the hash to pick which register to modify. Then, each registers stores
the \emph{maximum} number of consecutive zeroes observed so far. Using $k\geq64$
registers and a $32$-bit hash, its memory usage is $m=5\cdot k$ bits, and its
standard error is approximately
$\frac{1.30}{\sqrt{k}}\simeq\frac{2.91}{\sqrt{m}}$.
\item HyperLogLog~\cite{heule2013hyperloglog} has the same structure and add operation as LogLog, only its
estimate operation is different. Using $k\geq16$ registers and a $64$-bit hash,
it has a memory usage of $m=6\cdot k$ bits, and its standard error is
approximately
$\frac{1.04}{\sqrt{k}}\simeq\frac{2.55}{\sqrt{m}}$.
\item Bloom filters can also be used for cardinality
estimation~\cite{papapetrou2010cardinality}. However, we could not find an
expression of the standard error for a given memory usage in the literature.
\end{itemize}

All these cardinality estimators have null or negligible ($\ll 1$)
bias. Thus, their variance is equal to their mean squared standard error. So the
first four are precise, whereas we do not know if Bloom filters are.
\exampleend{}
\end{example}

All examples above are deterministic cardinality estimators. For them and other
deterministic cardinality estimators, bias and variance only come from the
randomness in the algorithm's inputs. We now define \emph{probabilistic}
cardinality estimators. Intuitively, these are algorithms that retain all the
useful properties of deterministic cardinality estimators, but may flip coins
during computation. We denote by $\mathfrak{M}$ the set of distributions over
$\mathcal{M}$.

\begin{defn}\label{defn:probabilistic}
A \emph{probabilistic cardinality estimator} is a tuple $\left\langle M_{\varnothing},\mathit{add},\mathit{merge},\mathit{estimate}\right\rangle $,
where
\begin{itemize}
\item $M_{\varnothing}\in\mathcal{M}$ is the \emph{empty sketch};
\item $\mathit{add}\left(M,e\right):\mathcal{M}\times\mathcal{U}\longrightarrow\mathfrak{M}$
is the probabilitistic operation that adds the element $e$ to the sketch $M$ and
returns an updated sketch;
\item $\mathit{merge\left(M_1,M_2\right)}:\mathcal{M}\times\mathcal{M}\longrightarrow\mathfrak{M}$
is the probabilistic operation that merges two sketches $M_1$ and $M_2$; and
\item $\mathit{estimate}\left(M\right):\mathcal{M}\longrightarrow\mathbb{N}$
estimates the number of unique elements that have been added to the sketch.
\end{itemize}

Both the $\mathit{add}$ and $\mathit{merge}$ operations can be extended to
\emph{distributions} of sketches. For a distribution of sketches $\mathcal{D}$
and an element $e$, $\mathit{add}\left(\mathcal{D},e\right)$ denotes the
distribution such that:
\[
 \mathbb{P}\left[\mathit{add}\left(\mathcal{D},e\right)=M_0\right]=\sum_{M}\mathbb{P}\left[\mathcal{D}\!=\!M\right]\mathbb{P}\left[\mathit{add}\left(M,e\right)\!=\!M_0\right].
\]
For two distributions of sketches $\mathcal{D}$ and $\mathcal{D}^{\prime}$,
$\mathit{merge}\left(\mathcal{D},\mathcal{D}^{\prime}\right)$ denotes the
distribution such that:
\begin{align*}
 & \mathbb{P}\left[\mathit{merge}\left(\mathcal{D},\mathcal{D}^{\prime}\right)=M_0\right]\\
 & =\sum_{M,M^\prime}\mathbb{P}\left[\mathcal{D}\!=\!M\right]\mathbb{P}\left[\mathcal{D^\prime}\!=\!M^\prime\right]\mathbb{P}\left[\mathit{merge}\left(M,M^\prime\right)\!=\!M_0\right].
\end{align*}

We want probabilistic cardinality estimators to have the same high-level
properties as deterministic cardinality estimators: idempotence, commutativity,
and the existence of a well-behaved $\mathit{merge}$ operation. In the
deterministic case, the idempotence and commutativity of the $\mathit{add}$
operation was sufficient to show the existence of a $\mathit{merge}$ operation
with the desired properties. In the probabilistic case, this no longer holds.
Instead, we require the following two properties.

\begin{itemize}
\item For a set $E\subseteq\mathcal{U}$, let $\mathcal{D}_{E}$ denote the sketch
distribution obtained when adding elements of $E$ successively into
$M_{\varnothing}$. The mapping from $E$ to $\mathcal{D}_{E}$ must be
well-defined: it must be independent of the order in which we add elements, and
possible repetitions. This requirement encompasses both idempotence and
commutativity.
\item For two subsets $E_{1}$ and $E_{2}$ of $\mathcal{U}$, we require that\\
$\mathit{merge}\left(\mathcal{D}_{E_{1}},\mathcal{D}_{E_{2}}\right)=\mathcal{D}_{E_{1}\cup E_{2}}$.
\end{itemize}
These requirements encompass the results of Lemma~\ref{lem:merge-properties}.
\end{defn}

These properties, like in the deterministic case, are very strong. They impose
that an arbitrary number of sketches can be aggregated without losing accuracy
during the aggregation process. This requirement is however realistic in many
practical contexts, where the same sketches are used for fine-grained analysis
and for large-scale cardinality estimation. If this requirement is relaxed, and
the cardinality estimator is allowed to return imprecise results when merging
sketches, our negative results do not hold.

For example, Tschorsch and Scheuermann proposed a cardinality estimation
scheme~\cite{tschorsch2013algorithm} which adds noise to sketches to make them
satisfy privacy guarantees in a distributed context. However, their algorithm is
not a probabilistic cardinality estimator according to our definition: noisy
sketches can no longer be aggregated. Indeed,~\cite{tschorsch2013algorithm}
explains that ``combining many perturbed sketches quickly drives [noise] to
exceedingly high values.'' In our setting, aggregation is crucial, so we do not
further consider their algorithm.

\section{Modeling privacy}

In this section, we describe our system and attacker model
(\ref{subsec:system:attacker}), and present our privacy
definition~(\ref{subsec:privacy-model}).

\subsection{System and attacker model\label{subsec:system:attacker}}

\tikzset{
  user/.pic={
    \draw{}
    (0, 0.35cm) circle [radius=0.1cm]
    ++(0, -0.1cm) -- ++(0, -0.3cm)
    -- +(-110:0.3cm)
    ++(0, 0) -- +(-70:0.3cm)
    ++(0, 0) ++(-0.15cm, 0.2cm) -- ++(0.3cm, 0);
  }
}

\begin{figure}
  \begin{tikzpicture}[node distance=0.2cm and 0.5cm, align=center, font=\small, scale=0.7]
    \node (userdata) {
      user
      \\
      data
    };

    \node (rawdb) [above right=of userdata, draw, rounded corners] {
      raw user
      \\
      database
    };
    
    \node (sketching) [below right=of userdata, draw, rectangle] {
      sketching
      \\
      algorithm
    };

    \node (sketchdb) [right=of sketching, draw, rounded corners] {
      sketch
      \\
      database
    };

    \node (aggregation) at ($ (sketchdb)!0.5!(rawdb) $) [draw, rectangle] {
      aggregation%
    };

    \node (estimator) at ([xshift=0.3cm] sketchdb.west |- rawdb) [anchor=west, draw, rectangle] {
      estimation%
    };

    \node (statistics) [right=of estimator, draw, ellipse, inner sep=0.15em] {
      published
      \\
      statistics
    };

    \node (analytics) at (statistics |- sketchdb) [draw, ellipse, inner sep=0.15em] {
      analytics
      \\
      service
    };

    \coordinate (user1) at ([xshift=-0.8cm, yshift=1cm] userdata.west);
    \coordinate (user2) at ([xshift=-1cm, yshift=0cm] userdata.west);
    \coordinate (user3) at ([xshift=-0.8cm, yshift=-1cm] userdata.west);
    \draw
      (user1) pic{user}
      (user2) pic{user}
      (user3) pic{user};
    \begin{scope}[dashed]
      \draw [<-] (userdata.135) -- +(135:0.8cm);
      \draw [<-] (userdata.180) -- +(180:0.8cm);
      \draw [<-] (userdata.225) -- +(225:0.8cm);      
    \end{scope}

    \draw[->] (userdata) -- (rawdb);
    \draw[->] (rawdb) -- (sketching.north -| rawdb.south);
    \draw[->] (userdata) -- (sketching);
    \draw[->] (sketching) -- (sketchdb);
    \draw[->] (sketchdb) to[out=160, in=-90] (aggregation) (aggregation) to[out=0,in=90] (sketchdb);
    \draw[->] (sketchdb.70) -- (estimator);
    \draw[->] (estimator) -- (statistics);
    \draw[->] (sketchdb) -- (analytics);

    \begin{pgfonlayer}{background}
      \node (serviceprovider) [fit={(rawdb) (sketching) (sketchdb) (estimator)}, draw, fill=black!20, densely dotted] {};
    \end{pgfonlayer}
    \node at (serviceprovider.north) [anchor=south] {service provider};

    \node (attacker) at (serviceprovider.south) [anchor=north, yshift=-1cm, rectangle, fill=black, text=white] {
      attacker%
    };
    \draw[->] [overlay] (attacker) to[out=180, in=-90, looseness=0.8] 
      node [pos=0.4, anchor=south, sloped] {controls} 
      ([yshift=-0.5cm] user3);
    \draw[->] [overlay] (attacker) to[out=10, in=-90, looseness=0.8] 
      node [pos=0.4, anchor=south, sloped] {uses}
      (analytics);
    \draw[->] (attacker) -- 
      node [pos=0.4, anchor=east, inner xsep=0pt] {internal}
      node [pos=0.4, anchor=west, inner xsep=2pt] {attack}
      (sketchdb);
    \draw[->] [overlay] (attacker) to[out=0, in=-50, looseness=1.65]
      node [pos=0.25, anchor=north, sloped] {reads}
      (statistics);
  \end{tikzpicture}
  \caption{System and attacker model}\label{figure:system:model}
\end{figure}

\noindent
Figure~\ref{figure:system:model} shows our system and attacker model. The
service provider collects sensitive data from many users over a long time span.
The raw data is stored in a database. Over shorter time periods (e.g.\ an hour,
a day, or a week), a cardinality estimator aggregates all data into a sketch.
Sketches of all time periods are stored in a sketch database. Sketches from
different times are aggregated into sketches of longer time spans, which are
also stored in the database. Estimators compute user statistics from the
sketches in the database, which are published. The service provider may also
publish the sketches via an analytics service for other parties.

The attacker knows all algorithms used (those for sketching, aggregation, and
estimation, including their configuration parameters such as the hash function
and the number of buckets) and has access to the published statistics and the
analytics service. She controls a small fraction of the users that produce user
data. However, she can neither observe nor change the data of the other users.
She also does not have access to the database containing the raw data.

In this work, we mainly consider an \emph{internal attacker} who has access to
the sketch database. For this internal attacker, the goal is to discover whether
her target belongs to a given sketch. We then discuss how our results extend to
weaker \emph{external attackers}, which can only use the analytics service. We
will see that for our main results, the attacker only requires access to one
sketch. The possibility to use multiple sketches will only come up when
discussing mitigations strategies in Section~\ref{subsec:salting-the-hash}.

\subsection{Privacy definition\label{subsec:privacy-model}}

\noindent
We now present the privacy definition used in our main result. Given the system
and attacker just described, our definition captures the impossibility for the
attacker to gain significant positive knowledge about a given target. We explain
this notion of knowledge gain, state assumptions on the attacker's prior
knowledge, and compare our privacy notion with other well-known definitions.

We define a very \emph{weak} privacy requirement: reasonable definitions used
for practical algorithms would likely be stronger. Working with a weak
definition \emph{strengthens} our negative result: if a cardinality estimator
satisfying our weak privacy definition cannot be precise, then this is also the
case for cardinality estimators satisfying a stronger definition.

In Section~\ref{sec:weaker-versions}, we explore even weaker privacy definitions
and prove similar negative results (although with a looser bound). In
Section~\ref{sec:individual-users}, we relax the requirement that \emph{every
individual user} must be protected according to the privacy definition. We show
then that our theorem no longer holds, but that practical uses of cardinality
estimators still cannot be considered privacy-preserving.

We model a possible attack as follows. The attacker has access to the identifier
of a user $t$ (her \emph{target}) and a sketch $M$ generated from a set of
users $E$ ($M\leftarrow M_{E}$) unknown to her. The attacker wants to know
whether $t\in E$. She initially has a \emph{prior} knowledge of whether the
target is in the sketch. Like in Bayesian inference, $\frac{\mathbb{P}\left[t\in
E\right]}{\mathbb{P}\left[t\notin E\right]}$ represents how much more likely the
user is in the database than is not, according to the attacker. After looking at
the sketch $M$, this knowledge changes: her \emph{posterior} knowledge becomes
$\frac{\mathbb{P}\left[t\in E \mid M_{E}=M\right]}{\mathbb{P}\left[t\notin E
\mid M_E=M\right]}$.

We define privacy to capture that the attacker's posterior knowledge should not
increase too much. In other words, the attacker should not gain significant
knowledge by seeing the sketch. This must hold for every possible sketch $M$ and
every possible user $t$. We show in Lemma~\ref{lem:equivalent-definition} that
the following definition bounds the \emph{positive} knowledge gain of the
attacker.

\begin{defn}\label{def:epsilon-sketch-privacy}
A cardinality estimator satisfies \emph{$\varepsilon$-sketch privacy above
cardinality $N$} if for every $n\geq N$, $t\in\mathcal{U}$, and
$M\in\mathcal{M}$, the following inequality holds:
\[
\mathbb{P}_{n}\left[M_E=M \mid t\in E\right]\leq
e^{\varepsilon}\cdot\mathbb{P}_{n}\left[M_E=M \mid t\notin E\right].
\]
Here, the probability $\mathbb{P}_{n}$ is taken over:
\begin{itemize}
\item a uniformly chosen set $E\in\mathcal{P}_{n}\left(\mathcal{U}\right)$,
where $\mathcal{P}_n\left(\mathcal{U}\right)$ is the set of all possible subsets
$E\subseteq\mathcal{U}$ of cardinality $n$; and
\item the coin flips of the algorithm, for probabilistic cardinality estimators.
\end{itemize}
\end{defn}

If a cardinality estimator satisfies this definition, then for any user $t$, the
probability of observing $M$ if $t\in E$ is not much higher than the probability
of observing $M$ if $t\notin E$. To give additional intuition on
Definition~\ref{def:epsilon-sketch-privacy}, we now show that the parameter
$\varepsilon$ effectively captures the attacker's positive knowledge gain.

\begin{lem}\label{lem:equivalent-definition}
A cardinality estimator satisfies $\varepsilon$-sketch privacy above cardinality
$N$ if and only if the following inequality holds for every $n\geq N$,
$t\in\mathcal{U}$ and $M\in\mathcal{M}$ with
$\mathbb{P}_{n}\left[M_E=M\right]>0$:
\[
\frac{\mathbb{P}_{n}\left[t\in E \mid M_E=M\right]}{\mathbb{P}_{n}\left[t\notin E \mid M_E=M\right]}
\leq e^{\varepsilon}\cdot\frac{\mathbb{P}_{n}\left[t\in E\right]}{\mathbb{P}_{n}\left[t\notin E\right]}.
\]
\end{lem}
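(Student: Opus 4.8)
The plan is to reduce the whole equivalence to the \emph{odds form of Bayes' theorem}, applied pointwise for each fixed triple $(n,t,M)$. The single identity that drives everything is
\[
\frac{\mathbb{P}_{n}\left[t\in E\mid M_E=M\right]}{\mathbb{P}_{n}\left[t\notin E\mid M_E=M\right]}
=\frac{\mathbb{P}_{n}\left[M_E=M\mid t\in E\right]}{\mathbb{P}_{n}\left[M_E=M\mid t\notin E\right]}\cdot\frac{\mathbb{P}_{n}\left[t\in E\right]}{\mathbb{P}_{n}\left[t\notin E\right]},
\]
obtained by writing each posterior probability via Bayes' rule, for instance $\mathbb{P}_{n}\left[t\in E\mid M_E=M\right]=\mathbb{P}_{n}\left[M_E=M\mid t\in E\right]\mathbb{P}_{n}\left[t\in E\right]/\mathbb{P}_{n}\left[M_E=M\right]$, and dividing, so that the common normalizing factor $\mathbb{P}_{n}\left[M_E=M\right]$ cancels. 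Once this identity is in place, the posterior-odds inequality in the lemma is \emph{exactly} the assertion that the likelihood ratio $\mathbb{P}_{n}\left[M_E=M\mid t\in E\right]/\mathbb{P}_{n}\left[M_E=M\mid t\notin E\right]$ is at most $e^{\varepsilon}$, which is a one-step rearrangement of the defining inequality of $\varepsilon$-sketch privacy. Thus, for a single $(n,t,M)$ with all denominators positive, the two conditions are literally the same inequality, and the algebra is a single Bayes cancellation.

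First I would record the elementary facts that make the ratios well-defined. Since $E$ is uniform over the subsets of $\mathcal{U}$ of cardinality $n$, symmetry gives $\mathbb{P}_{n}\left[t\in E\right]=n/\left|\mathcal{U}\right|$ and $\mathbb{P}_{n}\left[t\notin E\right]=1-n/\left|\mathcal{U}\right|$, both strictly positive for $1\le n\le\left|\mathcal{U}\right|-1$; hence the prior odds on the right is a fixed, finite, nonzero constant throughout. The degenerate cases $n=0$ and $n=\left|\mathcal{U}\right|$ may be excluded, since then one of $t\in E$, $t\notin E$ is impossible and both formulations are vacuous.

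Then I would treat the two directions, where the only genuine work lies in the edge-case bookkeeping. For the forward direction (privacy $\Rightarrow$ posterior bound), fix $M$ with $\mathbb{P}_{n}\left[M_E=M\right]>0$; the point to verify is that $\mathbb{P}_{n}\left[M_E=M\mid t\notin E\right]>0$, so that both the likelihood ratio and the posterior denominator are meaningful. Otherwise the law of total probability would force $\mathbb{P}_{n}\left[M_E=M\right]=\mathbb{P}_{n}\left[M_E=M\mid t\in E\right]\mathbb{P}_{n}\left[t\in E\right]>0$, hence $\mathbb{P}_{n}\left[M_E=M\mid t\in E\right]>0$, contradicting the privacy inequality $\mathbb{P}_{n}\left[M_E=M\mid t\in E\right]\le e^{\varepsilon}\cdot 0$. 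With positivity secured, I divide the privacy inequality by $\mathbb{P}_{n}\left[M_E=M\mid t\notin E\right]$, multiply by the prior odds, and invoke the identity to obtain the posterior bound.

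For the reverse direction (posterior bound $\Rightarrow$ privacy) I must establish the likelihood inequality for \emph{all} $M$, whereas the hypothesis only constrains $M$ with $\mathbb{P}_{n}\left[M_E=M\right]>0$, so I would split on the total probability. If $\mathbb{P}_{n}\left[M_E=M\right]=0$, then since both priors are positive both conditionals vanish and the inequality reads $0\le e^{\varepsilon}\cdot 0$. If $\mathbb{P}_{n}\left[M_E=M\right]>0$, the hypothesis bounds the posterior odds by a finite quantity; in particular the posterior denominator $\mathbb{P}_{n}\left[t\notin E\mid M_E=M\right]$ cannot be zero (else the left-hand ratio would be infinite against a finite right-hand side), so by Bayes $\mathbb{P}_{n}\left[M_E=M\mid t\notin E\right]>0$, and the identity converts the posterior bound back into the likelihood-ratio bound, which is precisely the privacy inequality. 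I expect this quantifier mismatch together with the zero-denominator reasoning to be the main obstacle; the underlying computation is otherwise immediate.
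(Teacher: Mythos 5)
Your proposal is correct and follows essentially the same route as the paper's proof: both reduce the equivalence to Bayes' law applied to $\mathbb{P}_{n}[M_E=M \mid t\in E]$ and $\mathbb{P}_{n}[M_E=M \mid t\notin E]$, so that the normalizing factor $\mathbb{P}_{n}[M_E=M]$ cancels and the posterior-odds bound becomes the likelihood-ratio bound. The only difference is that you spell out the positivity and zero-probability edge cases that the paper leaves implicit in ``follows directly,'' which is additional rigor rather than a different argument.
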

\begin{proof}
Bayes' law can be used to derive one inequality from the other.
We have
\begin{gather*}
  \mathbb{P}_{n}[M_E\!=\!M \mid t \in E] =
  \frac{\mathbb{P}_{n}[t\in E \mid M_E\!=\!M]\cdot\mathbb{P}_{n}[M_E\!=\!M]}{\mathbb{P}_{n}[t\in E]}
  \\
  \mathbb{P}_{n}[M_E\!=\!M \mid t\notin E]=\frac{\mathbb{P}_{n}[t\notin E \mid M_E\!=\!M]\cdot\mathbb{P}_{n}[M_E\!=\!M]}{\mathbb{P}_{n}[t\notin E]}.
\end{gather*}
The equivalence between the definitions follows directly.
\end{proof}

This definition has three characteristics which make it unusually weak. They
correspond to an \emph{under\hyp{}approximation} of the attacker's capabilities
and goals.

\begin{description}

\item[Uniform prior] The choice of distribution for $\mathbb{P}_{n}$ implies
that the elements of $E$ are \emph{uniformly distributed} in $\mathcal{U}$. This
corresponds to an attacker who has \emph{no prior knowledge} about the data. In
the absence of prior information about the elements of the set $E$, the
attacker's best approximation is the uniform distribution. In practice, a
realistic attacker might have more information about the data, so a stronger
privacy definition would model this prior knowledge by a larger family of
probability distributions. More precisely, since the elements of $E$ are
uniformly distributed in $\mathcal{U}$, the prior knowledge from the attacker
$\mathbb{P}_n\left[t\in E\right]$ is exactly
$\left|E\right|/\left|\mathcal{U}\right|$. A realistic attacker would likely
have a larger prior knowledge about their target. However, any reasonable
definition of privacy would also include the case where the attacker does not
have more information on their target than on other users and, as such, would be
stronger than $\epsilon$-sketch privacy.

\item[Asymmetry] We only consider the \emph{positive} information gain by the
attacker. There is an \emph{upper bound} on the probability that $t\in E$ given
the observation $M$, but no lower bound. In other words, the attacker is allowed
to deduce with absolute certainty that $t\notin E$. In practice, both positive
\emph{and} negative information gains may present a privacy risk. In our running
example (see Example~\ref{exa:running-example}), deducing that a user did
\emph{not} spend the night at his apartment could be problematic.

\item[Minimum cardinality] We only require a bound on the information gain for
cardinalities larger than a parameter $N$. In practice, $N$ could represent a
threshold over which it is considered safe to publish sketches or to relax data
protection requirements. Choosing a small $N$ (like $N=10$) strengthens the 
privacy definition, while choosing a large $N$ (like $N=500$)
limits the utility of the data, as many smaller sketches cannot be published.
\end{description}

We emphasize again that these characteristics, which result in a very weak
definition, make our notion of privacy well-suited to proving \emph{negative
results}. If satisfying our definition is impossible for an accurate cardinality
estimator, then a stronger definition would similarly be impossible to satisfy.
For example, any reasonable choice of distributions used to represent the prior
knowledge of the attacker would include the uniform distribution.

We now compare our definition to two other notions: differential
privacy~\cite{dwork2008differential} and Pufferfish
privacy~\cite{kifer2012rigorous}.

\subsection{Relation to differential privacy}

\noindent
Recall the definition of differential privacy: $\mathcal{A}$ is
$\varepsilon$-differentially private if and only if
$e^{-\varepsilon}\leq\frac{\mathbb{P}\left[\mathcal{A}\left(D_{1}\right)\right]}{\mathbb{P}\left[\mathcal{A}\left(D_{2}\right)\right]}\leq e^{\varepsilon}$
for any databases $D_{1}$ and $D_{2}$ that only differ by one element. In our
setup, this could be written as the two inequalities $\mathbb{P}\left[M_{E} \mid
t\in E\right]\leq e^{\varepsilon}\cdot\mathbb{P}\left[M_{E} \mid t\notin
E\right]$ and $\mathbb{P}\left[M_{E} \mid t\notin E\right]\leq
e^{\varepsilon}\cdot\mathbb{P}\left[M_{E} \mid t\in E\right]$.

Asymmetry, and minimum cardinality, are two obvious differences between our
notion of privacy and differential privacy. But the major difference lies in the
\emph{source of uncertainty}. In differential privacy, the probabilities are
taken over the coin flips of the algorithm. The attacker is implicitly assumed
to know the algorithm's input except for \emph{one} user: the uncertainty comes
entirely from the algorithm's randomness. In our definition, the attacker has no
prior knowledge of the input, so the uncertainty comes either entirely from the
attacker's lack of background knowledge (for deterministic cardinality
estimators), or both from the attacker's lack of background knowledge and the
algorithm's inherent randomness.

The notion of relying on the initial lack of knowledge of the attacker in a
privacy definition is not new: it is for example made explicit in the definition
of Pufferfish privacy, a generic framework for privacy definitions.

\subsection{Relation to Pufferfish privacy\label{subsec:relation-to-pufferfish}}

\noindent
Pufferfish privacy~\cite{kifer2012rigorous} is a customizable framework for
building privacy definitions. A Pufferfish privacy definition has three
components: a set of \emph{potential secrets} $\mathbb{S}$; a set of
\emph{discriminative pairs}
$\mathbb{S}_{\text{pairs}}\subseteq\mathbb{S}\times\mathbb{S}$; and
a set of \emph{data evolution scenarios} $\mathbb{D}$.

$\mathbb{S}_{\text{pairs}}$ represents the facts we want the attacker to be
unable to distinguish. In our case, we want to prevent the attacker from
distinguishing between $t\in E$ and $t\notin E$:
$\mathbb{S}_{\text{pairs}}=\left\{ \left(t\in E,t\notin E\right) \mid
t\in\mathcal{U}\right\}$. $\mathbb{D}$ represents what the possible
distributions of the input data are. In our case, it is a singleton that only
contains the uniform distribution.

Our definition is \emph{almost} an instance of Pufferfish privacy. Like with
differential privacy, the main difference is asymmetry.

The close link to Pufferfish privacy supports our proof of two fundamental
properties of privacy definitions: \emph{transformation invariance} and
\emph{convexity}~\cite{kifer2012rigorous}. Transformation invariance states that
performing additional analysis of the output of the algorithm does not allow an
attacker to gain more information, i.e., the privacy definition is closed under
composition with probabilistic algorithms. Convexity states that if a data owner
chooses randomly between two algorithms satisfying a privacy definition and
generates the corresponding output, this procedure itself will satisfy the same
privacy definition.
These two properties act as \emph{sanity checks} for our privacy definition.

\begin{prop}
$\varepsilon$-sketch privacy above cardinality $N$ satisfies transformation
invariance and convexity.
\end{prop}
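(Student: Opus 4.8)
The plan is to derive both properties directly from the one-sided inequality defining $\varepsilon$-sketch privacy, exploiting that this inequality is preserved under nonnegative linear combinations. Because our definition is only \emph{almost} an instance of Pufferfish privacy (the asymmetry prevents a direct appeal to the closure results of~\cite{kifer2012rigorous}), I would reprove both statements from scratch, but the arguments mirror the standard Pufferfish proofs and the asymmetry turns out to cause no difficulty.

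For \textbf{transformation invariance}, let $\mathcal{A}$ be any probabilistic algorithm taking a sketch in $\mathcal{M}$ and producing an output $o$ in some space $\mathcal{O}$, with its coin flips independent of $E$. Fix $n \geq N$, a target $t \in \mathcal{U}$, and an output $o$. Expanding by the law of total probability over the intermediate sketch $M$ gives
\[
\mathbb{P}_{n}\!\left[\mathcal{A}(M_E) = o \mid t \in E\right] = \sum_{M \in \mathcal{M}} \mathbb{P}\!\left[\mathcal{A}(M) = o\right] \cdot \mathbb{P}_{n}\!\left[M_E = M \mid t \in E\right].
\]
Since each coefficient $\mathbb{P}[\mathcal{A}(M)=o]$ is nonnegative, I apply the sketch-privacy inequality termwise to each factor $\mathbb{P}_{n}[M_E = M \mid t \in E] \leq e^{\varepsilon}\,\mathbb{P}_{n}[M_E = M \mid t \notin E]$ and recombine the sum, obtaining the bound $\mathbb{P}_{n}[\mathcal{A}(M_E)=o \mid t\in E] \leq e^{\varepsilon}\,\mathbb{P}_{n}[\mathcal{A}(M_E)=o \mid t\notin E]$. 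As $n$, $t$, and $o$ were arbitrary, the post-processed mechanism again satisfies the definition.

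For \textbf{convexity}, suppose a data owner picks estimator $\mathcal{A}_1$ with probability $p$ and $\mathcal{A}_2$ with probability $1-p$, using a coin independent of $E$, and outputs the corresponding sketch. The output distribution is the convex combination
\[
\mathbb{P}_{n}\!\left[M_E = M \mid t \in E\right] = p\,\mathbb{P}_{n}^{(1)}\!\left[M_E = M \mid t \in E\right] + (1-p)\,\mathbb{P}_{n}^{(2)}\!\left[M_E = M \mid t \in E\right],
\]
where $\mathbb{P}_n^{(i)}$ refers to the mechanism $\mathcal{A}_i$. Applying the hypothesis that each $\mathcal{A}_i$ satisfies $\varepsilon$-sketch privacy to both summands and using $p, 1-p \geq 0$ yields the same bound against $\mathbb{P}_n[M_E = M \mid t \notin E]$.

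The only point requiring care — and the \emph{main obstacle} to a glib ``it follows from Pufferfish'' — is the \textbf{asymmetry}: the standard closure proofs are usually phrased for the two-sided bound, so I must check that a purely one-sided inequality survives. It does, precisely because every weight appearing in the two expansions ($\mathbb{P}[\mathcal{A}(M)=o]$ and $p, 1-p$) is nonnegative, so multiplying and summing preserves the direction of the inequality. A secondary check is that conditioning on $t \in E$ versus $t \notin E$ commutes with both post-processing and mixing: since the extra randomness of $\mathcal{A}$ and of the mixing coin is independent of $E$, the conditioning affects only the distribution of $E$ and factors cleanly out of the coefficients.
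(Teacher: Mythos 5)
Your proof is correct and follows essentially the same route as the paper, which simply defers to the proof of Theorem~5.1 in the Pufferfish paper~\cite{kifer2012rigorous}: that proof is exactly the linearity argument you give, expanding the output distribution over the intermediate sketch (or over the mixing coin) and using nonnegativity of the weights to preserve the one-sided inequality. You have merely written out in full the details the paper leaves to the citation, and your observation that the asymmetry is harmless because only nonnegative linear combinations are involved is the right justification for why the adaptation goes through.
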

\begin{proof}
The proof is similar to the proof of Theorem 5.1 in~\cite{kifer2012rigorous},
proved in Appendix B of the same paper.
\end{proof}

\section{Private cardinality estimators are imprecise\label{sec:main-result}}

Let us return to our privacy problem: someone with access to a sketch wants to
know whether a given individual belongs to the aggregated individuals in the
sketch. Formally, given a target $t$ and a sketch $M_{E}$, the attacker must
guess whether $t\in E$ with high probability. In
Section~\ref{subsec:main-result-deterministic}, we explain how the attacker can
use a simple test to gain significant information if the cardinality estimator
is deterministic. Then, in Section~\ref{subsec:main-result-probabilistic}, we
reformulate the main technical lemma in probabilistic terms, and prove an
equivalent theorem for probabilistic cardinality estimators.

\subsection{Deterministic case\label{subsec:main-result-deterministic}}

Given a target $t$ and a sketch $M_E$, the attacker can perform the following
simple attack to guess whether $t \in E$. She can try to add the target $t$ to
the sketch $M_{E}$, and observe whether the sketch changes. In other words, she
checks whether $\mathit{add}\left(M_{E},t\right)=M_{E}$. If the sketch changes,
this means with certainty that $t \notin E$. Thus, Bayes' law indicates that if
$\mathit{add}\left(M_{E},t\right)=M_{E}$, then the probability of $t\in E$
cannot decrease.

How large is this increase? Intuitively, it depends on how likely it is that
adding an element to a sketch does not change it \emph{if the element has not
previously been added to the sketch}. Formally, it depends on
$\mathbb{P}\left[\mathit{add}\left(M_{E},t\right)=M_{E} \mid t\notin E\right]$.

\begin{itemize}
\item If $\mathbb{P}\left[\mathit{add}\left(M_{E},t\right)=M_{E} \mid t\notin E\right]$
is close to $0$, for example if the sketch is a list of all elements seen
so far, then observing that $\mathit{add}\left(M_{E},t\right)=M_{E}$ will lead
the attacker to believe with high probability that $t\in E$.
\item If $\mathbb{P}\left[\mathit{add}\left(M_{E},t\right)=M_{E} \mid t\notin E\right]$
is close to $1$, it means that adding an element to a sketch often
does not change it. The previous attack does not reveal much information.
But then, it also means that many elements are ignored when they are added
to the sketch, that is, the sketch does not change when
adding the element. Intuitively, the accuracy of an estimator based solely on a
sketch that ignores many elements cannot be very good.
\end{itemize}
We formalize this intuition in the following theorem.

\begin{figure*}
\includegraphics[height=4cm]{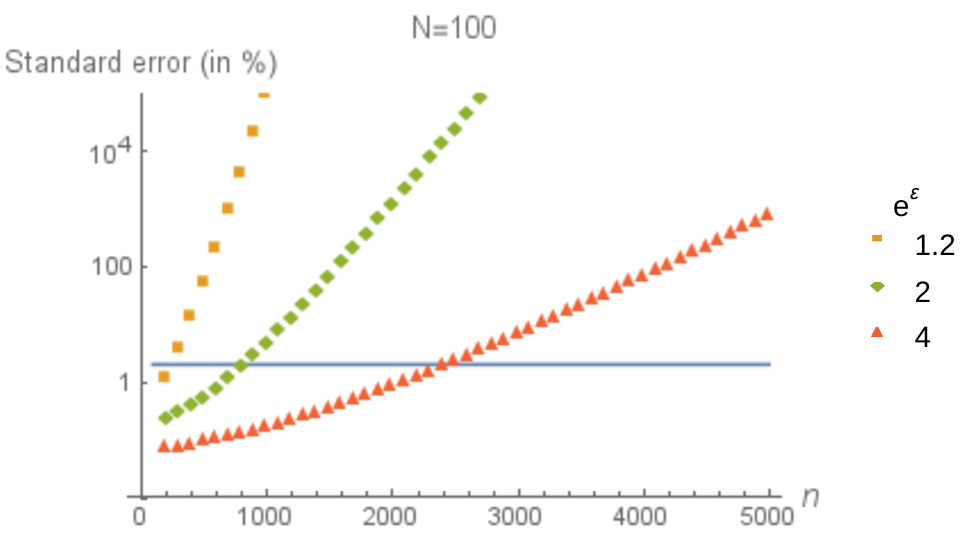}
\hfill
\includegraphics[height=4cm]{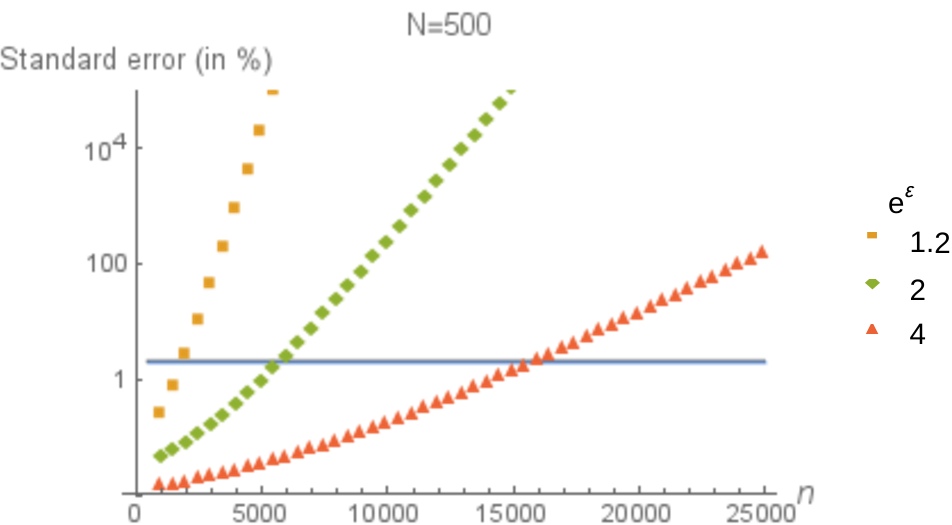}
\caption{\label{fig:minstderr}Minimum standard error for a cardinality estimator
with $\varepsilon$-sketch privacy above cardinality $100$ (left) and $500$ (right).
The blue line is the relative standard error of HyperLogLog with standard parameters.}
\end{figure*}

\begin{thm}\label{thm:main-result}
An unbiased deterministic cardinality estimator that satisfies
$\varepsilon$-sketch privacy above cardinality $N$ is not precise. Namely, its
variance is at least $\frac{1-c^{k}}{c^{k}}\left(n-k\cdot N\right)$, for any
$n\geq N$ and $k\leq\frac{n}{N}$, where $c = 1-e^{-\varepsilon}$
\end{thm}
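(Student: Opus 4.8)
The plan is to turn the informal attack described just before the statement into a quantitative bound in three stages: first convert the privacy hypothesis into a bound on the probability that a fresh element is \emph{absorbed} by a sketch; then exploit a monotonicity property of absorption together with the estimator's unbiasedness at \emph{every} cardinality; and finally assemble these into the variance bound by a Cauchy--Schwarz argument, optimizing over the auxiliary parameter $k$.

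\textbf{Stage 1 (absorption).} For a fixed target $t$, let $S_{t}=\{M\in\mathcal{M}:\mathit{add}(M,t)=M\}$ be the set of sketches that $t$ leaves unchanged. By Lemma~\ref{ce-property}, whenever $t\in E$ we have $M_{E}\in S_{t}$, so $\mathbb{P}_{n}[M_{E}\in S_{t}\mid t\in E]=1$. Summing the $\varepsilon$-sketch-privacy inequality over all $M\in S_{t}$ then gives $1\le e^{\varepsilon}\cdot\mathbb{P}_{n}[M_{E}\in S_{t}\mid t\notin E]$, i.e. $\mathbb{P}_{n}[\mathit{add}(M_{E},t)=M_{E}\mid t\notin E]\ge e^{-\varepsilon}$ for every $n\ge N$. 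Equivalently, a uniformly random fresh element changes a random sketch of cardinality at least $N$ with probability at most $c=1-e^{-\varepsilon}$.

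\textbf{Stage 2 (monotonicity and the chain).} The crucial structural fact is that absorption is monotone: if $\mathit{add}(M_{E'},t)=M_{E'}$ and $E'\subseteq E''$ with $t\notin E''$, then $\mathit{add}(M_{E''},t)=M_{E''}$. This holds because $M_{E'\cup\{t\}}=M_{E'}$ implies $M_{E''\cup\{t\}}=M_{E''}$ after adding the elements of $E''\setminus E'$ to both sides, using that $M_{(\cdot)}$ depends only on the underlying set. Contrapositively, an element that changes a large sketch must change all of its sub-sketches, so ``changing'' becomes rarer as the cardinality grows. I will combine this with the following setup: build $E$ as a uniformly random chain $e_{1},\dots,e_{n}$ and consider the estimates $\hat{n}_{i}=\mathit{estimate}(M_{e_{1},\dots,e_{i}})$. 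By exchangeability each prefix $\{e_{1},\dots,e_{i}\}$ is a uniform size-$i$ set, so unbiasedness gives $\mathbb{E}[\hat{n}_{i}]=i$; moreover $\hat{n}_{i}=\hat{n}_{i-1}$ unless step $i$ changes the sketch, an event $\chi_{i}$ of probability at most $c$ once $i-1\ge N$ (Stage~1 applied along the chain, since the prefix and $e_{i}$ have exactly the distribution of $\mathbb{P}_{i-1}$ conditioned on $e_{i}\notin\{e_{1},\dots,e_{i-1}\}$).

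\textbf{Stage 3 (variance and the main obstacle).} Writing $\Delta_{i}=\hat{n}_{i}-\hat{n}_{i-1}$, we have $\mathbb{E}[\Delta_{i}]=1$ while $\Delta_{i}=\Delta_{i}\chi_{i}$. The plan is to show that, using monotonicity to chain the absorption bound across the $k$ cardinality scales $N,2N,\dots,kN$, a single element in the ``bulk'' (positions $kN<i\le n$) changes the final sketch with probability at most $c^{k}$; Cauchy--Schwarz, via $1=\mathbb{E}[\Delta_{i}\chi_{i}]\le\sqrt{\mathbb{E}[\Delta_{i}^{2}]}\sqrt{\mathbb{E}[\chi_{i}^{2}]}$, then forces $\mathbb{E}[\Delta_{i}^{2}]\ge c^{-k}$, hence $\mathrm{Var}(\Delta_{i})\ge c^{-k}-1=\frac{1-c^{k}}{c^{k}}$, and aggregating the $n-kN$ bulk increments yields the stated bound $\frac{1-c^{k}}{c^{k}}(n-kN)$. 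I expect the genuine difficulty to be exactly this amplification from $c$ to $c^{k}$: the absorption bound of Stage~1 is \emph{marginal}, so to multiply $k$ copies one must control the \emph{conditional} probability that an element still changes the sketch at scale $jN$ given that it already changed it at scale $(j-1)N$, which is precisely where monotonicity and a careful nested conditioning must be invoked. A secondary obstacle is the aggregation step: one must check that the per-position contributions genuinely add, i.e. that the increments $\Delta_{i}$ are uncorrelated enough (using unbiasedness at each intermediate cardinality) to pass from the single-increment inequality to the sum over the $n-kN$ bulk positions.
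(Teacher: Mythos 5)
Your Stage~1 is exactly the paper's Lemma~\ref{lem:ignores-elements}, and your monotonicity observation (a sub-sketch that absorbs $t$ forces the super-sketch to absorb $t$) is the same structural fact the paper uses. But the two places you flag as ``the genuine difficulty'' are genuine gaps, and the resolution you sketch for the first one does not work. To amplify $c$ to $c^{k}$ you propose nested conditioning along the prefixes of sizes $N,2N,\dots,kN$. By monotonicity the events $A_{j}=\{\text{$t$ changes the prefix of size }jN\}$ are \emph{nested}, $A_{k}\subseteq A_{k-1}\subseteq\dots\subseteq A_{1}$, so the telescoping product $\mathbb{P}[A_{k}]=\mathbb{P}[A_{1}]\prod_{j\ge 2}\mathbb{P}[A_{j}\mid A_{j-1}]$ buys you nothing: the marginal bound $\mathbb{P}[A_{j}]\le c$ gives no control on $\mathbb{P}[A_{j}\mid A_{j-1}]$, which can be $1$ (e.g.\ for a deterministic estimator whether $t$ is absorbed may be essentially decided by the first $N$ elements), and nesting alone only yields $\mathbb{P}[A_{k}]\le c$. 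The paper's Lemma~\ref{lem:ignores-more-and-more-elements} gets the product by a different device: it generates the random $n$-set as the union of $k$ \emph{independent} uniform $N$-subsets $E_{1},\dots,E_{k}$ (plus a correction for collisions); monotonicity says that if $t$ changes $M_{E}$ it must change every $M_{E_{i}}$, and \emph{independence} of the blocks --- not conditioning along a chain --- lets the $k$ marginal bounds multiply to give $(1-e^{-\varepsilon})^{k}$. That independent-decomposition idea is the missing ingredient, and without it your $c^{k}$ is unjustified.

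The second gap is the aggregation step. Your per-increment Cauchy--Schwarz bound $\mathrm{Var}(\Delta_{i})\ge c^{-k}-1$ is fine as far as it goes, but $\mathrm{Var}\bigl(\sum_{i}\Delta_{i}\bigr)=\sum_{i}\mathrm{Var}(\Delta_{i})+2\sum_{i<j}\mathrm{Cov}(\Delta_{i},\Delta_{j})$, and nothing in the hypotheses rules out negative covariances (an estimator can overshoot on one increment and compensate on the next), so the sum of per-increment variances does not lower-bound the variance of the final estimate. The paper's Lemma~\ref{lem:bad-variance} avoids increments entirely: conditioned on the first $N$ elements, the sketch is a function only of the sub-sample of non-ignored elements (a uniform sample of ratio $p_{E}\le c^{k}$), so the estimator can do no better than the minimum-variance unbiased estimator of a set's size from such a sample, whose variance is $\frac{1-p_{E}}{p_{E}}(n-N)$; Jensen's inequality then averages this over the initial segment. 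You would need to either import that sampling/sufficiency argument or find some other way to pass from increments to the total, so as written the proposal does not establish the theorem.
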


Note that if we were using differential privacy, this result would be trivial:
no deterministic algorithm can ever be differentially private. However, this is
not so obvious for our definition of privacy: prior
work~\cite{bhaskar2011noiseless,bassily2013coupled,grining2017towards} shows
that when the attacker is assumed to have some uncertainty about the data, even
deterministic algorithms can satisfy the corresponding definition of privacy.

Figure~\ref{fig:minstderr} shows plots of the lower bound on the standard error of
a cardinality estimator with $\varepsilon$-sketch privacy at two
cardinalities (100 and 500). It shows that the standard error increases
exponentially with the number of elements added to the sketch. This demonstrates that
even if we require the privacy property for a large value of $N$ (500) and a
large $\varepsilon$ (which is generally less than $1$), the standard error of a
cardinality estimator will become unreasonably large after 20,000 elements.

\vspace{\topsep}
\begin{proof}[Proof of Theorem~\ref{thm:main-result}]
The proof is comprised of three steps, following the intuition previously given.
\begin{enumerate}
\item We show that a sketch $M_{E}$, computed from a random set
$E$ with an $\varepsilon$-sketch private estimator above
cardinality $N$, will ignore many elements after $N$ (Lemma~\ref{lem:ignores-elements}).
\item We prove that if a cardinality estimator ignores a certain ratio of
elements after adding $n=N$ elements, then it will ignore an even larger ratio
of elements as $n$ increases (Lemma~\ref{lem:ignores-more-and-more-elements}).
\item We conclude by proving that an unbiased cardinality estimator that
ignores many elements must have a large variance (Lemma~\ref{lem:bad-variance}).
\end{enumerate}
The theorem follows directly from these lemmas.
\end{proof}

\begin{lem}\label{lem:ignores-elements}
Let $t\in\mathcal{U}$. A deterministic cardinality estimator with
$\varepsilon$-sketch privacy above cardinality $N$ satisfies
$\mathbb{P}_{n}\left[\mathit{add}\left(M_{E},t\right)=M_{E} \mid t\notin E\right]\geq e^{-\varepsilon}$
for $n\geq N$.
\end{lem}
\begin{proof}
We first prove that such an estimator satisfies
\begin{multline*}
  \mathbb{P}_{n}\left[\mathit{add}\left(M_{E},t\right)=M_{E} \mid t\in E\right]{}
  \\
  \leq e^{\varepsilon}\cdot\mathbb{P}_{n}\left[\mathit{add}\left(M_{E},t\right)=M_{E} \mid t\notin E\right].
\end{multline*}
We decompose the left-hand side of the inequality over all possible values of
$M_{E}$ which that $\mathit{add}\left(M_{E},t\right)=M_{E}$. If we call
this set $\mathcal{I}_{t}=\left\{ M \mid
\mathit{add}\left(M,t\right)=M\right\}$, we have:
\begin{align*}
  \mathbb{P}_{n}&\left[\mathit{add}\left(M_{E},t\right)=M_{E} \mid t\in E\right]
  \\
  & {} = \sum\nolimits_{M\in\mathcal{I}_{t}}\mathbb{P}_{n}\left[M_E=M \mid t\in E\right]
  \\
  & {} \leq e^{\varepsilon}\cdot\sum\nolimits_{M\in\mathcal{I}_{t}}\mathbb{P}_{n}\left[M_E=M \mid t\notin E\right]
  \\
  & {} \leq e^{\varepsilon}\cdot\mathbb{P}_{n}\left[\mathit{add}\left(M_{E},t\right)=M_{E} \mid t\notin E\right],
\end{align*}
where the first inequality is obtained directly from the definition of
$\varepsilon$-sketch privacy.

Now, Lemma~\ref{ce-property} gives $\mathbb{P}\left[\mathit{add}\left(M_{E},t\right)=M_{E} \mid t\in E\right]=1$,
and finally $\mathbb{P}_{n}\left[\mathit{add}\left(M_{E},t\right)=M_{E} \mid t\notin E\right]\geq e^{-\varepsilon}$.
\end{proof}

\begin{lem}\label{lem:ignores-more-and-more-elements}
Let $t\in\mathcal{U}$. Suppose a deterministic cardinality estimator satisfies
$\mathbb{P}_{n}\left[\mathit{add}\left(M_{E},t\right)=M_{E} \mid t\notin E\right]\geq p$
for any $n\geq N$. Then for any integer $k\geq1$, it also satisfies
$\mathbb{P}_{n}\left[\mathit{add}\left(M_{E},t\right)=M_{E} \mid t\notin E\right]\geq1-{\left(1-p\right)}^{k}$,
for $n\geq k\cdot N$.
\end{lem}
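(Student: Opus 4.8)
The plan is to build everything on one structural fact: the set $\mathcal{I}_t = \{M \in \mathcal{M} \mid \mathit{add}(M,t)=M\}$ of sketches that already \emph{ignore} $t$ is \emph{absorbing} under further additions. First I would prove that if $M\in\mathcal{I}_t$ then $\mathit{add}(M,e)\in\mathcal{I}_t$ for every $e$: by commutativity, $\mathit{add}(\mathit{add}(M,e),t)=\mathit{add}(\mathit{add}(M,t),e)=\mathit{add}(M,e)$, using $\mathit{add}(M,t)=M$. Iterating (as in Lemma~\ref{ce-property}), adding \emph{any} set of elements to a sketch in $\mathcal{I}_t$ keeps it in $\mathcal{I}_t$. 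Two consequences drive the proof: once the cumulative sketch ignores $t$ it never stops ignoring $t$; and if the sketch built from \emph{any} sub-collection of the added elements already lies in $\mathcal{I}_t$, then so does the sketch built from all of them.

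Writing $f(n)=\mathbb{P}_{n}\left[M_E\in\mathcal{I}_t \mid t\notin E\right]$, the hypothesis is $f(n)\geq p$ for $n\geq N$ and the goal is $f(n)\geq 1-(1-p)^k$ for $n\geq kN$. I would prove this by induction on $k$ through the recursion
\[
  f(n)\geq p+(1-p)\,f(n-N)\qquad (n\geq 2N),
\]
whose base case ($k=1$) is exactly the hypothesis and whose telescoping yields $1-(1-p)^k$. Conceptually the recursion peels off a fresh block of $N$ elements: that block, on its own, produces a sketch in $\mathcal{I}_t$ with probability at least $p$, in which case the absorbing property forces the final sketch into $\mathcal{I}_t$; on the complementary event, the remaining at least $(k-1)N$ elements absorb $t$ with probability at least $1-(1-p)^{k-1}$ by the inductive hypothesis. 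Intuitively the $k$ blocks each independently get a chance $\geq p$ to absorb $t$, and once absorbed it stays absorbed, so the probability that all $k$ fail is at most $(1-p)^k$.

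The main obstacle is that the blocks are \emph{not} independent. Conditioning on the earlier blocks having failed to absorb $t$ reshapes the distribution of the next block into a uniform sample over the \emph{reduced} universe $\mathcal{U}\setminus(\{t\}\cup\text{used elements})$, whereas the hypothesis only controls a uniform block drawn from the full universe. Thus the genuinely delicate step is the per-block conditional inequality ``a fresh size-$N$ block absorbs $t$ with probability at least $p$.'' I would attack this with a monotone coupling: because adding elements can only push a sketch \emph{into} $\mathcal{I}_t$ and never out, I would compare against $k$ \emph{independent} full-universe uniform size-$N$ blocks (permitted to overlap), for which $\mathbb{P}[\text{some block's sketch}\in\mathcal{I}_t]\geq 1-(1-p)^k$ is immediate, and then transfer back to the fixed-cardinality uniform sketch $M_E$ using the absorbing property. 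Matching the union of these overlapping blocks to a genuinely uniform size-$n$ set without degrading the bound is exactly where the care is needed, and I expect that sampling-without-replacement transfer to be the crux of a fully rigorous argument.
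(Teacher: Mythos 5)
Your proposal follows essentially the same route as the paper: the absorbing property of $\mathcal{I}_t$ (which the paper derives via the $\mathit{merge}$ operation, $M_E=\mathit{merge}(M_{E\backslash F},M_F)$, rather than directly from commutativity, but to the same effect) combined with viewing the uniform size-$n$ set as built from $k$ independent uniform size-$N$ blocks. The ``crux'' you flag at the end---transferring from $k$ independent, possibly overlapping blocks back to a genuinely uniform set of cardinality $n$---is precisely how the paper closes the argument: it generates $E$ by drawing independent uniform $E_1,\dots,E_k\in\mathcal{P}_N(\mathcal{U})$, taking their union, and padding with $n-\left|\bigcup_i E_i\right|$ fresh uniform elements from the complement, after which the absorbing property and independence give $\mathbb{P}_{n}\left[\mathit{add}\left(M_{E},t\right)\neq M_{E} \mid t\notin E\right]\leq\prod_{i=1}^{k}\left(1-\mathbb{P}_{N}\left[\mathit{add}\left(M_{E_{i}},t\right)=M_{E_{i}} \mid t\notin E_{i}\right]\right)\leq(1-p)^{k}$ directly, with no need for your intermediate recursion on $f(n)$.
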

\begin{proof}
First, note that if $F\subseteq E$, and
$\mathit{add}\left(M_{F},t\right)=M_{F}$, then
$\mathit{add}\left(M_{E},t\right)=M_{E}$. This is a direct consequence of
Lemma~\ref{lem:merge-properties}:
$M_{E}=\mathit{merge}\left(M_{E\backslash F},M_{F}\right)$, so:
\begin{align*}
  \mathit{add}\left(M_{E},t\right) & {} = \mathit{merge}\left(M_{E\backslash F},\mathit{add}\left(M_{F},t\right)\right)
  \\
  & {} = \mathit{merge}\left(M_{E\backslash F},M_{F}\right)
  \\
  &  {} = M_{E}
\end{align*}

We show next that when
$n\geq k\cdot N$, generating a set $E\in\mathcal{P}_{n}\left(\mathcal{U}\right)$
uniformly randomly can be seen as generating $k$ \emph{independent} sets in
$\mathcal{P}_{N}\left(\mathcal{U}\right)$, then merging them. Indeed, generating
such a set can be done by as follows:
\begin{enumerate}
\item For $i\in\left\{ 1,\ldots,k\right\} $, generate a set
$E_{i}\subseteq\mathcal{P}_{N}\left(\mathcal{U}\right)$ uniformly randomly. Let
$E_{\cup}=\bigcup_{i}E_{i}$.
\item Count the number of elements appearing in multiple $E_{i}$:
$d=\#\left\{ x\in E_{i}|\exists j<i,x\in E_{j}\right\} $. Generate a set
$E^{\prime}\in\mathcal{P}_{n-d}\left(\mathcal{U}\backslash E_{\cup}\right)$
uniformly randomly.
\end{enumerate}
$E$ is then defined by $E=E_{\cup}\cup E^{\prime}$. Step $1$ ensures
that we used $k$ \emph{independent} sets of cardinality $N$ to generate
$E$, and step $2$ ensures that $E$ has exactly $n$ elements.

Intuitively, each time we generate a set $E_i$ of cardinality $N$ uniformly at
random in $\mathcal{U}$, we have \emph{one chance} that $t$ will be ignored by
$E_i$ (and thus by $E$). So $t$ can be ignored by $M_{E}$ with a certain
probability because it was ignored by $M_{E_1}$. Similarly, it can also be
ignored because of $M_{E_2}$, etc. Since the choice of $E_i$ is independent of
the choice of elements in $\bigcup_{j\neq i}E_j$, we can rewrite:
\begin{align*}
 \mathbb{P}_{n}&\left[\mathit{add}\left(M_{E},t\right)\neq M_{E} \mid t\notin E\right]\\
 & {} \leq\prod_{i=1}^{k}\mathbb{P}_{n}\left[\mathit{add}\left(M_{E_{i}^{0}},t\right)\neq M_{E_{i}^{0}} \mid t\notin E\right]\\
 & {} \leq\prod_{i=1}^{k}\left(1-\mathbb{P}_{n}\left[\mathit{add}\left(M_{E_{i}^{0}},t\right)=M_{E_{i}^{0}} \mid t\notin E_{i}\right]\right)\\
 & {} \leq{\left(1-p\right)}^{k}
\end{align*}
using the hypothesis of the lemma. Thus:
\begin{equation*}
  \mathbb{P}_{n}\left[\mathit{add}\left(M_{E},t\right)=M_{E} \mid t\notin E\right]\geq1-{\left(1-p\right)}^{k}.
\end{equation*}
\end{proof}

\begin{lem}\label{lem:bad-variance}
Suppose a deterministic cardinality estimator satisfies
$\mathbb{P}_{n}\left[\mathit{add}\left(M_{E},t\right)=M_{E} \mid t\notin E\right]\geq1-p$
for any $n\geq N$ and all $t$. Then its variance for $n\geq N$ is at least
$\frac{1-p}{p}\left(n-N\right)$.
\end{lem}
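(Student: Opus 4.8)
The plan is to add the elements of $E$ one at a time and to show that the estimate can only change on rare steps, so that on those steps it must jump far, and that these jumps inject a fixed amount of variance per step. Generate $E$ as a uniformly random ordered sequence $e_1,\dots,e_n$ of distinct elements, write $E_i=\{e_1,\dots,e_i\}$, and set $S_i=\mathit{estimate}(M_{E_i})$ and $D_i=S_i-i$. Each prefix $E_i$ is uniform on $\mathcal{P}_i(\mathcal{U})$, so the standing unbiasedness assumption gives $\mathbb{E}[S_i]=i$, hence $\mathbb{E}[D_i]=0$ and $\mathbb{E}[D_n^2]=\mathbb{V}_n$. If $e_{i+1}$ is ignored then $M_{E_{i+1}}=M_{E_i}$ and the increment $\Delta_i=S_{i+1}-S_i$ vanishes; the hypothesis states this happens with probability at least $1-p$ for all $i\geq N$, so the event $G_i=\{M_{E_{i+1}}\neq M_{E_i}\}\supseteq\{\Delta_i\neq 0\}$ has $\mathbb{P}[G_i]\leq p$.

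First I would prove a per-step variance bound. Since $\mathbb{E}[\Delta_i]=\mathbb{E}[S_{i+1}]-\mathbb{E}[S_i]=1$ while $\Delta_i$ vanishes off $G_i$, Cauchy--Schwarz gives $1=\mathbb{E}[\Delta_i\mathbf{1}_{G_i}]\leq\sqrt{\mathbb{E}[\Delta_i^2]}\cdot\sqrt{\mathbb{P}[G_i]}\leq\sqrt{p\,\mathbb{E}[\Delta_i^2]}$, so $\mathbb{E}[\Delta_i^2]\geq 1/p$ and therefore $\mathbb{V}[\Delta_i]=\mathbb{E}[\Delta_i^2]-1\geq\frac{1-p}{p}$ for every $i\geq N$. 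This is exactly the contribution the bound asks for, and it is attained by the Bernoulli-sampling estimator (retain an element iff its hash falls in a range of mass $p$ and rescale by $1/p$), which confirms the result is essentially tight.

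Next I would accumulate these contributions by induction on $n\geq N$, with base case $\mathbb{V}_N\geq 0$. Writing $D_{n+1}=D_n+(\Delta_n-1)$ and expanding the square yields the exact identity
\[
  \mathbb{V}_{n+1}=\mathbb{V}_n+\mathbb{V}[\Delta_n]+2\,\mathrm{Cov}(S_n,\Delta_n),
\]
where I used $\mathbb{E}[(\Delta_n-1)^2]=\mathbb{V}[\Delta_n]$ and $\mathbb{E}[D_n(\Delta_n-1)]=\mathrm{Cov}(S_n,\Delta_n)$. Combined with $\mathbb{V}[\Delta_n]\geq\frac{1-p}{p}$, the inductive step $\mathbb{V}_{n+1}\geq\mathbb{V}_n+\frac{1-p}{p}$, and hence the lemma, follows as soon as $\mathrm{Cov}(S_n,\Delta_n)\geq 0$.

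The main obstacle is establishing this sign: the current estimate $S_n$ must not be negatively correlated with the next jump $\Delta_n$. A plain martingale argument fails, because marginal unbiasedness only forces the unconditional identity $\mathbb{E}[\Delta_n]=1$ and not $\mathbb{E}[\Delta_n\mid e_1,\dots,e_n]=1$ for each history. The structural fact I would exploit is the monotonicity of the ignored set from Lemma~\ref{lem:merge-properties} (already used in Lemma~\ref{lem:ignores-more-and-more-elements}): if $F\subseteq E$ and $t$ is ignored by $M_F$ then $t$ is ignored by $M_E$. This makes the ignored set grow monotonically along the sequence of prefixes, and together with unbiasedness holding at every cardinality it should pin the covariance to be non-negative; turning this monotonicity into the inequality $\mathrm{Cov}(S_n,\Delta_n)\geq 0$ is the delicate heart of the argument, whereas the per-step estimate and the telescoping are routine.
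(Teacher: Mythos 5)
Your per-step estimate is correct and nicely economical: since $\Delta_i$ vanishes off $G_i$ and $\mathbb{E}[\Delta_i]=1$ by unbiasedness at consecutive cardinalities, Cauchy--Schwarz does give $\mathbb{V}[\Delta_i]\geq\frac{1-p}{p}$ (and the averaging over $t$ needed to get $\mathbb{P}[G_i]\leq p$ from the per-$t$ hypothesis goes through by symmetry). But the accumulation step is a genuine gap, and you have correctly located it yourself without closing it. Marginal unbiasedness at every cardinality does not make $(D_i)$ a martingale, so the increments $\Delta_i-1$ need not be uncorrelated, and the identity $\mathbb{V}_{n+1}=\mathbb{V}_n+\mathbb{V}[\Delta_n]+2\,\mathrm{Cov}(S_n,\Delta_n)$ only helps if you can sign the covariance. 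The monotonicity of the ignored set (if $F\subseteq E$ and $t$ is ignored by $M_F$ then it is ignored by $M_E$) constrains the \emph{sketch} dynamics, not the \emph{estimate} dynamics: the estimate function is an arbitrary map on $\mathcal{M}$, and nothing prevents a sketch with rich enough state from admitting an unbiased estimator that mean-reverts, i.e.\ makes systematically smaller jumps from histories where $S_n$ is already large. That is exactly $\mathrm{Cov}(S_n,\Delta_n)<0$, which would break the telescoping even though the lemma itself remains true. So the ``delicate heart'' you defer is not a routine verification; as stated, the inductive step does not follow.

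The paper's proof avoids this issue by a different decomposition. It conditions on the set $E$ of the \emph{first $N$ elements} and observes that, from then on, the sketch (hence the estimate) can only depend on which of the later elements land in the non-ignored set $S_E$, of density $p_E$. The problem then reduces to estimating the size of a set from a uniform sample with sampling ratio $p_E$, for which the minimum-variance unbiased estimator has variance $\frac{1-p_E}{p_E}(n-N)$; this is a known optimality result and is invoked rather than re-derived, which is precisely what sidesteps any covariance bookkeeping. Finally, one shows $\mathrm{avg}_E\,p_E\leq p$ by exchanging the averages over $t$ and $E$, and applies Jensen's inequality to the convex decreasing function $x\mapsto\frac{1-x}{x}(n-N)$ to pass from the per-$E$ bound to the stated bound. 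If you want to salvage your route, you would need either to prove non-negative correlation of the increments (which I do not believe follows from the axioms) or to replace the telescoping by a conditioning argument of this kind --- at which point you have essentially rediscovered the paper's proof.
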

\begin{proof}
The proof's intuition is as follows. The hypothesis of the lemma requires
that the cardinality estimator, on average, \emph{ignores} a proportion $1-p$ of
new elements added to a sketch (once $N$ elements have been added): the sketch
is not changed when a new element is added. The best thing that the cardinality
estimator can do, then, is to store all elements that it does not ignore, count
the number of unique elements among these, and multiply this number by $1/p$ to
correct for the elements ignored. It is well-known that estimating the size $k$
of a set based on the size of a uniform sample of sampling ratio $p$ has a
variance of $\frac{1-p}{p}k$. Hence, our cardinality estimator has a variance of
at least $\frac{1-p}{p}\left(n-N\right)$.

Formalizing this idea requires some additional technical steps. The full proof
is given in Appendix~\ref{lem:bad-variance-proof}.
\end{proof}

All existing cardinality estimators satisfy our axioms and their standard error
remains low even for large values of $n$. Theorem~\ref{thm:main-result} shows,
for all of them, that there are some users whose privacy loss is
significant. In Section~\ref{subsec:averaging-epsilon}, we quantify this
precisely for HyperLogLog.

\subsection{Probabilistic case\label{subsec:main-result-probabilistic}}

Algorithms that add noise to their output, or more generally, are allowed to use
a source of randomness, are often used in privacy contexts. As such, even though
all cardinality estimators used in practical applications are deterministic, it
is reasonable to hope that a probabilistic cardinality estimator could satisfy
our very weak privacy definition. Unfortunately, this is not the case.

In the deterministic case, we showed that for any element $t$, the probability
that $t$ has an influence on a random sketch $M$ decreases exponentially with
the sketch size. Or, equivalently, the distribution of sketches of size $kn$
that do not contain $t$ is ``almost the same'' (up to a density of probability
${\left(1-e^{-\varepsilon}\right)}^{k}$) as the distribution of sketches of the
same size, but containing $t$.

The following lemma establishes the same result in the probabilistic setting.
Instead of reasoning about the probability that an element $t$ is ``ignored'' by
a sketch $M$, we reason about the probability that $t$ has a \emph{meaningful}
influence on this sketch. We show that this probability decreases exponentially,
even if $\mathbb{P}\left[M\neq\mathit{add}\left(M,t\right)\right]$ is very high.

First, we prove a technical lemma on the \emph{structure} that the
$\mathit{merge}$ operation imposes on the space of sketch distributions. Then,
we find an upper bound on the ``meaningful influence'' of an element $t$, when
added to a random sketch of cardinality $n$. We then use this upper bound,
characterized using the statistical distance, to show that the estimator
variance is as imprecise as for the deterministic case.

\begin{defn}
Let $\mathbb{D}$ be the real vector space spanned by the family $\left\{ \mathcal{D}_{E}|E\subseteq\mathcal{U}\right\} $
(seen as vectors of $\mathbb{R}^{\mathcal{M}}$). For any probability
distributions $\mathcal{A},\mathcal{B}\in\mathbb{D}$, we denote $\mathcal{A}\cdot\mathcal{B}=\mathit{merge}\left(\mathcal{A},\mathcal{B}\right)$.
We show in Lemma~\ref{lem:merge-algebra} that this notation makes sense: on
$\mathbb{D}$, we can do computations as if $\mathit{merge}$ was a multiplicative
operation.
\end{defn}

\begin{lem}\label{lem:merge-algebra}
The $\mathit{merge}$ operation defines a commutative and associative
algebra on $\mathbb{D}$.

\begin{proof}
By the properties required from probabilistic cardinality estimators in
Definition~\ref{defn:probabilistic}, the $\mathit{merge}$ operation is commutative
and associative on the family
$\left\{\mathcal{D}_{E}|E\subseteq\mathcal{U}\right\}$. By linearity of the
$\mathit{merge}$ operation, these properties are preserved for any linear
combination of vectors $\mathcal{D}_{E}$.
\end{proof}
\end{lem}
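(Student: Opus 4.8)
The plan is to exhibit the multiplication $\mathcal{A}\cdot\mathcal{B}=\mathit{merge}(\mathcal{A},\mathcal{B})$ as the restriction to $\mathbb{D}$ of a single bilinear operation on all of $\mathbb{R}^{\mathcal{M}}$, and then to verify commutativity and associativity only on the spanning family $\{\mathcal{D}_E\}$, letting bilinearity propagate both properties to every element of $\mathbb{D}$. First I would observe that the extension of $\mathit{merge}$ to distributions in Definition~\ref{defn:probabilistic}, namely $\mathbb{P}[\mathit{merge}(\mathcal{D},\mathcal{D}')=M_0]=\sum_{M,M'}\mathbb{P}[\mathcal{D}=M]\,\mathbb{P}[\mathcal{D}'=M']\,\mathbb{P}[\mathit{merge}(M,M')=M_0]$, is already a bilinear expression in the two coordinate vectors. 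Reading $\mathbb{P}[\mathcal{D}=M]$ simply as the $M$-th coordinate of a vector in $\mathbb{R}^{\mathcal{M}}$ — now permitted to be negative or to have coordinate sum different from $1$ — this same formula defines the unique bilinear map $\mathbb{R}^{\mathcal{M}}\times\mathbb{R}^{\mathcal{M}}\to\mathbb{R}^{\mathcal{M}}$ extending $(M,M')\mapsto\mathit{merge}(M,M')$ off the standard basis, and it agrees with the given operation on genuine distributions. This bilinearity is precisely the distributivity and scalar-compatibility needed for an algebra structure.

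Next I would check closure and the two identities on the generators. For $E,E'\subseteq\mathcal{U}$, Definition~\ref{defn:probabilistic} gives $\mathcal{D}_E\cdot\mathcal{D}_{E'}=\mathit{merge}(\mathcal{D}_E,\mathcal{D}_{E'})=\mathcal{D}_{E\cup E'}$, which again lies in the spanning family; hence products of generators stay in $\mathbb{D}$, and by bilinearity so do products of arbitrary elements, so $\mathbb{D}$ is a subalgebra. On generators, commutativity is immediate from $E\cup E'=E'\cup E$, and associativity from $(E_1\cup E_2)\cup E_3=E_1\cup(E_2\cup E_3)$; in both cases the relevant product is itself a single generator $\mathcal{D}_{E_1\cup E_2\cup E_3}$, so there is nothing to compute beyond the corresponding property of set union. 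Note that I would not need the $\mathcal{D}_E$ to be linearly independent, since the check-on-a-spanning-set argument does not require a basis.

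Finally I would push these identities to all of $\mathbb{D}$. Writing $\mathcal{A}=\sum_i a_i\mathcal{D}_{E_i}$, $\mathcal{B}=\sum_j b_j\mathcal{D}_{F_j}$ and $\mathcal{C}=\sum_\ell c_\ell\mathcal{D}_{G_\ell}$ and expanding by bi- and tri-linearity, commutativity on generators yields $\mathcal{A}\cdot\mathcal{B}=\sum_{i,j}a_ib_j\,\mathcal{D}_{E_i}\cdot\mathcal{D}_{F_j}=\sum_{i,j}a_ib_j\,\mathcal{D}_{F_j}\cdot\mathcal{D}_{E_i}=\mathcal{B}\cdot\mathcal{A}$, while the analogous threefold expansion reduces both $(\mathcal{A}\cdot\mathcal{B})\cdot\mathcal{C}$ and $\mathcal{A}\cdot(\mathcal{B}\cdot\mathcal{C})$ to the same sum $\sum_{i,j,\ell}a_ib_jc_\ell\,\mathcal{D}_{E_i\cup F_j\cup G_\ell}$.

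I expect the only real subtlety — rather than a genuine obstacle — to lie in the first step's bookkeeping: $\mathit{merge}$ is specified in Definition~\ref{defn:probabilistic} only on probability distributions, whereas $\mathbb{D}$ also contains signed combinations such as $\mathcal{D}_{E_1}-\mathcal{D}_{E_2}$, so one must define the multiplication on such vectors and confirm it is consistent with the distribution-level definition. Because the defining formula is linear in each coordinate vector separately, this consistency is automatic, and no analytic or combinatorial difficulty arises.
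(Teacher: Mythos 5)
Your proof is correct and follows essentially the same route as the paper's: verify commutativity and associativity on the generating family $\left\{\mathcal{D}_{E}\right\}$ via $\mathcal{D}_{E}\cdot\mathcal{D}_{E'}=\mathcal{D}_{E\cup E'}$ and propagate them to all of $\mathbb{D}$ by (bi)linearity. Your additional observation that the operation is the restriction of a globally defined bilinear map on $\mathbb{R}^{\mathcal{M}}$ --- so that no linear independence of the $\mathcal{D}_{E}$ is needed for well-definedness --- is a welcome tightening of the paper's one-line argument.
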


\begin{lem}\label{lem:statistical-distance}
Suppose a cardinality estimator satisfies $\varepsilon$-sketch privacy above
cardinality $N$, and let $t\in\mathcal{U}$. Let $\mathcal{D}_{\text{out},n}$ be
the distribution of sketches obtained by adding $n$ uniformly random elements of
$\mathcal{U}\backslash\left\{ t\right\}$ into $M_{\varnothing}$ (or,
equivalently,
$\mathcal{D}_{\text{out},n}\left(M\right)=\mathbb{P}_{n}\left[M_E=M|t\notin E\right]$).
Then:
\[
\upsilon\left(\mathcal{D}_{\text{out},kn},\mathit{add}\left(\mathcal{D}_{\text{out},kn},t\right)\right)\leq{\left(1-e^{-\varepsilon}\right)}^{k}
\]
where $\upsilon$ is the statistical distance between probability distributions.
\end{lem}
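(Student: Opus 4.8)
The plan is to carry the deterministic argument (Lemmas~\ref{lem:ignores-elements} and~\ref{lem:ignores-more-and-more-elements}) into the algebra $\mathbb{D}$ of Lemma~\ref{lem:merge-algebra}, where \emph{adding the target} is multiplication by $p := \mathcal{D}_{\{t\}}$, since $\mathit{merge}(\mathcal{A},\mathcal{D}_{\{t\}}) = \mathit{add}(\mathcal{A},t)$ and $p\cdot p = \mathcal{D}_{\{t\}\cup\{t\}} = p$, so $p$ is \emph{idempotent}. First I would prove the case $k=1$: for $n\ge N$, $\upsilon(\mathcal{D}_{\text{out},n},\mathit{add}(\mathcal{D}_{\text{out},n},t))\le 1-e^{-\varepsilon}$. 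This is the probabilistic counterpart of Lemma~\ref{lem:ignores-elements}: up to the cardinality bookkeeping, $\mathit{add}(\mathcal{D}_{\text{out},n},t)$ is the ``in'' distribution $M\mapsto\mathbb{P}_n[M_E=M\mid t\in E]$, so $\varepsilon$-sketch privacy gives the pointwise domination $\mathbb{P}_n[\,\cdot\mid t\in E]\le e^{\varepsilon}\,\mathbb{P}_n[\,\cdot\mid t\notin E]$; whenever one distribution is pointwise at most $e^{\varepsilon}$ times another their common overlap $\sum_M\min(\cdot,\cdot)$ is at least $e^{-\varepsilon}$, and the statistical distance is therefore at most $1-e^{-\varepsilon}$.

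Next I would reduce the size-$kn$ case to $k$ independent size-$n$ pieces, exactly as in the proof of Lemma~\ref{lem:ignores-more-and-more-elements}: generate the $kn$ elements (all avoiding $t$) as the union of $k$ independent uniform size-$n$ sets $E_1,\dots,E_k$ plus a fresh top-up set $E'$ correcting overlaps, so that, by $\mathit{merge}(\mathcal{D}_{E_1},\mathcal{D}_{E_2})=\mathcal{D}_{E_1\cup E_2}$, $\mathcal{D}_{\text{out},kn}$ is the average over $E_1,\dots,E_k,E'$ of $\mathcal{D}_{E_1}\cdots\mathcal{D}_{E_k}\,\mathcal{D}_{E'}$ (merge written multiplicatively). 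The influence of $t$ is the operator $\Phi:=\mathrm{id}-T_p$, where $T_p$ is multiplication by $p$; idempotence makes $T_p$ a projection and $\ker T_p$ an \emph{ideal}. A short computation then gives $\Phi\,\mathit{merge}(\mathcal{A},\mathcal{B})=\mathit{merge}(\Phi\mathcal{A},\Phi\mathcal{B})$, because every ``mixed'' term of $\mathit{merge}(\mathcal{A},\mathcal{B})$ carries a factor in the ideal $\operatorname{Im}T_p$ and is annihilated by $\Phi$. Iterating this over the $k$ factors and invoking multilinearity of $\mathit{merge}$ together with the \emph{independence} of the $E_i$, the averaged influence $\Phi\mathcal{D}_{\text{out},kn}$ reduces to a $k$-fold merge of the single-step averaged influence $\Phi\mathcal{D}_{\text{out},n}$. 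Since $\upsilon(\mathcal{X},\mathit{add}(\mathcal{X},t))=\tfrac12\|\Phi\mathcal{X}\|_1$, combining this factorization with sub-multiplicativity of statistical distance under merge and the base case yields the bound $(1-e^{-\varepsilon})^k$.

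The hard part is precisely this last step: I must show that the influence of $t$ on a merge is controlled by the \emph{product}, not the sum, of the per-piece influences. The difficulty is that the obvious routes give the wrong inequality: a direct coupling or the triangle inequality only produces a union-type bound $1-\prod_i(1-\upsilon_i)$, while the crude estimate $\|\mathit{merge}(\mu,\nu)\|_1\le\|\mu\|_1\,\|\nu\|_1$ loses a spurious factor $2^{k-1}$. Obtaining the clean product requires exploiting that the per-piece influences $\Phi\mathcal{D}_{\text{out},n}$ live in the ideal $\ker T_p$, where merging two such elements \emph{collapses} them back to a single influence term — the probabilistic analogue of the deterministic fact that $t$ is ignored by the union as soon as it is ignored by one independent piece. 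Concretely, the crux is the seminorm inequality $\|\mathit{merge}(\mu,\nu)\|_1\le\tfrac12\|\mu\|_1\,\|\nu\|_1$ for $\mu,\nu\in\ker T_p$, which for deterministic estimators follows from the ``dipole'' shape $\delta_M-\delta_{\mathit{add}(M,t)}$ of $\Phi$ but in the probabilistic case must be argued from the ideal structure alone.

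The two remaining wrinkles are comparatively minor and I would dispatch them last. The overlap top-up $E'$ must be removed without breaking the multilinear factorization — merging with a fixed probability distribution cannot increase statistical distance, so $E'$ acts as harmless post-processing, but some care is needed to keep the averaging outside the norm so that the \emph{averaged} (not per-realization) base case applies. The $n$-versus-$n{+}1$ cardinality mismatch between $\mathcal{D}_{\text{out},n}$ and $\mathit{add}(\mathcal{D}_{\text{out},n},t)$ should be absorbed by applying $\varepsilon$-sketch privacy at the matched cardinality, exactly as in Lemma~\ref{lem:ignores-elements}.
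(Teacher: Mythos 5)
Your setup is sound up to the crux, and the algebraic identity you rely on,
$\Phi\,\mathit{merge}(\mathcal{A},\mathcal{B})=\mathit{merge}(\Phi\mathcal{A},\Phi\mathcal{B})$ with $\Phi=\mathrm{id}-T_p$ and $p=\mathcal{D}_{\{t\}}$ idempotent, is correct. But the step you yourself flag as the hard part — the seminorm inequality $\left\Vert\mathit{merge}(\mu,\nu)\right\Vert_1\leq\tfrac12\left\Vert\mu\right\Vert_1\left\Vert\nu\right\Vert_1$ for $\mu,\nu\in\ker T_p$, ``argued from the ideal structure alone'' — is false, so the route does not close. Counterexample: take a deterministic estimator whose sketch of $E$ is $E$ itself when $t\notin E$, and, when $t\in E$, is $E$ with $a,b$ collapsed to one symbol and $c,d$ collapsed to another (this is a congruence for set union, hence a valid idempotent commutative $\mathit{add}$). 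Then $\mu=\delta_{M_{\{a\}}}-\delta_{M_{\{b\}}}$ and $\nu=\delta_{M_{\{c\}}}-\delta_{M_{\{d\}}}$ both lie in $\ker T_p$ (indeed in $\operatorname{Im}\Phi$) with $\left\Vert\mu\right\Vert_1=\left\Vert\nu\right\Vert_1=2$, yet $\mathit{merge}(\mu,\nu)=\delta_{M_{\{a,c\}}}-\delta_{M_{\{a,d\}}}-\delta_{M_{\{b,c\}}}+\delta_{M_{\{b,d\}}}$ has norm $4>2$. So the ideal structure cannot deliver the product bound; the smallness of $\upsilon(\mathcal{D}_{\text{out},kn},\cdot)$ does not propagate multiplicatively through $\mathit{merge}$ via norms.

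The missing idea is that the scalar $(1-e^{-\varepsilon})$ must be extracted from each factor \emph{before} any norm is taken, using the privacy hypothesis rather than the algebra. The paper writes $\mathcal{D}_{\text{out},n}=e^{-\varepsilon}\mathcal{D}_{\text{in},n}+(1-e^{-\varepsilon})\mathcal{R}$, where $\mathcal{D}_{\text{in},n}$ is a genuine probability distribution fixed by $T_p$ (it is an average of $\mathcal{D}_E$ with $t\in E$); expanding $(\mathcal{D}_{\text{out},n})^k\cdot\mathcal{C}$ binomially, every term containing a factor $\mathcal{D}_{\text{in},n}$ is annihilated by $\Phi$, and the sole surviving term carries the scalar coefficient $(1-e^{-\varepsilon})^k$ times a difference of two probability vectors, each of norm $1$. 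The bound is then the coefficient, with no product-of-norms estimate needed. (In your notation this amounts to the identity $\Phi\mathcal{D}_{\text{out},n}=(1-e^{-\varepsilon})\,\Phi\mathcal{R}$, after which your factorization $\Phi\mathcal{D}_{\text{out},kn}=(\Phi\mathcal{D}_{\text{out},n})^k\cdot\Phi\mathcal{C}=(1-e^{-\varepsilon})^k\,\Phi(\mathcal{R}^k\mathcal{C})$ finishes immediately; but without that decomposition your argument has a genuine gap.)
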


\begin{proof}
Let $\mathcal{D}_{\text{in},n}$ be the distribution of sketches obtained
by adding $t$, then $n-1$ uniformly random elements of $\mathcal{U}$
into $M$ (or, equivalently, $\mathcal{D}_{\text{in},n}\left(M\right)=\mathbb{P}_{n}\left[M_E=M|t\in E\right]$).
Then the definition of $\varepsilon$-sketch privacy gives that for
every sketch $M$, $\mathcal{D}_{\text{out},n}\left(M\right)\geq e^{-\varepsilon}\mathcal{D}_{\text{in},n}\left(M\right)$.
So we can express $\mathcal{D}_{\text{out},n}$ as the \emph{sum}
of two distributions:
\[
\mathcal{D}_{\text{out},n}=e^{-\varepsilon}\mathcal{D}_{\text{in},n}+\left(1-e^{-\varepsilon}\right)\mathcal{R}
\]
for a certain distribution $\mathcal{R}$.

First, we show that
$\mathcal{D}_{\text{out},kn}={\left(\mathcal{D}_{\text{out},n}\right)}^{k}\cdot\mathcal{C}$
for a certain distribution $\mathcal{C}$. Indeed, to generate a sketch
of cardinality $kn$ that does not contain $t$ uniformly randomly, one can use
the following process.
\begin{enumerate}
\item Generate $k$ random sketches of cardinality $n$ which do not contain $t$,
and merge them.
\item For all $E\subseteq\mathcal{U}$, denote by $p_{E}$ the probability that
the $k$ sketches were generated with the elements in $E$. There might be
``collisions'' between the $k$ sketches: if several sketches were generated
using the same element, $\left|E\right|<kn$. When this happens, we need to
``correct'' the distribution, and add additional elements. Enumerating all the
options, we denote $\mathcal{C}=\sum p_{E}\mathcal{D}_{E,nk}^{\text{c}}$, where
$\mathcal{D}_{E,nk}^{\text{c}}$ is obtained by adding $nk-\left|E\right|$
uniformly random elements in $\mathcal{U}\backslash E$ to
$\mathcal{M}_{\varnothing}$. Thus,
$\mathcal{D}_{\text{out},kn}={\left(\mathcal{D}_{\text{out},n}\right)}^{k}\cdot\mathcal{C}$.
\end{enumerate}

All these distributions are in $\mathbb{D}$:
$\mathcal{D}_{\text{out},n}=\text{avg}_{E\in\mathcal{P}_{n}\left(\mathcal{U}\right),t\notin E}\mathcal{D}_{E}$,
$\mathcal{D}_{\text{in},n}=\text{avg}_{E\in\mathcal{P}_{n}\left(\mathcal{U}\right),t\in E}\mathcal{D}_{E}$,
$\mathcal{R}={\left(1-e^{-\varepsilon}\right)}^{-1}\left(\mathcal{D}_{\text{out},n}-e^{-\varepsilon}\mathcal{D}_{\text{in},n}\right)$,
etc. Thus:

\begin{align*}
  \mathcal{D}_{\text{out},kn} & ={\left(\mathcal{D}_{\text{out},n}\right)}^{k}\cdot\mathcal{C}\\
 & ={\left(e^{-\varepsilon}\mathcal{D}_{\text{in},n}+\left(1-e^{-\varepsilon}\right)\mathcal{R}\right)}^{k}\cdot\mathcal{C}\\
 & =\sum_{i=0}^{k}\binom{k}{i}e^{-i\cdot\varepsilon}{\left(1-e^{-\varepsilon}\right)}^{k-i}\mathcal{D}_{\text{in},n}^{i}\cdot\mathcal{R}^{k-i}\cdot\mathcal{C}.
\end{align*}
Denoting
$\mathcal{A}=\sum_{i=1}^{k}\binom{k}{i}e^{-i\cdot\varepsilon}{\left(1-e^{-\varepsilon}\right)}^{k-i}\mathcal{D}_{\text{in},n}^{i-1}\cdot\mathcal{R}^{k-i}\cdot\mathcal{C}$
and $\mathcal{\mathcal{B}=\mathcal{R}}^{k}\cdot\mathcal{C}$, this
gives us:
\[
\mathcal{D}_{\text{out},kn}=\mathcal{A}\cdot\mathcal{D}_{\text{in},n}+{\left(1-e^{-\varepsilon}\right)}^{k}\mathcal{B}.
\]

Finally, we can compute $\mathit{add}\left(\mathcal{D}_{\text{out},kn},t\right)$:
\begin{align*}
\mathit{add}\left(\mathcal{D}_{\text{out},kn},t\right) &
  =\mathcal{A}\cdot\mathcal{D}_{\text{in},n}\cdot\mathcal{D}_{\left\{ t\right\} }+{\left(1-e^{-\varepsilon}\right)}^{k}\mathcal{B}\cdot\mathcal{D}_{\left\{ t\right\} }\\
  & =\mathcal{A}\cdot\mathcal{D}_{\text{in},n}+{\left(1-e^{-\varepsilon}\right)}^{k}\mathcal{B}\cdot\mathcal{D}_{\left\{ t\right\} }
\end{align*}

Note that since
$\mathcal{D}_{\text{in},n}=\text{avg}_{E\in\mathcal{P}_{n}\left(\mathcal{U}\right),t\in E}\mathcal{D}_{E}$,
we have
$\mathcal{D}_{\text{in},n}\cdot\mathcal{D}_{\left\{t\right\}}=\mathcal{D}_{\text{in},n}$
by idempotence, and:
\begin{align}
  \nonumber
  \upsilon&\left(\mathcal{D}_{\text{out},kn},\mathit{add}\left(\mathcal{D}_{\text{out},kn},t\right)\right)  
  \\
  \nonumber
  & {} = \frac{1}{2}\left\Vert \mathcal{D}_{\text{out},kn}-\mathit{add}\left(\mathcal{D}_{\text{out},kn},t\right)\right\Vert _{1}
  \\
  \nonumber
  & {} = \frac{1}{2}\left\Vert {\left(1-e^{-\varepsilon}\right)}^{k}\mathcal{B}-{\left(1-e^{-\varepsilon}\right)}^{k}\mathcal{B}\cdot\mathcal{D}_{\left\{ t\right\} }\right\Vert _{1}
  \\
  \nonumber
  & {} \leq \frac{{\left(1-e^{-\varepsilon}\right)}^{k}}{2}\left(\left\Vert \mathcal{B}\right\Vert _{1}+\left\Vert \mathcal{B}\cdot\mathcal{D}_{\left\{ t\right\} }\right\Vert _{1}\right)
  \\
  \nonumber
  & {} \leq {\left(1-e^{-\varepsilon}\right)}^{k}.
\end{align}
\end{proof}

Lemma~\ref{lem:statistical-distance} is the probabilistic equivalent of
Lemmas~\ref{lem:ignores-elements} and~\ref{lem:ignores-more-and-more-elements}.
Now, we state the equivalent of Lemma~\ref{lem:bad-variance}, and explain why
its intuition still holds in the probabilistic case.

\begin{lem}\label{lem:bad-variance-probabilistic}
Suppose that a cardinality estimator satisfies for any $n\geq N$ and all $t$,
$\upsilon\left(\mathcal{D}_{\text{out},n},\mathit{add}\left(\mathcal{D}_{\text{out},n},t\right)\right)\leq p$.
Then its variance for $n\geq N$ is at least
$\frac{1-p}{p}\left(n-N\right)$.
\end{lem}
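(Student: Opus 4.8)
The plan is to mirror the deterministic argument of Lemma~\ref{lem:bad-variance}, replacing the notion of ``ignoring'' an element with its probabilistic analogue captured by the statistical-distance hypothesis $\upsilon\left(\mathcal{D}_{\text{out},n},\mathit{add}\left(\mathcal{D}_{\text{out},n},t\right)\right)\leq p$. The intuition is identical: a small statistical distance means that adding a fresh element $t$ to a random sketch of cardinality $n$ almost never changes the sketch distribution, so the estimator is effectively forced to sample incoming elements at rate at most $p$ and then rescale. Since estimating a set size from a Bernoulli-$p$ sample incurs variance $\frac{1-p}{p}k$, the same lower bound should fall out. The key difference is that in the deterministic case ``$t$ is ignored'' is an event of probability $\geq 1-p$ on the sketch itself, whereas here the statistical distance only bounds how distinguishable the two \emph{distributions} are.

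First I would reduce the statistical-distance bound to a coupling statement. By the standard characterization of total variation distance, $\upsilon\left(\mathcal{D}_{\text{out},n},\mathit{add}\left(\mathcal{D}_{\text{out},n},t\right)\right)\leq p$ means there is a coupling under which a sketch drawn from $\mathcal{D}_{\text{out},n}$ and its image under $\mathit{add}(\cdot,t)$ agree with probability at least $1-p$. Equivalently, when we add $t$ to a random out-sketch, with probability at least $1-p$ the result is distributed exactly as if $t$ had never been added. I would use this to argue that the estimator cannot reliably detect the presence of a single new element: the observable sketch, as a random variable, carries almost no information distinguishing ``$t$ was added'' from ``$t$ was not added,'' quantitatively with slack $p$ per element.

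Next I would set up the sampling reduction exactly as in the deterministic proof. Fix $n\geq N$ and consider building $M_{E}$ incrementally; each newly added element, conditioned on not already being present, has at most probability $p$ of exerting a meaningful (distribution-changing) influence on the sketch. Chaining these bounds, the total number of elements after $n=N$ that can influence the sketch is stochastically dominated by a sum of independent Bernoulli$(p)$ variables, so the estimator's output is a deterministic (or randomized) function of a sample that retains each of the $n-N$ post-threshold elements with probability at most $p$. I would then invoke the same sampling-variance fact used in Lemma~\ref{lem:bad-variance} — that unbiased estimation of a population count from a size-$p$ sample has variance at least $\frac{1-p}{p}k$ — applied with $k=n-N$, to conclude $\mathbb{V}_{n}\geq\frac{1-p}{p}\left(n-N\right)$. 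As in the deterministic case, the technical bookkeeping (formalizing ``meaningful influence,'' handling the randomness of $\mathit{estimate}$, and the collision corrections among incremental additions) would be deferred, plausibly to the appendix alongside the proof of Lemma~\ref{lem:bad-variance}.

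The main obstacle, and the genuinely new content relative to the deterministic argument, is converting the per-step total-variation bound into a clean sampling model. In the deterministic setting ``$t$ is ignored'' is a well-defined event on a single sketch, and independence across the $k$ blocks came for free from the merge structure in Lemma~\ref{lem:ignores-more-and-more-elements}. Here the statistical distance is a property of distributions, not of individual outcomes, so I must construct an explicit coupling that turns the $p$-bound into an independent per-element ``influence'' indicator while respecting idempotence and the algebraic merge structure of Lemma~\ref{lem:merge-algebra}. Getting the independence and the additivity of these indicators right — so that the effective sample size is controlled and the variance lower bound for sampling applies verbatim — is where the real care is needed; once the coupling is in place, the reduction to $\frac{1-p}{p}\left(n-N\right)$ is the same as before.
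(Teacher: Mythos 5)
Your proposal matches the paper's own argument: the paper likewise interprets the statistical-distance bound as saying that with probability $1-p$ the estimator ``ignores'' a newly added element, reduces to the sampling-based variance bound of Lemma~\ref{lem:bad-variance}, and notes that the optimal estimator for that sampling problem is deterministic, so randomness in $\mathit{estimate}$ cannot help. Your coupling formulation of the total-variation hypothesis is in fact a more explicit rendering of the step the paper simply asserts as an ``equivalence,'' so the approach is essentially identical.
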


\begin{proof}
The condition
``$\upsilon\left(\mathcal{D}_{\text{out},n},\mathit{add}\left(\mathcal{D}_{\text{out},n},t\right)\right)\leq p$''
is equivalent to the condition of Lemma~\ref{lem:bad-variance}: with
probability $\left(1-p\right)$, the cardinality estimator ``ignores'' when a new
element $t$ is added to a sketch. Just like in Lemma~\ref{lem:bad-variance}'s
proof, we can convert this constraint into estimating the size of a set based on
a sampling set. The best known estimator for this problem is deterministic, so
allowing the cardinality estimator to be probabilistic does not help improving
the optimal variance.

The same result than in Lemma~\ref{lem:bad-variance} follows.
\end{proof}

Lemmas~\ref{lem:statistical-distance} and~\ref{lem:bad-variance-probabilistic}
together immediately lead to the equivalent of Theorem~\ref{thm:main-result} in
the probabilistic case.

\begin{thm}\label{thm:main-result-probabilistic}
An unbiased probabilistic cardinality estimator that satisfies
$\varepsilon$-sketch privacy above cardinality $N$ is not precise. Namely, its
variance is at least $\frac{1-c^{k}}{c^{k}}\left(n-k\cdot N\right)$, for any
$n\geq N$ and $k\leq\frac{n}{N}$, where $c = 1-e^{-\varepsilon}$
\end{thm}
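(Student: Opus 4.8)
The plan is to observe that this theorem stands in exactly the same relation to Lemmas~\ref{lem:statistical-distance} and~\ref{lem:bad-variance-probabilistic} as Theorem~\ref{thm:main-result} does to its three supporting lemmas, so the proof is a short composition of the two probabilistic statements already established. Lemma~\ref{lem:statistical-distance} plays the combined role of Lemmas~\ref{lem:ignores-elements} and~\ref{lem:ignores-more-and-more-elements}: it converts the $\varepsilon$-sketch privacy hypothesis into the exponentially small influence bound $\upsilon(\mathcal{D}_{\text{out},m}, \mathit{add}(\mathcal{D}_{\text{out},m}, t)) \leq c^{k}$, while Lemma~\ref{lem:bad-variance-probabilistic} plays the role of Lemma~\ref{lem:bad-variance}, turning a small statistical-distance bound into a variance lower bound. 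Thus the same two-step ``small influence implies large variance'' argument carries over verbatim from the deterministic proof.

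Concretely, I would fix an integer $k \geq 1$ and set $p = c^{k} = (1 - e^{-\varepsilon})^{k}$. Lemma~\ref{lem:statistical-distance} then yields $\upsilon(\mathcal{D}_{\text{out},m}, \mathit{add}(\mathcal{D}_{\text{out},m}, t)) \leq c^{k}$ for every target $t$ and every cardinality $m \geq kN$. Feeding this into Lemma~\ref{lem:bad-variance-probabilistic}, with its threshold $N$ replaced by $kN$ and with $p = c^{k}$, gives that the variance for $n \geq kN$ is at least $\frac{1 - c^{k}}{c^{k}}(n - kN)$. Rewriting the constraint $n \geq kN$ as $k \leq n/N$ reproduces the claimed bound, and the conclusion holds for every admissible pair $(n,k)$ by quantifying over $k$.

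The one place requiring care is that Lemma~\ref{lem:statistical-distance} is stated only at cardinalities of the form $kn$ with $n \geq N$, whereas the variance argument needs the bound uniformly for all cardinalities $m \geq kN$. I expect this to be the main, though minor, obstacle. It is resolved by reusing the decomposition from the proof of Lemma~\ref{lem:statistical-distance}: to generate a sketch of any cardinality $m \geq kN$ not containing $t$, one may first generate $k$ independent sketches of cardinality $N$, merge them, and then append a correcting distribution of the remaining elements, so that $\mathcal{D}_{\text{out},m} = (\mathcal{D}_{\text{out},N})^{k} \cdot \mathcal{C}'$ for some distribution $\mathcal{C}'$. The identical binomial expansion together with the idempotence identity $\mathcal{D}_{\text{in},N} \cdot \mathcal{D}_{\{t\}} = \mathcal{D}_{\text{in},N}$ then leaves only the $(1 - e^{-\varepsilon})^{k}$ term sensitive to adding $t$, giving the required uniform bound. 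With this in hand, the final chaining and the threshold bookkeeping in Lemma~\ref{lem:bad-variance-probabilistic} are routine, and the theorem follows directly.
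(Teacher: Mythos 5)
Your proposal is correct and matches the paper's argument, which likewise derives the theorem by directly composing Lemma~\ref{lem:statistical-distance} with Lemma~\ref{lem:bad-variance-probabilistic} (the paper states only that the two lemmas ``together immediately lead'' to the result). Your extra step patching the mismatch between the cardinality $kn$ in the statement of Lemma~\ref{lem:statistical-distance} and the uniform bound needed for all cardinalities at least $k\cdot N$ is a point the paper glosses over, and your resolution via the same decomposition $\mathcal{D}_{\text{out},m}=\left(\mathcal{D}_{\text{out},N}\right)^{k}\cdot\mathcal{C}^{\prime}$ is the right one.
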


Somewhat surprisingly, allowing the algorithm to add noise to the data seems to
be pointless from a privacy perspective. Indeed, given the same privacy
guarantee, the lower bound on accuracy is the same for deterministic and
probabilistic cardinality estimators. This suggests that the constraints of
these algorithms (idempotence and commutativity) require them to somehow keep a
trace of who was added to the sketch (at least for some users), which is
fundamentally incompatible with even weak notions of privacy.

\section{Weakening the privacy definition\label{sec:weaker-versions}}

Our main result is negative: no cardinality estimator satisfying our privacy
definition can maintain a good accuracy. Thus, it is natural to wonder whether
our privacy definition is too strict, and if the result still holds for weaker
variants.

In this section, we consider two weaker variants of our privacy definition: one
allows a small probability of privacy loss, while the other averages the privacy
loss across all possible outputs. We show that these natural relaxations do not
help as close variants of our negative result still hold.

\subsection{Allowing a small probability of privacy loss}

\noindent
As Lemma~\ref{lem:equivalent-definition} shows, $\varepsilon$-sketch
differential privacy provides a bound on how much information the attacker can
gain in the worst case. A natural relaxation is to accept a small
\emph{probability of failure}: requiring a bound on the information gain in
\emph{most cases}, and accept a potentially unbounded information gain with low
probability.

We introduce a new parameter, called $\delta$, similar to the use of $\delta$ in
the definition of $\left(\varepsilon,\delta\right)$-differential privacy:
$\mathcal{A}$ is $\left(\varepsilon,\delta\right)$-differentially private if and
only if for any databases $D_1$ and $D_2$ that only differ by one element and
any set $S$ of possible outputs,
$\mathbb{P}\left[\mathcal{A}\left(D_{1}\right)\in S\right]
   \leq e^{\varepsilon}\cdot\mathbb{P}\left[\mathcal{A}\left(D_{2}\right)\in S\right]+\delta$.

\begin{defn}\label{def:epsilon-delta-sketch-privacy}
A cardinality estimator satisfies \emph{$\left(\varepsilon,\delta\right)$-sketch
privacy above cardinality $N$} if for every $\mathcal{S\subseteq M}$, $n\geq N$,
and $t\in\mathcal{U}$,
\[
\mathbb{P}_{n}\left[M_{E}\in\mathcal{S} \mid t\in E\right]
  \leq e^{\varepsilon}\cdot\mathbb{P}_{n}\left[M_{E}\in\mathcal{S} \mid t\notin E\right]+\delta.
\]
\end{defn}

Unfortunately, our negative result still holds for this variant of the
definition. Indeed, we show that a close variant of
Lemma~\ref{lem:ignores-elements} holds, and the rest follows directly.

\begin{lem}\label{lem:ignores-elements-with-delta}
Let $t\in\mathcal{U}$. A cardinality estimator that satisfies
$\left(\varepsilon,\delta\right)$-probabilistic sketch privacy above cardinality
$N$ satisfies
$\mathbb{P}_{n}\left[\mathit{add}\left(M_{E},t\right)=M_{E} \mid t\notin E\right]
    \geq\left(\frac{1}{2}-\delta\right)\cdot e^{-\varepsilon}$
for $n\geq N$.
\end{lem}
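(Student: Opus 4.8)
The plan is to mirror the proof of Lemma~\ref{lem:ignores-elements}, replacing its exact multiplicative bound with the additive bound of Definition~\ref{def:epsilon-delta-sketch-privacy}. Fix $t$ and reuse the set $\mathcal{I}_{t}=\left\{M \mid \mathit{add}\left(M,t\right)=M\right\}$ of sketches left unchanged by adding $t$, so that the target quantity is $\mathbb{P}_{n}\left[M_{E}\in\mathcal{I}_{t} \mid t\notin E\right]$. Instantiating $\left(\varepsilon,\delta\right)$-sketch privacy at $\mathcal{S}=\mathcal{I}_{t}$ gives
\[
\mathbb{P}_{n}\left[M_{E}\in\mathcal{I}_{t} \mid t\in E\right]\leq e^{\varepsilon}\cdot\mathbb{P}_{n}\left[M_{E}\in\mathcal{I}_{t} \mid t\notin E\right]+\delta,
\]
so, writing $a=\mathbb{P}_{n}\left[M_{E}\in\mathcal{I}_{t} \mid t\in E\right]$, rearranging immediately yields $\mathbb{P}_{n}\left[M_{E}\in\mathcal{I}_{t} \mid t\notin E\right]\geq\left(a-\delta\right)e^{-\varepsilon}$. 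Everything therefore reduces to lower-bounding $a$, the probability that re-adding $t$ to a sketch that already contains it leaves the sketch fixed.

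For deterministic estimators Lemma~\ref{ce-property} gives $a=1$ outright, so this route in fact proves the stronger bound $\left(1-\delta\right)e^{-\varepsilon}$, of which the stated $\left(\frac{1}{2}-\delta\right)e^{-\varepsilon}$ is a weakening. The weaker constant is what one should expect to be forced by the probabilistic case, where idempotence is only guaranteed at the level of distributions, $\mathit{add}\left(\mathcal{D}_{E},t\right)=\mathcal{D}_{E}$, and not for individual realizations of $M_{E}$: adding $t$ a second time may alter a particular sketch even though the overall distribution is preserved.

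The step I expect to be the main obstacle is thus proving $a\geq\frac{1}{2}$ for probabilistic estimators. The handle I would use is that, for $t\in E$, the map $M\mapsto\mathit{add}\left(M,t\right)$ leaves the distribution $\mathcal{D}_{E}$ invariant (it acts as a stationary kernel) and that repeated additions of $t$ collapse at the distributional level by idempotence. I would try to turn this distribution-preserving, near-idempotent structure into a pointwise statement that the kernel keeps a sampled sketch fixed with probability at least one half; bridging this gap from distributions to individual outcomes is precisely where the factor $\frac{1}{2}$, rather than $1$, is incurred.

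Once the lemma holds, the rest is unchanged from the main result: Lemma~\ref{lem:ignores-more-and-more-elements} amplifies the ignore probability $\left(\frac{1}{2}-\delta\right)e^{-\varepsilon}$ over $k$ independent blocks, and Lemma~\ref{lem:bad-variance} turns it into the variance lower bound, so the impossibility result carries over to the $\left(\varepsilon,\delta\right)$ relaxation with only a looser constant.
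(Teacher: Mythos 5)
Your main argument is correct and in fact takes a different, more direct route than the paper. The paper first proves an intermediate decomposition result (Lemma~\ref{lem:epsilon-delta-decomposition}): it converts the set-level $\left(\varepsilon,\delta\right)$ guarantee into a pointwise one, splitting $\mathcal{M}$ into a ``good'' part where $\mathbb{P}_{n}\left[M_E=M \mid t\in E\right]\leq 2e^{\varepsilon}\cdot\mathbb{P}_{n}\left[M_E=M \mid t\notin E\right]$ and a ``bad'' part of mass at most $2\delta$ under $t\in E$, then sums over $\mathcal{I}_{t}$ as in Lemma~\ref{lem:ignores-elements} and finishes with Lemma~\ref{ce-property}. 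That conversion is exactly where the two factors of $2$ enter, yielding $1\leq 2e^{\varepsilon}\cdot\mathbb{P}_{n}\left[\mathit{add}\left(M_{E},t\right)=M_{E}\mid t\notin E\right]+2\delta$ and hence the constant $\left(\frac{1}{2}-\delta\right)$. You instead observe that Definition~\ref{def:epsilon-delta-sketch-privacy} quantifies over \emph{all} sets $\mathcal{S}\subseteq\mathcal{M}$, so it can be instantiated directly at $\mathcal{S}=\mathcal{I}_{t}$; combined with $\mathbb{P}_{n}\left[M_{E}\in\mathcal{I}_{t}\mid t\in E\right]=1$ from Lemma~\ref{ce-property}, this gives the stronger bound $\left(1-\delta\right)e^{-\varepsilon}$ in one line, which implies the stated one. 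Your approach is both simpler and quantitatively sharper; the paper's decomposition lemma is the more standard ``probabilistic differential privacy'' machinery but is not needed here.

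One correction: your diagnosis of where the $\frac{1}{2}$ comes from is wrong. It is not forced by probabilistic estimators; it is an artifact of the paper's lossy set-to-pointwise conversion. The paper's own proof also ends by invoking Lemma~\ref{ce-property} to get $\mathbb{P}_{n}\left[\mathit{add}\left(M_{E},t\right)=M_{E}\mid t\in E\right]=1$, i.e., it works in the deterministic setting, and the downstream Lemmas~\ref{lem:ignores-more-and-more-elements} and~\ref{lem:bad-variance} are likewise stated for deterministic estimators. So the ``main obstacle'' you anticipate --- proving $a\geq\frac{1}{2}$ for probabilistic estimators from distributional idempotence --- is neither attempted by the paper nor required for Theorem~\ref{thm:negative-result-with-delta}, and it is just as well, since distribution-level idempotence does not obviously yield any pointwise fixed-point probability. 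Dropping that speculative paragraph leaves you with a complete and correct proof.
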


The proof of Lemma~\ref{lem:ignores-elements-with-delta} is given in
Appendix~\ref{lem:ignores-elements-with-delta-proof}. We can then deduce a
theorem similar to our negative result for our weaker privacy definition.

\begin{thm}\label{thm:negative-result-with-delta}
An unbiased cardinality estimator that satisfies $\left(\varepsilon,\delta\right)$-sketch
privacy above cardinality $N$ has a variance at least
$\frac{1-c^{k}}{c^{k}}\left(n-k\cdot N\right)$ for any $n\leq N$ and
$k\leq\frac{n}{N}$, where $c = 1-\left(\frac{1}{2}-\delta\right)\cdot e^{-\varepsilon}$.
It is therefore not precise if $\delta < \frac{1}{2}$.
\end{thm}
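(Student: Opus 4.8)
The plan is to follow the same three-step skeleton as the proof of Theorem~\ref{thm:main-result}, simply feeding the weaker ``ignores'' bound of Lemma~\ref{lem:ignores-elements-with-delta} into the two amplification steps already established for the pure-$\varepsilon$ case. Lemma~\ref{lem:ignores-elements-with-delta} does all the genuinely new work: it shows that $(\varepsilon,\delta)$-sketch privacy forces $\mathbb{P}_n[\mathit{add}(M_E,t)=M_E \mid t\notin E]\geq p_0$ for $n\geq N$ and every $t$, where I set $p_0=\left(\frac{1}{2}-\delta\right)e^{-\varepsilon}$. This is exactly the hypothesis of Lemma~\ref{lem:ignores-more-and-more-elements}, with the role of $p$ played by $p_0$, so nothing in the remaining two lemmas needs re-proving: their arguments rely only on the merge/independence structure of Lemma~\ref{lem:merge-properties} and on the variance of set-size estimation from a uniform sample, neither of which involves $\delta$.

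Carrying this out, I first apply Lemma~\ref{lem:ignores-more-and-more-elements} with $p=p_0$. For every integer $k\geq 1$ and every $n\geq kN$ it yields $\mathbb{P}_n[\mathit{add}(M_E,t)=M_E \mid t\notin E]\geq 1-(1-p_0)^{k}=1-c^{k}$, where $c=1-p_0=1-\left(\frac{1}{2}-\delta\right)e^{-\varepsilon}$ is precisely the constant in the statement. Next I apply Lemma~\ref{lem:bad-variance}, but with its threshold ``$N$'' replaced by $kN$ and its parameter ``$p$'' replaced by $c^{k}$ (so that ``$1-p$'' there becomes $1-c^{k}$). Since the bound above holds for all $t$ and all $n\geq kN$, the hypothesis of Lemma~\ref{lem:bad-variance} is met, and it returns a variance lower bound of $\frac{1-c^{k}}{c^{k}}\left(n-kN\right)$ valid for every $n\geq kN$, i.e.\ for every $n\geq N$ and every $k\leq \frac{n}{N}$. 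This is exactly the claimed bound.

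It remains to deduce imprecision when $\delta<\frac12$. In that regime $p_0=\left(\frac12-\delta\right)e^{-\varepsilon}>0$, hence $c<1$, so $c^{-k}$ grows without bound in $k$ and the factor $\frac{1-c^{k}}{c^{k}}=c^{-k}-1$ is super-polynomial. Concretely, choosing $k=\left\lfloor \frac{n}{2N}\right\rfloor$ keeps $n-kN\geq \frac{n}{2}$ while forcing $c^{-k}$ to grow like $(c^{-1/(2N)})^{n}$, i.e.\ exponentially in $n$. Thus the variance lower bound grows exponentially in $n$, which dwarfs the $O(n^{2})$ growth that a bounded relative standard error $\frac{\sqrt{\mathbb{V}_n}}{n}$ would permit; the relative standard error is therefore unbounded and the estimator is not precise.

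The main obstacle is not in this chaining, which is mechanical, but in Lemma~\ref{lem:ignores-elements-with-delta} itself (deferred to Appendix~\ref{lem:ignores-elements-with-delta-proof}): passing from the additive $\delta$-relaxation of the privacy inequality to a multiplicative lower bound on the ``ignore'' probability. The subtlety is that, unlike the $\delta=0$ case where summing the privacy inequality over the set $\mathcal{I}_t=\{M\mid \mathit{add}(M,t)=M\}$ immediately gives $1\leq e^{\varepsilon}\,\mathbb{P}_n[\cdots\mid t\notin E]$, here the accumulated $\delta$ mass must be controlled carefully, which is what produces the $\frac12-\delta$ factor and the consequent $\delta<\frac12$ threshold. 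One point worth a line of verification when writing the final proof is that Lemma~\ref{lem:ignores-elements-with-delta} outputs a bound of exactly the deterministic form $\mathbb{P}_n[\mathit{add}(M_E,t)=M_E\mid t\notin E]\geq p_0$ required by Lemmas~\ref{lem:ignores-more-and-more-elements} and~\ref{lem:bad-variance}, so that these lemmas may be invoked verbatim.
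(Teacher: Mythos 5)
Your proof is correct and follows exactly the same route as the paper: the paper's own proof of Theorem~\ref{thm:negative-result-with-delta} is precisely the chaining of Lemma~\ref{lem:ignores-elements-with-delta} (giving the ignore-probability lower bound $p_0=(\tfrac{1}{2}-\delta)e^{-\varepsilon}$) into Lemmas~\ref{lem:ignores-more-and-more-elements} and~\ref{lem:bad-variance}, as you describe. Your added remarks on where the genuine difficulty lies (in Lemma~\ref{lem:ignores-elements-with-delta} itself) and on why $c<1$ forces imprecision are accurate and consistent with the paper.
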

\begin{proof}
This follows from
Lemmas~\ref{lem:ignores-elements-with-delta},~\ref{lem:ignores-more-and-more-elements}
and~\ref{lem:bad-variance}.
\end{proof}

\subsection{Averaging the privacy loss}

\noindent
Instead of requiring that the attacker's information gain is bounded by
$\varepsilon$ for every possible output, we could bound the \emph{average}
information gain. This is equivalent to accepting a larger privacy loss in some
cases, as long as other cases have a lower privacy loss.

This intuition is captured by the use of Kullback-Leiber divergence, which is
often used in similar
contexts~\cite{rebollo2010optimized,rebollo2010t,diaz2002towards,dwork2010boosting}.
In our case, we adapt it to maintain the asymmetry of our original privacy
definition. First, we give a formal definition the \emph{privacy loss} of a user
$t$ given output $M$.

\begin{defn}\label{def:privacy-loss}
Given a cardinality estimator, the \emph{positive privacy loss of $t$ given
output $M$ at cardinality $n$} is defined as
\[
\varepsilon_{n,t,M}=\max\left(\log\left(\frac{\mathbb{P}_{n}\left[M_E=M \mid t\in
E\right]}{\mathbb{P}_{n}\left[M_E=M \mid t\notin E\right]}\right),0\right).
\]
\end{defn}

This privacy loss is never negative: this is equivalent to discarding the case
where the attacker gains \emph{negative} information. Now, we bound this average
over all possible values of $M_E$, given $t\in E$.

\begin{defn}
A cardinality estimator satisfies \emph{$\varepsilon$-sketch average
privacy above cardinality $N$} if for every $n\geq N$ and $t\in\mathcal{U}$,
we have
\[
\sum_{M}\mathbb{P}_{n}\left[M_E=M \mid t\in E\right]\cdot\varepsilon_{n,t,M}\leq\varepsilon.
\]
\end{defn}

It is easy to check that $\varepsilon$-sketch average privacy above cardinality
$N$ is strictly weaker than $\varepsilon$-sketch privacy above cardinality $N$.
Unfortunately, this definition is also stronger than
$\left(\varepsilon_\delta,\delta\right)$-sketch privacy above cardinality $N$
for certain values of $\varepsilon$ and $\delta$, and as such,
Lemma~\ref{lem:ignores-elements-with-delta} also applies. We prove this in the
following lemma.

\begin{lem}\label{lem:average-implies-delta}
If a cardinality estimator satisfies $\varepsilon$-sketch average privacy above
cardinality $N$, then it also satisfies
$\left(\frac{\varepsilon}{\delta},\delta\right)$-sketch privacy above
cardinality $N$ for any $\delta>0$.
\end{lem}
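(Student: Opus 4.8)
The plan is to run the standard Markov-inequality argument on the privacy-loss random variable, adapted to the one-sided (positive) loss $\varepsilon_{n,t,M}$ used in the definition. Fix $n \geq N$ and $t \in \mathcal{U}$, and write $\varepsilon' = \varepsilon/\delta$. First I would isolate the sketches on which the positive privacy loss is large by setting $B = \{M \in \mathcal{M} : \varepsilon_{n,t,M} > \varepsilon'\}$. Since $\varepsilon_{n,t,M} \geq 0$ for every $M$ and, by the average-privacy hypothesis, its expectation under the conditional distribution $\mathbb{P}_{n}[\,\cdot \mid t \in E]$ is at most $\varepsilon$, Markov's inequality gives
\[
\mathbb{P}_{n}\left[M_E \in B \mid t \in E\right] \leq \frac{\sum_{M}\mathbb{P}_{n}\left[M_E = M \mid t \in E\right]\cdot \varepsilon_{n,t,M}}{\varepsilon'} \leq \frac{\varepsilon}{\varepsilon'} = \delta.
\]

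Next I would split an arbitrary $\mathcal{S} \subseteq \mathcal{M}$ as $(\mathcal{S}\cap B) \cup (\mathcal{S}\setminus B)$ and bound the two parts separately. On the complement of $B$, the definition of $\varepsilon_{n,t,M}$ as a maximum with $0$ gives $\varepsilon_{n,t,M} \geq \log\bigl(\mathbb{P}_{n}[M_E = M \mid t \in E]/\mathbb{P}_{n}[M_E = M \mid t \notin E]\bigr)$, so $M \notin B$ yields the per-sketch bound $\mathbb{P}_{n}[M_E = M \mid t \in E] \leq e^{\varepsilon'}\cdot \mathbb{P}_{n}[M_E = M \mid t \notin E]$. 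Summing this over $\mathcal{S}\setminus B$ and using the Markov bound on $\mathcal{S}\cap B \subseteq B$ gives
\[
\mathbb{P}_{n}\left[M_E \in \mathcal{S} \mid t \in E\right] \leq e^{\varepsilon'}\cdot \mathbb{P}_{n}\left[M_E \in \mathcal{S} \mid t \notin E\right] + \delta,
\]
which, since $\varepsilon' = \varepsilon/\delta$, is exactly $\left(\varepsilon/\delta,\delta\right)$-sketch privacy above cardinality $N$, as claimed.

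The only delicate point, and the place I would be careful, is the treatment of sketches $M$ for which $\mathbb{P}_{n}[M_E = M \mid t \notin E] = 0$ while $\mathbb{P}_{n}[M_E = M \mid t \in E] > 0$: there $\varepsilon_{n,t,M} = +\infty$, so such $M$ lie in $B$ automatically and are absorbed into the additive $\delta$ term, which is consistent with the argument (and indeed the finite-expectation hypothesis forces the total $\mathbb{P}_n[\cdot\mid t\in E]$-mass of such sketches to be small). Beyond this bookkeeping, the proof is routine; the substance is only that exchanging a multiplicative factor $e^{\varepsilon'}$ for an additive error $\delta$ is governed by Markov's inequality, which is precisely why the trade-off $\varepsilon' \cdot \delta = \varepsilon$ appears in the resulting parameters.
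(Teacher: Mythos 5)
Your proof is correct and is essentially the paper's argument run in the opposite direction: the paper proves the contrapositive (a failure of $(\varepsilon/\delta,\delta)$-sketch privacy yields a set of conditional mass $>\delta$ on which $\varepsilon_{n,t,M}>\varepsilon/\delta$, forcing the average above $\varepsilon$), while you apply Markov's inequality directly to the non-negative loss $\varepsilon_{n,t,M}$ and split $\mathcal{S}$ accordingly; the underlying computation $\varepsilon'\cdot\delta=\varepsilon$ is identical. Your explicit handling of the sketches with $\mathbb{P}_{n}[M_E=M\mid t\notin E]=0$ is a welcome bit of care that the paper leaves implicit.
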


The proof is given in Appendix~\ref{lem:average-implies-delta-proof}. This lemma
leads to a similar version of the negative result.

\begin{thm}
An unbiased cardinality estimator that satisfies $\varepsilon$-sketch
average privacy above cardinality $N$ has a variance at least
$\frac{1-c^{k}}{c^{k}}\left(n-k\cdot N\right)$ for any $n\leq N$ and
$k\leq\frac{n}{N}$, where $c = 1-\frac{e^{-4\varepsilon}}{4}$.
It is thus not precise.
\end{thm}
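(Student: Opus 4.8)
The plan is to reduce this statement to the negative results already established for the $(\varepsilon,\delta)$-relaxation, using Lemma~\ref{lem:average-implies-delta} as a bridge. Since $\varepsilon$-sketch average privacy above cardinality $N$ implies $\left(\frac{\varepsilon}{\delta},\delta\right)$-sketch privacy above cardinality $N$ for every $\delta>0$, I would first instantiate this implication at a fixed, conveniently chosen value of $\delta$ and then feed the resulting $(\varepsilon',\delta)$-guarantee into the machinery of the previous subsection. The whole argument is thus a chaining of four already-proved lemmas, with the only real input being the selection of $\delta$.

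Concretely, I would set $\delta=\frac{1}{4}$, so that Lemma~\ref{lem:average-implies-delta} yields $\left(4\varepsilon,\frac{1}{4}\right)$-sketch privacy above cardinality $N$. Applying Lemma~\ref{lem:ignores-elements-with-delta} with these parameters then gives, for every $t\in\mathcal{U}$ and $n\geq N$,
\[
\mathbb{P}_{n}\left[\mathit{add}\left(M_{E},t\right)=M_{E} \mid t\notin E\right]\geq\left(\frac{1}{2}-\frac{1}{4}\right)\cdot e^{-4\varepsilon}=\frac{e^{-4\varepsilon}}{4}.
\]
With the base ignoring probability $\frac{e^{-4\varepsilon}}{4}$ in hand, the remainder is identical to the deterministic case: Lemma~\ref{lem:ignores-more-and-more-elements} amplifies this to $\mathbb{P}_{n}\left[\mathit{add}\left(M_{E},t\right)=M_{E} \mid t\notin E\right]\geq 1-c^{k}$ for $n\geq k\cdot N$, where $c=1-\frac{e^{-4\varepsilon}}{4}$, and Lemma~\ref{lem:bad-variance}, applied with its threshold taken to be $k\cdot N$ and its parameter equal to $c^{k}$, converts this ignoring ratio into the variance lower bound $\frac{1-c^{k}}{c^{k}}\left(n-k\cdot N\right)$. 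Imprecision then follows exactly as in Theorem~\ref{thm:main-result}: for fixed $N$ and $\varepsilon$, choosing $k$ proportional to $n/N$ makes $\frac{1-c^{k}}{c^{k}}$, and hence the relative standard error, grow without bound.

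The only genuine design decision, and the place where I expect the bound to be lost relative to optimality, is the choice of $\delta$. Composing with Lemma~\ref{lem:ignores-elements-with-delta} produces a base probability of the form $\left(\frac{1}{2}-\delta\right)e^{-\varepsilon/\delta}$, whose $\delta$-maximizer satisfies $\delta^{2}+\varepsilon\delta-\frac{\varepsilon}{2}=0$; this root depends on $\varepsilon$ and admits no clean closed form. Rather than optimize, I would simply fix $\delta=\frac{1}{4}$, a legitimate if suboptimal choice that produces the stated constant $c=1-\frac{e^{-4\varepsilon}}{4}$ while keeping the statement clean. Since the conclusion is a lower bound on the variance holding for every admissible $k$, any valid $\delta$ already forces imprecision, so nothing beyond this single substitution is required.
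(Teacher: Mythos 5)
Your proposal is correct and follows essentially the same route as the paper: the paper's proof simply invokes Lemma~\ref{lem:average-implies-delta} with $\delta=\tfrac{1}{4}$ and then cites Theorem~\ref{thm:negative-result-with-delta}, whose own proof is precisely the chain of Lemmas~\ref{lem:ignores-elements-with-delta}, \ref{lem:ignores-more-and-more-elements} and~\ref{lem:bad-variance} that you unroll explicitly. Your observation that $\delta=\tfrac{1}{4}$ is a convenient but non-optimal choice is a nice extra remark, but does not change the argument.
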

\begin{proof}
This follows directly from Lemma~\ref{lem:average-implies-delta} with
$\delta=\frac{1}{4}$, and Theorem~\ref{thm:negative-result-with-delta}.
\end{proof}

Recall that all existing cardinality estimators satisfy our axioms and have a
bounded accuracy. Thus, an immediate corollary is that for all cardinality
estimators used in practice, there are some users for which the \emph{average}
privacy loss is very large.

\begin{rem}\label{rem:renyi-privacy}
This idea of \emph{averaging $\varepsilon$} is similar to the idea behind Rényi
differential privacy~\cite{mironov2017renyi}. The parameter $\alpha$ of Rényi
differential privacy determines the averaging method used (geometric mean,
arithmetic mean, quadratic mean, etc.). Using KL-divergence corresponds to
$\alpha=1$, while $\alpha=2$ averages all possible values of $e^\varepsilon$.
Increasing $\alpha$ strengthens the privacy
definition~\cite[Prop.~9]{mironov2017renyi}, so our negative result still
holds.
\end{rem}

\section{Privacy loss of individual users\label{sec:individual-users}}

So far, we only considered definitions of privacy that give the same guarantees
for all users. What if we allow certain users to have less privacy than others,
or if we were to average the privacy loss \emph{across users} instead of
averaging over all possible outcomes for each user?

Such definitions would generally not be sufficiently convincing to be used in
practice: one typically wants to protect \emph{all users}, not just a majority
of them. In this section, we show that even if we relax this requirement,
cardinality estimators would in practice leak a significant amount of
information.

\subsection{Allowing unbounded privacy loss for some users}

\noindent
What happens if we allow some users to have unbounded privacy loss? We could
achieve this by requiring the existence of a subset of users
$\mathcal{T}\subseteq\mathcal{U}$ of density $1-\delta$, such that every user
in $\mathcal{T}$ is protected by $\varepsilon$-sketch privacy above cardinality
$N$. In this case, a ratio $\delta$ of possible targets are not protected.

This approach only makes sense if the attacker cannot choose the target $t$. For
our attacker model, this might be realistic: suppose that the attacker wants to
target just one particular person. Since all user identifiers are hashed before
being passed to the cardinality estimator, this person will be associated to a
hash value that the attacker can neither predict nor influence. Thus, although
the attacker picks $t$, the true target of the attack is $h(t)$, which the
attacker cannot choose.

Unfortunately, this drastic increase in privacy risk for some users does not
lead to a large increase in accuracy. Indeed, the best possible use of this
ratio $\delta$ of users from an accuracy perspective would simply be to count
exactly the users in a sample of sampling ratio $\delta$.

Estimating the total cardinality based on this sample, similarly to the optimal
estimator in the proof of Lemma~\ref{lem:bad-variance}, leads to a variance of
$\frac{1-\delta}{\delta}\cdot\left(n-N\right)$. If $\delta$ is very small (say,
$\delta\simeq10^{-4}$), this variance is too large for counting small values of
$n$ (say, $n\simeq1000$ and $N\simeq100$). This is not surprising: if $99.99\%$
of the values are ignored by the cardinality estimator, we cannot expect it to
count values of $n$ on the order of thousands. But even this value of $\delta$
is larger than what is often used with
$\left(\varepsilon,\delta\right)$-differential privacy, where typically,
$\delta=o(1/n)$.

But in our running example, sketches must yield a reasonable accuracy both at
small and large cardinalities, if many sketches are aggregated. This implicitly
assumes that the service operates at a large scale, say with at least $10^{7}$
users. With $\delta=10^{-4}$, this means that thousands of users are not covered
by the privacy property. This is unacceptable for most applications.

\subsection{Averaging the privacy loss across users\label{subsec:averaging-epsilon}}

\noindent
Instead of requiring the same $\varepsilon$ for every user, we could require
that the \emph{average} information gain by the attacker is bounded by
$\varepsilon$. In this section, we take the example of HyperLogLog to show that
accuracy is not incompatible with this notion of average privacy, but that
cardinality estimators used in practice do not preserve privacy even if we
average across all users.

First, we define this notion of average information gain across users.

\begin{defn}
Recall the definition of the positive privacy loss $\varepsilon_{n,t,M}$ of $t$ given output $M$ at
cardinality $n$ from Definition~\ref{def:privacy-loss}:
The maximum privacy loss of $t$ at cardinalty $n$ is defined as
$\varepsilon_{n,t} = \max_{M}\left(\varepsilon_{n,t,M}\right)$.
A cardinality estimator satisfies \emph{$\varepsilon$-sketch privacy on average}
if we have, for all $n$,
$\frac{1}{\left|\mathcal{U}\right|}\sum_{t\in\mathcal{U}}\varepsilon_{n,t}\leq\varepsilon$.
\end{defn}

In this definition, we accept that some users might have less privacy as long as
the \emph{average} user satisfies our initial privacy definition.
Remark~\ref{rem:renyi-privacy} is still relevant: we chose to average over all
values of $\varepsilon$, but other averaging functions are possible and would
lead to strictly stronger definitions.

We show that HyperLogLog satisfies this definition and we consider the value of
$\varepsilon$ for various parameters and their significance. Intuitively, a
HyperLogLog cardinality estimator puts every element in a random \emph{bucket},
and each bucket counts the \emph{maximum number of leading zeroes} of elements
added in this bucket. More details are given in
Appendix~\ref{thm:hll:average-proof}.

HyperLogLog cardinality estimators have a parameter $p$ that determines its
memory consumption, its accuracy, and, as we will see, its level of average
privacy.

\begin{thm}\label{thm:hll:average}
Assuming a sufficiently large $\left|\mathcal{U}\right|$, a HyperLogLog
cardinality estimator of parameter $p$ satisfies $\varepsilon_{n}$-sketch
privacy above cardinality $N$ on average where for $N\ge n$,
\[
\varepsilon_{n} \simeq -\sum_{k\geq1}2^{-k}\log\left(1-{\left(1-2^{-p-k}\right)}^{n}\right).
\]
\end{thm}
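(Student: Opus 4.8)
The plan is to compute the maximum privacy loss $\varepsilon_{n,t}$ for each individual user $t$, and then to carry out the average $\frac{1}{|\mathcal{U}|}\sum_{t}\varepsilon_{n,t}$ explicitly. Recall that a HyperLogLog sketch of parameter $p$ consists of $2^p$ independent buckets, each recording the maximum \emph{rank} (the position of the first $1$-bit, i.e.\ one plus the number of leading zeroes) among the hashes of the elements falling into it. Hashing sends a uniformly random element into a uniform bucket and gives it rank $\rho$ with probability $2^{-\rho}$, so contributing a value strictly greater than $\rho$ to a fixed bucket has probability $2^{-p-\rho}$. I would first use the assumption of a large $|\mathcal{U}|$ to treat the $n$ hashed elements as independent, so that the sketch factorises over buckets. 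Fixing the bucket $b$ and rank $\rho$ of the target $t$, every bucket other than $b$ is then distributed identically whether $t\in E$ or $t\notin E$ (the one-element difference between the $n-1$ and $n$ co-occurring elements being absorbed by the $\simeq$), hence all other buckets cancel in the likelihood ratio and only the value of bucket $b$ matters.

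Next I would identify the sketch realising $\varepsilon_{n,t}=\max_M\varepsilon_{n,t,M}$. Writing $v$ for the value of bucket $b$ in $M$, there are three regimes. If $v<\rho$ then $\mathbb{P}_{n}[M_E=M \mid t\in E]=0$, since $t\in E$ forces bucket $b$ to be at least $\rho$, so the positive loss is $0$. If $v>\rho$, the target's contribution is not decisive: both conditional probabilities reduce to the chance that the \emph{other} elements produce a maximum of $v$ in bucket $b$, and their ratio is $\simeq 1$, giving negligible loss. The loss is therefore maximised at $v=\rho$, the ``smoking gun'' where observing exactly the target's rank is most indicative of $t\in E$. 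I expect this to be the main obstacle, because it must hold uniformly in the size of $\rho$ relative to the typical bucket value $\log_2(n/2^p)$: for $\rho$ well below this value the bucket is almost surely saturated and the loss at $v=\rho$ is itself close to $0$, while for $\rho$ well above it the loss is large, and one must check that in no regime does some $v>\rho$ overtake $v=\rho$.

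With the maximiser in hand, I would compute the ratio at $v=\rho$. Setting $x=2^{-p-\rho}$, the numerator $\mathbb{P}_{n}[\text{bucket } b=\rho \mid t\in E]$ is the probability that none of the remaining elements exceeds rank $\rho$, namely $(1-x)^{n-1}\simeq(1-x)^n$, while the denominator $\mathbb{P}_{n}[\text{bucket } b=\rho \mid t\notin E]=(1-x)^n-(1-2x)^n$ is the probability that the $n$ non-target elements attain a maximum of exactly $\rho$. This yields
\[
\varepsilon_{n,t}\simeq-\log\!\left(1-\left(\tfrac{1-2x}{1-x}\right)^{\!n}\right).
\]
Since $x$ is small I would approximate $\tfrac{1-2x}{1-x}\simeq 1-x$, the relative error being $O(x^2)$ and hence negligible in the only regime $nx=O(1)$ where the term is neither $0$ nor saturated, obtaining $\varepsilon_{n,t}\simeq-\log\!\left(1-(1-2^{-p-\rho})^{n}\right)$, which depends on $t$ only through its rank $\rho$.

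Finally I would average over users. For an idealised hash and large $|\mathcal{U}|$, a fraction $2^{-k}$ of users have rank $k$ and buckets are uniform, so $\frac{1}{|\mathcal{U}|}\sum_{t}\varepsilon_{n,t}$ converges to $\sum_{k\geq1}2^{-k}\bigl(-\log(1-(1-2^{-p-k})^{n})\bigr)$, the claimed expression for $\varepsilon_{n}$. The terms behave consistently: small $k$ (a saturated bucket) contribute $\simeq 0$, whereas large $k$ (a rare high-rank fingerprint) contribute a large loss, and the geometric weights $2^{-k}$ ensure convergence. Noting that $\varepsilon_{n}$ is decreasing in $n$ then identifies $\varepsilon_{N}$ as the relevant bound above cardinality $N$.
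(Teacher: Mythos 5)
Your proposal is correct and follows essentially the same route as the paper's proof: identify the worst-case sketch as the one where the target's bucket value equals its rank $\rho(t)$, compute the likelihood ratio there as approximately $\left(1-\left(1-2^{-p-\rho(t)}\right)^n\right)^{-1}$, and average over the geometric distribution of ranks. The only difference is cosmetic: you compute the single-bucket conditional probabilities directly as $(1-x)^{n-1}$ versus $(1-x)^n-(1-2x)^n$ and then approximate, whereas the paper reaches the same approximation through a witness decomposition of its REACH/MAX conditions.
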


The assumption that the set of possible elements is very large and its
consequences are explained in more detail in the proof of this theorem,
given in Appendix~\ref{thm:hll:average-proof}.

How does this positive result fit practical use cases?
Figure~\ref{fig:averagedepsilon} plots $\varepsilon_{n}$ for three different
HyperLogLog cardinality estimators. It shows two important results.

\begin{figure}
\begin{centering}
\includegraphics[height=4cm]{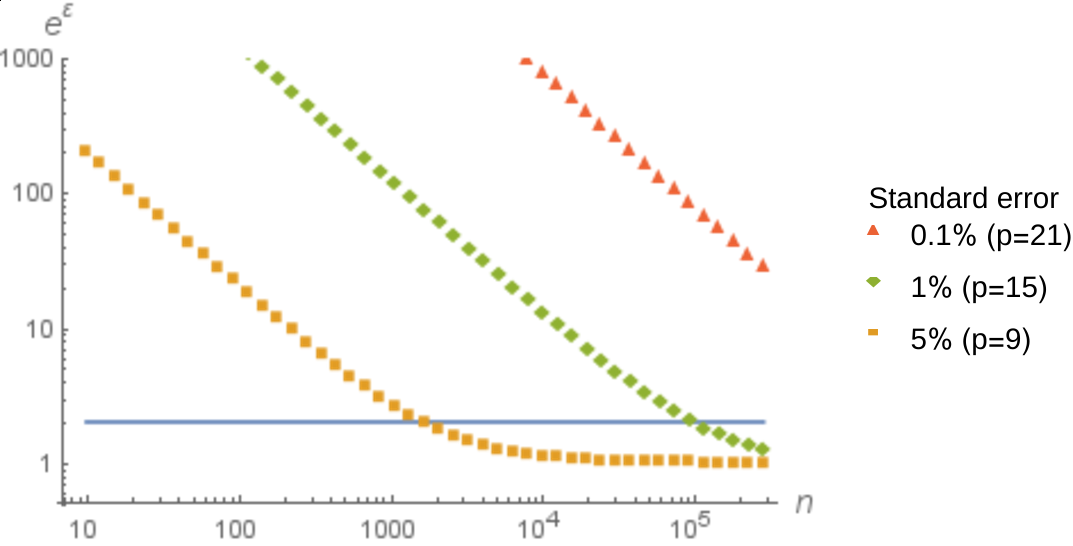}
\par\end{centering}
\caption{\label{fig:averagedepsilon}$\varepsilon_{n}$ as a function of $n$,
for HyperLogLog cardinality estimators of different $p$ parameters.
The blue line is $2$, corresponding to the commonly recommended value
of $\varepsilon=\ln\left(2\right)$ in differential privacy.}
\end{figure}

First, cardinality estimators used in practice do not preserve privacy. For
example, the default parameter used for production pipelines at Google and on
the BigQuery service~\cite{bigqueryhll} is $p=15$. For this value of $p$, an
attacker can determine with significant accuracy whether a target was added to a
sketch; not only in the worst case, but for the \emph{average }user too. The
average risk only becomes reasonable for $n\geq10,000$, a threshold too large
for most data analysis tasks.

Second, by sacrificing some accuracy, it is possible to obtain a reasonable
\emph{average privacy}. For example, a HyperLogLog sketch for which $p=9$ has a
relative standard error of about $5\%$, and an $\varepsilon_{n}$ of about $1$
for $n=1000$. Unfortunately, even when the average risk is acceptable, some
users will still be at a higher risk: users $e$ with a large number of leading
zeroes are much more identifiable than the average. For example, if $n=1000$,
there is a $98\%$ chance that at least one user has $\rho(e)\geq8$. For this
user, $\varepsilon_{n,t}\simeq5$, a very high value.

Our calculations yield only an approximation of $\varepsilon_{n}$ that is an
upper bound on the actual privacy loss in HyperLogLog sketches. However, these
alarming results can be confirmed experimentally. We simulated
$\mathbb{P}_{n}\left[\mathit{add}\left(M_{E},t\right)=M_{E} \mid t\notin
E\right]$, for uniformly random values of $t$, using HyperLogLog sketches with
the parameter $p=15$, the default used for production pipelines at Google and on
the BigQuery service~\cite{bigqueryhll}. For each cardinality $n$, we generated
10,000 different random target values, and added each one to 1,000 HyperLogLog
sketches of cardinality $n$ (generated from random values). For each target, we
counted the number of sketches that ignored it.

Figure~\ref{fig:ignoredelements} plots some percentile values. For example, the
all-targets curve (100th percentile) has a value of 33\% at cardinality $n$ =
10,000. This means that each of the 10,000 random targets was ignored by at most
33\% of the 1,000 random sketches of this cardinality, i.e.,
$\mathbb{P}_{n}\left[\mathit{add}\left(M_{E},t\right)=M_{E} \mid t\notin
E\right] \leq 33\%$ for all $t$. In other words, an attacker observes with at
least 67\% probability a change when adding a random target to a random sketch
that did not contain it. Similarly, the 10th-percentile at $n$ = 10,000 has a
value of 3.8\%. So 10\% of the targets were ignored by at most 3.8\% of the
sketches, i.e., $\mathbb{P}_{n}\left[\mathit{add}\left(M_{E},t\right)=M_{E} \mid
t\notin E\right] \leq 3.8\%$ for 10\% of all users $t$. That is, for the average
user $t$, there is a 10\% chance that a sketch with 10,000 elements changes with
likelihood at least 96.2\% when $t$ is first added.

\begin{figure}
\begin{centering}
\includegraphics[height=5cm]{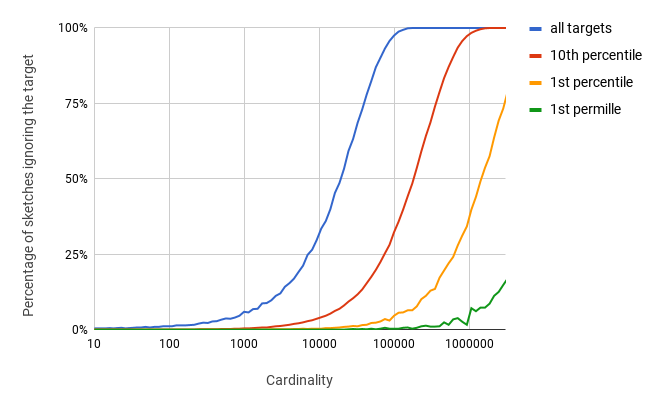}
\par\end{centering}
\caption{\label{fig:ignoredelements} Simulation of
$\mathbb{P}_{n}\left[\mathit{add}\left(M_{E},t\right)=M_{E} \mid t\notin E\right]$,
for uniformly chosen values of $t$, using HyperLogLog sketches with parameter
$p=15$.}
\end{figure}

For small cardinalities ($n<1,000$), adding an element that has not yet been
added to the sketch will almost certainly modify the sketch: an attacker
observing that a sketch does not change after adding $t$ can deduce with
near-certainty that $t$ was added previously.

Even for larger cardinalities, there is always a constant number of people with
high privacy loss. For $n$ = 1,000, no target was ignored by more than
5.5\% of the sketches. For $n$ = 10,000, 10\% of the users were ignored by at
most 3.8\% of the sketches. Similarly, the 1st percentile at $n$ = 100,000 and
the 1st permille at $n$ = 1,000,000 are 4.6\% and 4.5\%, respectively. In
summary, across all cardinalities $n$, at least 1,000 users $t$ have
$\mathbb{P}_{n}\left[\mathit{add}\left(M_{E},t\right)=M_{E} \mid t\notin
E\right] \leq 0.05$. For these users, the corresponding privacy loss is
$e^\varepsilon=\frac{1}{0.055} \simeq 18$. Concretely, if the attacker initially
believes that $\mathbb{P}_{n}\left[t\in E\right]$ is 1\%, this number grows to
15\% after observing that $\mathit{add}\left(M_{E},t\right)=M_{E}$. If it is
initially 10\%, it grows to 66\%. And if it is initially 25\%, it grows to 86\%.

\section{Mitigation strategies\label{sec:discussion}}

\noindent
A corollary of Theorem~\ref{thm:main-result} and of our analysis of
Section~\ref{subsec:averaging-epsilon} is that the cardinality estimators used
in practice do not preserve privacy. How can we best protect cardinality
estimator sketches against insider threats,in realistic settings? Of course,
classical data protection techniques are relevant: encryption, access controls,
auditing of manual accesses, etc. But in addition to these best practices,
cardinality estimators like HyperLogLog allow for specific risk mitigation
techniques, which restrict the attacker's capabilities.

\subsection{Salting the hash function with a secret\label{subsec:salting-the-hash}}

\noindent
As explained in Section~\ref{subsec:def-cardinality-estimators}, most
cardinality estimators use a hash function $h$ as the first step of the
$\mathit{add}$ operation: $\mathit{add}\left(M,t\right)$ only depends on $M$ and
the hash value $h(t)$. This hash can be salted with a secret value. This salt can
be made inaccessible to humans, with access controls restricting access to
production binaries compiled from trusted code. Thus, an adversary cannot learn
all the relevant parameters of the cardinality estimator and can no longer add
users to sketches. Of course, to avoid a salt reconstruction attack, a
cryptographic hash function must be used.

The use of a salt does not hinder the usefulness of sketches: they can still be
merged (for all cardinality estimators given as examples in
Section~\ref{subsec:def-cardinality-estimators}) and the cardinality can still
be estimated without accuracy loss. However, if an attacker gains direct access
to a sketch $M$ with the aim of targeting a user $t$ and does not know the
secret salt, then she cannot compute $h\left(t\right)$ and therefore  cannot
compute $\mathit{add}\left(M,t\right)$. This prevents the previous obvious
attack of adding $t$ to $M$ and observing whether the result is different.

However, this solution has two issues.

\begin{description}
\item[Secret salt rotation] The secret salt must be the same for all sketches as
otherwise sketches cannot be merged.   Indeed, if a hash function $h_{1}$ is
used to create a sketch $M_{1}$ and $h_{2}$ is used to create $M_{2}$, then if
$h_{1}\left(t\right)\neq h_{2}\left(t\right)$ for some $t$ that is added to both
$M_{1}$ and $M_{2}$, $t$ will be seen as a \emph{different user} in $M_{1}$ and
$M_{2}$: the cardinality estimator no longer ignores duplicates. Good key
management practices also recommend regularly rotating secret keys. In this
context, changing the key requires recomputing all previously computed sketches.
This requires keeping the original raw data, makes pipelines more complex, and
can be computationally costly.
\item[Sketch intersection] For most cardinality estimators given as examples in
Section~\ref{subsec:def-cardinality-estimators}, it is possible for an attacker
to guess $h\left(t\right)$ from a family of sketches ($M_{1}$, \ldots, $M_{k}$)
for which the attacker \emph{knows} that $t\in M_{1}$. For example, intersecting
the lists stored in K-Minimum Values sketches can provide information on which
hashes come from users that have been in \emph{all} sketches. For HyperLogLog,
one can use the leading zeroes in non-empty buckets to get partial information
on the hash value of users who are in all sketches. Moreover,
HyperLogLog++~\cite{heule2013hyperloglog} has a \emph{sparse mode} that stores
full hashes when the sketch contains a small number of values; this makes
intersection attacks even easier.

Intersection attacks are realistic, although they are significantly more complex
than simply checking if $\mathit{add}\left(M,t\right)=M$. In our running
example, sketches come from counting users across locations and time periods. If
an internal attacker wants to target someone she knows, she can gather
information about where they went using side channels like social media posts.
This gives her a series of sketches $M_{1}, \ldots, M_{k}$ that she \emph{knows}
her target belongs to, and from these, she can get information on $h(t)$ and use
it to perform an attack equivalent to checking whether
$\mathit{add}\left(M,t\right)=M$.
\end{description}

Another possible risk mitigation technique is homomorphic encryption. Each
sketch could be encrypted in a way that allows sketches to be merged, and new
elements to be added; while ensuring that an attacker cannot do any operation
without some secret key. Homomorphic encryption typically has significant
overhead, so it is likely to be too costly for most use cases. Our impossibility
results assume a computationally unbounded attacker; however, it is possible
that an accurate sketching mechanism using homomorphic encryption could provide
privacy against \emph{polynomial-time} attackers. We leave this area of research
for future work.

\subsection{Using a restricted API\label{subsec:restricted-api}}

\noindent
Using cardinality estimator sketches to perform data analysis tasks only
requires access to two operations: $\mathit{merge}$ and $\mathit{estimate}$. So
a simple option is to process the sketches over an API that only allows this
type of operation. One option is to provide a SQL engine on a database, and only
allow SQL functions that correspond to $\mathit{merge}$ and $\mathit{estimate}$
over the column containing sketches. In the BigQuery SQL engine, this
corresponds to allowing HLL\_COUNT.MERGE and HLL\_COUNT.EXTRACT functions, but
not other functions over the column containing sketches~\cite{bigqueryhll}.
Thus, the attacker cannot access the raw sketches.

Under this technique, an attacker who only has access to the API can no longer
directly check whether $\mathit{add}\left(M,t\right)=M$. Since she does not have
access to the sketch internals, she cannot perform the intersection attack
described previously either. To perform the check, her easiest option is to
impersonate her target within the service, interact with the service so that a
sketch $M_{\{t\}}$ containing \emph{only} her target is created in the sketch
database, and compare the estimates obtained from $M$ and
$\mathit{merge}\left(M,M_{\{t\}}\right)$.
Following the intuition given in Section~\ref{subsec:main-result-deterministic},
if these estimates are the same, then the target is more likely to be in the
dataset. How much information the attacker gets this way depends on
$\mathbb{P}\left[\mathit{estimate}\left(\mathit{add}\left(M_{E},t\right)\right)
    = \mathit{estimate}\left(M_{E}\right) \mid t\notin E\right]$.
We can increase this quantity by rounding the result of the $\mathit{estimate}$
operation, thus limiting the accuracy of the external attacker. This would make
the attack described in this work slightly more difficult to execute, and less
efficient. However, it is likely that the attack could be adapted, for example
by repeating it multiple times with additional fake elements.

This risk mitigation technique can be combined with the previous one. The
restricted API protects the sketches during normal use by data analysts, i.e.,
against external attackers. The hash salting mitigates the risk of manual access
to the sketches, e.g., by internal attackers. This type of direct access is not
needed for most data analysis tasks, so it can be monitored via other means.

\section{Conclusion}

We formally defined a class of cardinality estimator algorithms with an
associated system and attacker model that captures the risks associated with
processing personal data in cardinality estimator sketches. Based on this model,
we proposed a privacy definition that expresses that the attacker cannot gain
significant knowledge about a given target.

We showed that our privacy definition, which is strictly weaker than any
reasonable definition used in practice, is incompatible with the accuracy and
aggregation properties required for practical uses of cardinality estimators. We
proved similar results for even weaker definitions, and we measured the privacy
loss associated with the HyperLogLog cardinality estimator, commonly used in
data analysis tasks.

Our results show that designing accurate privacy-preserving cardinality
estimator algorithms is impossible, and that the cardinality estimator sketches
used in practice should be considered as sensitive as raw data. These negative
results are a consequence of the \emph{structure} imposed on cardinality
estimators: idempotence, commutativity, and existence of a well-behaved merge
operation. This result shows a fundamental incompatibility between accurate
aggregation and privacy. A natural question is ask whether other sketching
algorithms have similar incompatibilities, and what are minimal axiomatizations
that lead to similar impossibility results.

\section{Acknowledgements}

The authors thank Jakob Dambon for providing insights which helped us prove
Lemma~\ref{lem:bad-variance}, Esfandiar Mohammadi for numerous fruitful
discussions, as well as Christophe De Cannière, Pern Hui Chia, Chao Li, Aaron
Johnson and the anonymous reviewers for their helpful comments. This work was
partially funded by Google, and done while Andreas Lochbihler was at ETH Zurich.

\bibliographystyle{ACM-Reference-Format}
\bibliography{biblio}

\appendix

\section{Proof of Lemma~\ref{lem:merge-properties}\label{lem:merge-properties-proof}}

Let $M$ and $M^\prime$ be two sketches. If $E=\left\{ e_{1},\ldots,e_{k}\right\} $,
we denote by $\mathit{add}\left(M,E\right)$ the result of adding elements of $E$
successively to $M$:
$\mathit{add}\left(M,E\right)=\mathit{add}\left(\dots\mathit{add}\left(\mathit{add}\left(M,e_{1}\right),e_{2}\right)\dots,e_{k}\right)$.

Let $E_{1}$ and $E_{2}$ be two sets such that $M_{E_{1}}=M_{E_{2}}=M^{\prime}$,
and let $E$ be a set such as $M_{E}=M$. Then
$\mathit{add}\left(M,E_{1}\right)=\mathit{add}\left(\mathit{add}\left(M_{\varnothing},E\right),E_{1}\right)$.
Using the two properties of the add function, we get
$\mathit{add}\left(M,E_{1}\right)=\mathit{add}\left(\mathit{add}\left(M_{\varnothing},E_{1}\right),E\right)=\mathit{add}\left(M^{\prime},E\right)$.
The same reasoning leads to
$\mathit{add}\left(M,E_{2}\right)=\mathit{add}\left(M,E_{1}\right)$: the merge
function does not depend on the choice of $E$.

In addition, note that
$\mathit{merge}\left(M,M^{\prime}\right)=\mathit{add}\left(M_{\varnothing},E\cup E_{1}\right)$.
Thus, commutativity, associativity, idempotence, and neutrality follow directly
from the same properties of set union.

\section{Full proof of Lemma~\ref{lem:bad-variance}\label{lem:bad-variance-proof}}

The proof is decomposed into three steps. In Step 1, we fix the first $N$
elements added to the sketch, and we bound the variance of the estimator that
has these elements as initial input. In Step 2, we explicitly compute the
probability for an element $t$ to be ignored by $M_E$, first when $t$ is fixed,
then when $E$ is fixed. We then use these results in Step 3 to average the bound
over all possible choices for the $N$ elements in $E$, which gives us an overall
bound.

\vspace{3mm}\noindent\emph{Step 1: Bound the estimator variance.}

Let $E\in\mathcal{P}_{N}\left(\mathcal{U}\right)$. Let $S_{E}=\{ t\in\mathcal{U} \mid
\mathit{add}\left(M_{E},t\right)\neq M_{E}\}$ be the set of elements that
are \emph{not} ignored by $M_{E}$. Let
$p_{E}=\frac{\left|S_{E}\right|}{\left|\mathcal{U}\right|}$, or equivalently,
let $p_{E}=\mathbb{P}_{t}\left[\mathit{add}\left(M_{E},t\right)\neq M_{E} \mid
t\notin E\right]$, where $\mathbb{P}_{t}$ is the distribution that picks $t$
uniformly randomly in $\mathcal{U}$.

$S_{E}$ can be seen as the \emph{sampling set} of the estimator with
$E$ as initial input, while $p_{E}$ can be seen as its \emph{sampling
fraction}: all other elements are discarded, so the estimator only
has access to elements in $E$ and $S_{E}$ to compute its estimate.

The optimal estimator to minimize variance in this context simply counts the
exact number of distinct elements in the sample (remembering each one), and
divides this number by $p_{E}$ to estimate the total number of distinct
elements.\footnote{The optimality of an estimator under these constraints is
proven e.g.\ in~\cite{casella2002statistical}.}

What is the variance $\mathbb{V}_{n|E}$ of this optimal estimator?
Suppose we added $n-N$ elements after reaching the sampling part.
The number of elements in the sample is a random variable with variance
$p_{E}\left(1-p_{E}\right)\left(n-N\right)$. Dividing this random variable by
$p_{E}$ gives a variance of $\frac{1-p_{E}}{p_{E}}\cdot\left(n-N\right)$.
Thus
\[
\mathbb{V}_{n|E}\geq\frac{1-p_{E}}{p_{E}}\cdot\left(n-N\right).
\]

Since the first $N$ elements are chosen uniformly at random, the variance
of the overall estimator is bounded by their average. If we denote by
$\mathcal{P}_{N}\left(\mathcal{U}\right)$ the set of all possible subsets
$E\subseteq\mathcal{U}$ of cardinality $N$, we have:
\begin{equation}
\mathbb{V}_{n}\geq\text{avg}_{E\in\mathcal{P}_{N}\left(\mathcal{U}\right)}\frac{1-p_{E}}{p_{E}}\left(n-N\right)\label{eq:average-variance}
\end{equation}
(where $\text{avg}$ stands for the average).

\vspace{3mm}\noindent\emph{Step 2: Intermediary results.}

Fix $E\in\mathcal{P}_{N}\left(\mathcal{U}\right)$. Denoting by $1_X$ the
function whose value is 1 if $X$ is satisfied and 0 otherwise,
\begin{align}
  \nonumber
  \mathbb{P}_{t}&\left[\mathit{add}\left(M_{E},t\right)\neq M_{E} \mid t\notin E\right] 
  \\
  \nonumber
  & {} = \frac{\mathbb{P}_{t}\left[\mathit{add}\left(M_{E},t\right)\neq M_{E}\wedge t\notin E\right]}{\mathbb{P}_{t}\left[t\notin E\right]} 
  \\
  & {} = \frac{\left|\mathcal{U}\right|}{\left|\mathcal{U}\right|-N}\cdot\text{avg}_{t\in\mathcal{U}}\left(1_{\mathit{add}\left(M_{E},t\right)\neq M_{E}}\right).
  \label{eq:E-fixed}
\end{align}
Indeed, $\mathbb{P}_{t}\left[t\notin E\right]=\frac{\left|\mathcal{U}\right|-N}{\left|\mathcal{U}\right|}$
is straightforward, and since $t\in E$ implies $\mathit{add}\left(M_{E},t\right)=M_{E}$,
the condition $\mathit{add}\left(M_{E},t\right)\neq M_{E}\wedge t\notin E$
can be simplified to $\mathit{add}\left(M_{E},t\right)\neq M_{E}$.

Now, fix $t\in\mathcal{U}$. Then
\begin{align}
  \nonumber
  \mathbb{P}_{N}&\left[\mathit{add}\left(M_{E},t\right)\neq M_{E} \mid t\notin E\right] 
  \\
  \nonumber
  & {} = \frac{\mathbb{P}_{N}\left[\mathit{add}\left(M_{E},t\right)\neq M_{E}\wedge t\notin E\right]}{\mathbb{P}_{N}\left[t\notin E\right]}
  \\
  & {} = \frac{\left|\mathcal{U}\right|}{\left|\mathcal{U}\right|-N}\cdot\text{avg}_{E\in\mathcal{P}_{N}\left(E\right)}\left(1_{\mathit{add}\left(M_{E},t\right)\neq M_{E}}\right).
  \label{eq:t-fixed}
\end{align}

Indeed, we have
\begin{align*}
  \mathbb{P}_{N}\left[t\notin E\right] 
  &
  {} = \binom{\left|\mathcal{U}\right|-1}{N}\cdot\binom{\left|\mathcal{U}\right|}{N}^{-1}
  \\
  & {} = \frac{\left(\left|\mathcal{U}\right|-1\right)!}{\left(\left|\mathcal{U}\right|-1-N\right)!N!}\cdot\frac{\left(\left|\mathcal{U}\right|-N\right)!N!}{\left|\mathcal{U}\right|!}
  = \frac{\left|\mathcal{U}\right|-N}{\left|\mathcal{U}\right|}
\end{align*}

\vspace{3mm}\noindent\emph{Step 3: Conclude using convexity.}

We now prove that
$\text{avg}_{E\in\mathcal{P}_{N}\left(\mathcal{U}\right)}\left(p_{E}\right)\leq p$.
Our initial hypothesis states that $p\geq\mathbb{P}_{N}\left[\mathit{add}\left(M_{E},t\right)\neq M_{E} \mid t\notin E\right]$ for all $t$.
We can average this for every $t$ and use~\eqref{eq:t-fixed} and~\eqref{eq:E-fixed}:
\begin{eqnarray}
p & \geq &
\text{avg}_{t\in\mathcal{U}}\left(\mathbb{P}_{N}\left[\mathit{add}\left(M_{E},t\right)\neq
M_{E} \mid t\notin E\right]\right)\nonumber \\
 & \overset{\eqref{eq:t-fixed}}{=} & \text{avg}_{t\in\mathcal{U}}\left(\frac{\left|\mathcal{U}\right|}{\left|\mathcal{U}\right|-N}\cdot\text{avg}_{E\in\mathcal{P}_{N}\left(E\right)}\left(1_{\mathit{add}\left(M_{E},t\right)\neq M_{E}}\right)\right)\nonumber \\
 & = & \text{avg}_{E\in\mathcal{P}_{N}\left(\mathcal{U}\right)}\left(\frac{\left|\mathcal{U}\right|}{\left|\mathcal{U}\right|-N}\cdot\text{avg}_{t\in\mathcal{U}}\left(1_{\mathit{add}\left(M_{E},t\right)\neq M_{E}}\right)\right)\nonumber \\
 & \overset{\eqref{eq:E-fixed}}{=} & \text{avg}_{E\in\mathcal{P}_{N}\left(\mathcal{U}\right)}\left(\mathbb{P}_{t}\left[\mathit{add}\left(M_{E},t\right)\neq M_{E} \mid t\notin E\right]\right)\nonumber \\
 & \geq & \text{avg}_{E\in\mathcal{P}_{N}\left(\mathcal{U}\right)}\left(p_{E}\right)\label{eq:averaging-p}
\end{eqnarray}

Now, using~\eqref{eq:average-variance} and~\eqref{eq:averaging-p} along
with the fact that the function
$x\longrightarrow\frac{1-x}{x}\cdot\left(n-N\right)$ is convex and decreasing on
$\left(0,1\right)$, we can conclude by Jensen's inequality that
\[
\mathbb{V}_{n}\geq\frac{1-p}{p}\left(n-N\right).
\]

\section{Proof of Lemma~\ref{lem:ignores-elements-with-delta}\label{lem:ignores-elements-with-delta-proof}}

First, we show that if a cardinality estimator verifies
$\left(\varepsilon,\delta\right)$-sketch privacy at a given cardinality, then
for each target $t$, we can find an explicit decomposition of the possible
outputs: the bound on information gain is satisfied for each possible output,
except on a density $\delta$. This is similar to the definition of
\emph{probabilistic} differential privacy in~\cite{machanavajjhala2008privacy}
and~\cite{gotz2012publishing}, except the decomposition depends on the choice of
$t$. We then use this decomposition to prove a variant of our negative result.

\begin{lem}\label{lem:epsilon-delta-decomposition}
If a cardinality estimator satisfies $\left(\varepsilon,\delta\right)$-sketch
privacy above cardinality $N$, then for every $n\geq N$ and $t\in\mathcal{U}$,
we can decompose the space of possible sketches
$\mathcal{M}=\mathcal{M}_{1}\uplus\mathcal{M}_{2}$ such that
$\mathbb{P}_{n}\left[M_{E}\in\mathcal{M}_{2} \mid t\in E\right]\leq2\delta$, and
for all $M\in\mathcal{M}_{1}$:
\[
\mathbb{P}_{n}\left[M_E=M \mid t\in E\right]
  \leq 2e^{\varepsilon}\cdot\mathbb{P}_{n}\left[M_E=M \mid t\notin E\right].
\]
\end{lem}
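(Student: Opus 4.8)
The plan is to define the exceptional set $\mathcal{M}_2$ directly as the set of outputs whose pointwise likelihood ratio is too large, and then feed this very set back into the set-level $(\varepsilon,\delta)$ guarantee to control its probability mass. Concretely, I would set
\[
\mathcal{M}_2 = \left\{ M \in \mathcal{M} \;\middle|\; \mathbb{P}_{n}\left[M_E=M \mid t\in E\right] > 2e^{\varepsilon}\cdot\mathbb{P}_{n}\left[M_E=M \mid t\notin E\right] \right\}
\]
and $\mathcal{M}_1 = \mathcal{M}\setminus\mathcal{M}_2$. With this definition the desired pointwise bound $\mathbb{P}_{n}[M_E=M\mid t\in E]\leq 2e^{\varepsilon}\cdot\mathbb{P}_{n}[M_E=M\mid t\notin E]$ holds on $\mathcal{M}_1$ \emph{by construction}, so the only thing left to prove is the mass bound $\mathbb{P}_{n}[M_E\in\mathcal{M}_2\mid t\in E]\leq 2\delta$. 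The key design choice is the threshold $2e^{\varepsilon}$ rather than $e^{\varepsilon}$: this is what creates the slack needed to convert the set-level hypothesis into a pointwise statement.

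The heart of the argument is a short two-sided estimate on the mass of $\mathcal{M}_2$. On one hand, summing the defining strict inequality over all $M\in\mathcal{M}_2$ gives
\[
\mathbb{P}_{n}\left[M_E\in\mathcal{M}_2 \mid t\in E\right] > 2e^{\varepsilon}\cdot\mathbb{P}_{n}\left[M_E\in\mathcal{M}_2 \mid t\notin E\right].
\]
On the other hand, instantiating $(\varepsilon,\delta)$-sketch privacy with the admissible choice $\mathcal{S}=\mathcal{M}_2$ gives
\[
\mathbb{P}_{n}\left[M_E\in\mathcal{M}_2 \mid t\in E\right] \leq e^{\varepsilon}\cdot\mathbb{P}_{n}\left[M_E\in\mathcal{M}_2 \mid t\notin E\right] + \delta.
\]
Using the first inequality to bound $e^{\varepsilon}\cdot\mathbb{P}_{n}[M_E\in\mathcal{M}_2\mid t\notin E]$ above by $\frac{1}{2}\mathbb{P}_{n}[M_E\in\mathcal{M}_2\mid t\in E]$ and substituting into the second yields $\mathbb{P}_{n}[M_E\in\mathcal{M}_2\mid t\in E] < \frac{1}{2}\mathbb{P}_{n}[M_E\in\mathcal{M}_2\mid t\in E] + \delta$, and rearranging gives $\mathbb{P}_{n}[M_E\in\mathcal{M}_2\mid t\in E] < 2\delta$, as required.

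I do not expect a serious obstacle: the statement is the standard conversion from an $(\varepsilon,\delta)$-style average guarantee to a \emph{probabilistic} pointwise guarantee holding on all but a small-mass set, in the spirit of the cited works~\cite{machanavajjhala2008privacy,gotz2012publishing}. The only point requiring care is matching the constant factors, so that the self-referential inequality closes with exactly $2\delta$ on the exceptional mass; this is precisely why the pointwise threshold must be $2e^{\varepsilon}$. I would also remark that $\mathcal{M}_2$ is a legitimate instance of $\mathcal{S}$ (the definition quantifies over all $\mathcal{S}\subseteq\mathcal{M}$), and that degenerate terms cause no trouble: any $M$ with $\mathbb{P}_{n}[M_E=M\mid t\notin E]=0$ but $\mathbb{P}_{n}[M_E=M\mid t\in E]>0$ automatically lands in $\mathcal{M}_2$, while any $M$ with $\mathbb{P}_{n}[M_E=M\mid t\in E]=0$ lands in $\mathcal{M}_1$ and contributes nothing to either estimate.
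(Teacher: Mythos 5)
Your proposal is correct and follows essentially the same route as the paper's proof: both define the exceptional set by the pointwise likelihood ratio exceeding $2e^{\varepsilon}$ (the paper writes this as $e^{\varepsilon'}$ with $\varepsilon'=\varepsilon+\ln 2$), sum the pointwise inequalities over that set, and play the result against the $(\varepsilon,\delta)$ guarantee instantiated on that same set. The only difference is cosmetic --- you argue directly while the paper assumes mass at least $2\delta$ and derives a contradiction by averaging the two inequalities --- and your direct phrasing, together with the explicit treatment of zero-probability outputs, is if anything slightly cleaner.
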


\begin{proof}
Suppose the cardinality estimator satisfies $\left(\varepsilon,\delta\right)$-sketch
privacy at cardinality $N$, and fix $n\geq N$ and $t\in\mathcal{U}$.
Let $\varepsilon^{\prime}=\varepsilon+\ln\left(2\right)$.

Let $\mathcal{M}_1$ be the set of outputs for which the privacy loss is higher
than $\varepsilon^{\prime}$. Formally, $\mathcal{M}_1$ is the set of sketches
$M$ that satisfy
\[
\mathbb{P}_{n}\left[M_E=M \mid t\in E\right]>e^{\varepsilon^{\prime}}\cdot\mathbb{P}_{n}\left[M_E=M \mid t\notin E\right].
\]
We show that $\mathcal{M}_1$ has a density bounded by $2\delta$.

Suppose for the sake of contradiction that this set has density at least $2\delta$ given
that $t\in E$:
\[
\mathbb{P}_{n}\left[M\in\mathcal{\mathcal{M}}_{1} \mid t\in E\right]\geq2\delta.
\]
We can sum the inequalities in $\mathcal{M}_1$ to obtain
\[
\mathbb{P}_{n}\left[M_{E}\in\mathcal{M}_{1} \mid t\in E\right]>e^{\varepsilon^{\prime}}\cdot\mathbb{P}_{n}\left[M_{E}\in\mathcal{M}_{1} \mid t\notin E\right].
\]
Averaging both inequalities, we get
\[
\mathbb{P}_{n}\left[M_{E}\in\mathcal{M}_{1} \mid t\in E\right]>e^{\varepsilon}\cdot\mathbb{P}_{n}\left[M_{E}\in\mathcal{M}_{1} \mid t\notin E\right]+\delta
\]
since $e^{\varepsilon^{\prime}}=e^{\varepsilon+\ln\left(2\right)}=2\cdot
e^{\varepsilon}$. This contradicts the hypothesis that the
cardinality estimator satisfied $\left(\varepsilon,\delta\right)$-sketch privacy
at cardinality $N$.

Thus, $\mathbb{P}_{n}\left[M\in\mathcal{M}_{1} \mid t\in E\right]<2\delta$ and
by the definition of $\mathcal{M}_{1}$, every output
$M\in\mathcal{M}_{2}=\mathcal{M}\mathbin{\backslash}\mathcal{M}_{1}$
verifies:
\[
\mathbb{P}_{n}\left[M_E=M \mid t\in E\right]
   \leq e^{\varepsilon^{\prime}}\cdot\mathbb{P}_{n}\left[M_E=M \mid t\notin E\right].
\]
\end{proof}

We can then use this decomposition to prove
Lemma~\ref{lem:ignores-elements-with-delta}.
Lemma~\ref{lem:epsilon-delta-decomposition} allows us to get two sets
$\mathcal{M}_{1}$ and $\mathcal{M}_{2}$ such that:
\begin{itemize}
\item $\mathbb{P}_{n}\left[M_{E}\in\mathcal{M}_{2} \mid t\in E\right]\leq2\delta$; and
\item for all $M\in\mathcal{M}_{1}$:
\[
\mathbb{P}_{n}\left[M_E=M \mid t\in E\right]\leq
    2e^{\varepsilon}\cdot\mathbb{P}_{n}\left[M_E=M \mid t\notin E\right].
\]
\end{itemize}

We decompose $\mathbb{P}_{n}\left[\mathit{add}\left(M_{E},t\right)=M_{E} \mid t\in E\right]$
into
\begin{multline*}
  \mathbb{P}_{n}\left[\mathit{add}\left(M_{E},t\right)=M_{E}\wedge M_{E}\in\mathcal{M}_{1} \mid t\in E\right]
  \\
  {} + \mathbb{P}_{n}\left[\mathit{add}\left(M_{E},t\right)=M_{E}\wedge M_{E}\in\mathcal{M}_{2} \mid t\in E\right].
\end{multline*}
The same reasoning as in the proof of Lemma~\ref{lem:ignores-elements}
gives
\begin{align*}
\lefteqn{\mathbb{P}_{n}\left[\mathit{add}\left(M_{E},t\right)=M_{E}\wedge M_{E}\in\mathcal{M}_{1} \mid t\in E\right]}\\
 & \leq 2e^{\varepsilon}\cdot\mathbb{P}_{n}\left[\mathit{add}\left(M_{E},t\right)=M_{E}\wedge M_{E}\in\mathcal{M}_{1} \mid t\notin E\right]\\
 & \leq 2e^{\varepsilon}\cdot\mathbb{P}_{n}\left[\mathit{add}\left(M_{E},t\right)=M_{E} \mid t\notin E\right]
\end{align*}
and since $\mathbb{P}_{n}\left[M_{E}\in\mathcal{M}_{2} \mid t\in E\right]\leq2\delta$,
we immediately have
\[
\mathbb{P}_{n}\left[\mathit{add}\left(M_{E},t\right)=M_{E}\wedge M\in\mathcal{M}_{2} \mid t\in E\right]\leq2\delta.
\]
We conclude that
\begin{multline*}
  \mathbb{P}_{n}\left[\mathit{add}\left(M_{E},t\right)=M_{E} \mid t\in E\right]
  \\
  {} \leq 
  2e^{\varepsilon}\cdot\mathbb{P}_{n}\left[\mathit{add}\left(M_{E},t\right)=M_{E} \mid t\notin E\right]+2\delta.
\end{multline*}
Now, Lemma~\ref{ce-property} gives $\mathbb{P}\left[\mathit{add}\left(M_{E},t\right)=M_{E} \mid t\in E\right]=1$,
and finally $\mathbb{P}_{n}\left[\mathit{add}\left(M_{E},t\right)=M_{E} \mid
t\notin E\right]\geq\left(\frac{1}{2}-\delta\right)\cdot e^{-\varepsilon}$.

\section{Proof of Lemma~\ref{lem:average-implies-delta}\label{lem:average-implies-delta-proof}}

Let $n\geq N$ and $\delta>0$. Suppose that a cardinality estimator
does not satisfy $\left(\frac{\varepsilon}{\delta},\delta\right)$-sketch
privacy at cardinality $n$. Then with probability strictly larger than $\delta$,
the output does not satisfy $\frac{\varepsilon}{\delta}$-sketch privacy.
Formally there is $\mathcal{M}_{2}\subseteq\mathcal{M}$ such that
$\mathbb{P}_{n}\left[M_{E}\in\mathcal{M}_{2} \mid t\in E\right]>\delta$, and
such that $\varepsilon_{n,t,M}>\frac{\varepsilon}{\delta}$ for all
$M\in\mathcal{M}_{2}$. Since all values of $\varepsilon_{n,t,M}$ are positive,
we have
\begin{align*}
\lefteqn{\sum_{M}\mathbb{P}_{n}\left[M_E=M \mid t\in E\right]\cdot\varepsilon_{n,t,M}}\\
 & \geq\sum_{M\in\mathcal{M}_{2}}\mathbb{P}_{n}\left[M_E=M \mid t\in E\right]\cdot\varepsilon_{n,t,M}\\
 & >\frac{\varepsilon}{\delta}\sum_{M\in\mathcal{M}_{2}}\mathbb{P}_{n}\left[M_E=M \mid t\in E\right]\\
 & >\varepsilon.
\end{align*}
Hence this cardinality estimator does not satisfy $\varepsilon$-sketch average
privacy at cardinality $n$.

\section{Proof of Theorem~\ref{thm:hll:average}\label{thm:hll:average-proof}}

First, let us formally define a HyperLogLog cardinality estimator.

\begin{defn}
Let $h$ be a uniformly distributed hash function. A \emph{HyperLogLog
cardinality estimator of parameter $p$} is defined as follows. A
sketch consists of a list of $2^{p}$ counters $C_{0}, \ldots, C_{2^{p}-1}$,
all initialized to $0$. When adding an element $e$ to the sketch,
we compute $h(e)$, and represent it as a binary string
$x = x_1 x_2 \ldots$.
Let $b(e)=\left\langle x_{1} \ldots x_{p}\right\rangle _{2}$, i.e., the integer represented by the binary digits $x_{1} \ldots x_{p}$,
and $\rho\left(e\right)$ be the position of the leftmost $1$-bit
in $x_{p+1}x_{p+2} \ldots$. Then we update counter $C_{b\left(e\right)}$
with $C_{b\left(e\right)}\leftarrow\max\left(C_{b\left(e\right)},\rho\left(e\right)\right)$.
\end{defn}

For example, suppose that $x=10001101\ldots$ and $p=2$, then
$b\left(e\right)=\left\langle 10\right\rangle _{2}=2$, and
$\rho\left(e\right)=3$ (the position of the leftmost $1$-bit in $001101\ldots$).
So we must look at the value for the counter $C_{2}$ and, if $C_{2}<3$, set
$C_{2}$ to $3$.

To simplify our analysis, we assume in this proof that
$\left|\mathcal{U}\right|$ is very large: for all reasonable values of $n$,
picking $n$ elements uniformly at random in $\mathcal{U}$ is the same as picking
a subset of $\mathcal{U}$ of size $n$ uniformly at random. In particular, we
approximate $\mathbb{P}_{n}\left[M_E=M \mid t\notin E\right]$ by
$\mathbb{P}_{n}\left[M_E=M\right]$.

First, we compute $\varepsilon_{n,t,M}$ for each $M$, by considering the counter
values of $M$. We then use this to determine $\varepsilon_{n,t}$, and averaging
them, deduce the desired result.

\begin{trivlist}
\item \emph{Step 1: Computing $\varepsilon_{n,t,M}$}

Let $t\in\mathcal{U}$, and $M\in\mathcal{M}$ such that $\mathbb{P}_{n}\left[M_E=M \mid t\in E\right]>0$.
Decompose the binary string $h\left(t\right)=x$ into two parts to get $b\left(t\right)$ and $\rho\left(t\right)$.
Denote by $C_{0}, \ldots, C_{2^{p}-1}$ the counters of $M$.
Note that  $M = M_E$ is characterized by two conditions:
\begin{itemize}
\item 
  $\text{REACH}_E(i)$ for all $0 \leq i < 2^p$, where $\text{REACH}_E(i)$ abbreviates
  $C_i > 0 \longrightarrow \exists e \mathbin{\in} E.\ b(e) = i \wedge \rho(e) = C_i$. 
  For each non-empty bucket, an element with this number of leading zeroes was added in this bucket.
  
\item
  $\forall e \in E.\ \rho(e) \leq C_{b(e)}$ (notation $\text{MAX}_{E}$).
  No element has more leading zeroes than the counter for its bucket.
\end{itemize}

Now, we compute the value of $\varepsilon_{n,t,M}$, depending on
the value of $C_{b\left(t\right)}$. Without loss of generality, we assume that $b(t)=0$.

\emph{Case 1}: Suppose $C_{0}=\rho(t)$.

We compute the probability of observing $M$ given $t\in E$.
$\text{REACH}_{E}\left(0\right)$ is already satisfied by $t$ as $C_{0}=\rho\left(t\right)$.
So for $M_{E}$ to be equal to $M$, $E$'s other elements only
have to satisfy $\text{REACH}\left(i\right)$ for $i\geq1$, and $\text{MAX}_{E}$
for all $i$:
\begin{equation*}
  \mathbb{P}_{n}[M_E=M \mid t\in E]
  = \mathbb{P}_{n-1}[\text{MAX}_{E} \wedge \forall i \mathbin{\ge}
  1.\,\text{REACH}_{E}(i)] \,.
\end{equation*}

Next, we compute the probability of observing $M$ given $t\notin E$.
This time, all elements of $E$ are chosen randomly, so
\begin{equation*}
  \begin{split}
    \mathbb{P}_{n}\left[M_E=M \mid t\notin E\right]
    &
    {}
    \simeq \mathbb{P}_{n}\left[M_E=M\right]
    \\
    &
    {} = \mathbb{P}_{n}\left[\text{MAX}_{E} \wedge \forall i.\,\text{REACH}_{E}(i)\right].
  \end{split}
\end{equation*}
We can decompose this condition: there is a witness $e\in E$ for $\text{REACH}_{E}(0)$ (i.e., $b(e)=0$ and $\rho(e)=C_{0}=\rho(t)$) and all others elements
satisfy the same condition as in the case $t \in E$, namely $\text{MAX}_{E} \wedge \forall i \mathbin{\ge} 1.\,\text{REACH}_{E}(i)$, as this is equivalent to $\text{MAX}_{E \mathbin{\backslash} \left\{e\right\}} \wedge \allowbreak \forall i \mathbin{\ge} 1.\,\text{REACH}_{E \mathbin{\backslash}\left\{e\right\}}(i)$ by the choice of $e$.
If $e$ is chosen uniformly in $\mathcal{U}$, then
\begin{itemize}
\item The probability of $b(e)=0$ is $2^{-p}$, since $h$ is a uniformly
distributed hash function.
\item The probability of $\rho(e)=\rho(t)$ is $2^{-\rho(t)}$:
since $h$ is a distributed hash function, if we note $h(t)=x_{1}\ldots x_{p}.x_{p+1}\ldots$,
then $x_{p+1}=1$ with probability $1/2$, $x_{p+1}x_{p+2}=01$ with
probability $1/4$, etc.
\end{itemize}
Thus, the probability that an element $e$ witnesses $\text{REACH}_{E}(0)$
is $2^{-p-\rho(t)}$ as $x_1 \ldots x_{p}$ are independent of $x_{p+1}x_{p+2}\ldots$.
Since the set $E$ has size $n$, there are $n$ possible chances that such an element is chosen:
the probability that at least one element witnesses $\text{REACH}_{E}(0)$
is $1-{\left(1-2^{-p-\rho\left(t\right)}\right)}^{n}$.
We can thus approximate $\mathbb{P}_{n}\left[M_E=M \mid t\notin E\right]$ by
\begin{equation*}
  \left(1-\!{\left(1-2^{-p-\rho\left(t\right)}\right)}^{\!n}\right)
  \cdot \mathbb{P}_{n-1}\left[\text{MAX}_{E} \wedge \forall i \mathbin{\ge} 1.\,\text{REACH}_{E}(i)\right]
\end{equation*}
and thus
\[
\frac{\mathbb{P}_{n}\left[M_E=M \mid t\in E\right]}{\mathbb{P}_{n}\left[M_E=M \mid t\notin E\right]}
\simeq
{\left(1-{\left(1-2^{-p-\rho\left(t\right)}\right)}^{n}\right)}^{-1} \,.
\]

\emph{Case 2}: Suppose $C_{0}>\rho\left(t\right)$.

We can compute the probabilities $\mathbb{P}_{n}\left[M_E=M \mid t\in E\right]$ and $\mathbb{P}_{n}\left[M_E=M \mid t\notin E\right]$
in a similar fashion.
\begin{align*}
  \mathbb{P}_{n}\left[M_E=M \mid t\in E\right]
  & {} =
  \mathbb{P}_{n-1}\left[\text{MAX}_{E} \wedge \forall i.\;\text{REACH}_{E}(i)\right]
  \\[1ex]
  \begin{split}
    \mathbb{P}_{n}\left[M_E=M \mid t\notin E\right]
    & 
    {} \simeq \mathbb{P}_{n}\left[M_E=M\right]
    \\
    &
    {} = \mathbb{P}_{n}\left[\text{MAX}_{E} \wedge \forall
      i.\;\text{REACH}_{E}(i)\right] \,.
  \end{split}
\end{align*}
Since (by hypothesis) $\mathbb{P}_{n}\left[M_E=M \mid t\in E\right]>0$,
there exist $n-1$ distinct elements of $\mathcal{U}$ which, together,
satisfy this condition $\text{MAX}_{E} \wedge \forall i.\;\text{REACH}_{E}(i)$. 
This allows us to bound $\mathbb{P}_{n}\left[M_E=M\right]$ from below.
Suppose one element $e$ of $E$ satisfies $b(e)=0$ and
$\rho(e)=\rho(t)$, and the $n-1$ other elements
in $E\mathbin{\backslash}\left\{ e\right\} $ satisfy the conditions $\text{MAX}_{E\mathbin{\backslash}\left\{ e\right\}}$
and $\text{REACH}_{E\mathbin{\backslash}\left\{ e\right\} }(i)$ for all $i$.
Then $E$ satisfies $\text{MAX}_{E}$ and $\text{REACH}_{E}(i)$ for all $i$.
The lower bound follows as before:
\newcommand{\gesim}{\underset{\mbox{$\sim$}}{>}}%
\newcommand{\lesim}{\underset{\mbox{$\sim$}}{<}}%
\begin{equation*}
  \begin{split}
    &\mathbb{P}_{n}\left[M_E=M \mid t\notin E\right]
    \gesim {}
    \\
    &
    \ \
    \left(1-{\left(1-2^{-p-\rho\left(t\right)}\right)}^{n}\right)
    \cdot \mathbb{P}_{n-1}\left[\text{MAX}_{E} \wedge \forall i.\,\text{REACH}_{E}(i)\right]
  \end{split}
\end{equation*}
and thus
\[
\frac{\mathbb{P}_{n}\left[M_E=M \mid t\in E\right]}{\mathbb{P}_{n}\left[M_E=M \mid t\notin E\right]}\lesim{\left(1-{\left(1-2^{-p-\rho\left(t\right)}\right)}^{n}\right)}^{-1}.
\]
We can use the results of both cases and immediately conclude that
\[
\varepsilon_{n,t,M}\lesim-\log\left(1-{\left(1-2^{-p-\rho\left(t\right)}\right)}^{n}\right)
\]
and that the equality holds if $M$ satisfies
$C_{b(t)}=\rho(t)$.

\item \emph{Step 2: Determining $\varepsilon_{n}$}

The previous reasoning shows that the worst case happens when the counter
corresponding to $t$'s bucket contains the number of leading zeroes of $t$,
i.e., when $\rho(t)=C_{b(t)}$:
\[
\varepsilon_{n,t}=\max_{M}\left(\varepsilon_{n,t,M}\right)\simeq-\log\left(1-{\left(1-2^{-p-\rho\left(t\right)}\right)}^{n}\right)
\,.
\]

We can then average this value over all $t$. Since the hash function
is uniformly distributed, $1/2$ of the values of $t$ satisfy $\rho\left(t\right)=1$,
$1/4$ satisfy $\rho\left(t\right)=1/4$, etc. Thus
\begin{equation*}
  \varepsilon_{n} \simeq -\sum_{k\geq1}2^{-k}\log\left(1-{\left(1-2^{-p-k}\right)}^{n}\right).
\end{equation*}
\end{trivlist}

\end{document}